\documentclass{article}

\PassOptionsToPackage{numbers, compress}{natbib}
\usepackage[preprint]{neurips_2026}

\usepackage[utf8]{inputenc} % allow utf-8 input
\usepackage[T1]{fontenc}    % use 8-bit T1 fonts
\usepackage{hyperref}       % hyperlinks
\usepackage{url}            % simple URL typesetting
\usepackage{booktabs}       % professional-quality tables
\usepackage{amsfonts}       % blackboard math symbols
\usepackage{nicefrac}       % compact symbols for 1/2, etc.
\usepackage{microtype}      % microtypography
\usepackage{xcolor}         % colors

\usepackage{amsmath}
\usepackage{amssymb}
\usepackage{amsthm}
\usepackage{mathtools}
\usepackage{thmtools}
\usepackage{algorithm}
\usepackage{enumitem}
\usepackage{algorithmic}



\allowdisplaybreaks[1]

\usepackage{tocloft}

\addtocontents{toc}{\protect\setcounter{tocdepth}{0}}
\declaretheorem[name=Lemma]{lemma}

\newtheorem{assumption}{Assumption}
\newtheorem{theorem}{Theorem}

\usepackage{graphicx}
\usepackage{subcaption}
\usepackage{tikz}
\newcommand{\nw}[1]{{\color{blue}#1}}
\DeclareRobustCommand{\circled}[1]{\tikz[baseline=(char.base)]{\node[shape=circle, draw, inner sep=1.2pt] (char) {#1};}}

\title{Independent Learning of Nash Equilibria in\\Partially Observable Markov Potential Games\\with Decoupled Dynamics}

\author{%
Philip Jordan\\
SYCAMORE, EPFL\\
\texttt{philip.jordan@epfl.ch} \\
\And
Maryam Kamgarpour\\
SYCAMORE, EPFL\\
\texttt{maryam.kamgarpour@epfl.ch}
}

\begin{document}

\maketitle

\begin{abstract}
We study Nash equilibrium learning in partially observable Markov games (POMGs), a multi-agent reinforcement learning framework in which agents cannot fully observe the underlying state. Prior work in this setting relies on centralization or information sharing, and suffers from sample and computational complexity that scales exponentially in the number of players. We focus on a subclass of POMGs with independent state transitions, where agents remain coupled through their rewards, and assume that the underlying fully observed Markov game is a Markov potential game. For this class, we present an independent learning algorithm in which players, observing only their own actions and observations and without communication, jointly converge to an approximate Nash equilibrium. Due to partial observability, optimal policies may in general depend on the full action-observation history. Under a filter stability assumption, we show that policies based on finite history windows provide sufficient approximation guarantees. This enables us to approximate the POMG by a surrogate Markov game that is near-potential, leading to quasi-polynomial sample and computational complexity for independent Nash equilibrium learning in the underlying POMG.
\end{abstract}

\section{Introduction}
\label{sec:intro}

Markov games have become the standard framework for modeling multi-agent reinforcement learning. In this framework, agents are typically assumed to have full access to the underlying state of the environment. However, in many practical settings, such as autonomous driving~\cite{lauri2022partially,bai2015intention}, multi-robot control under sensor noise~\cite{lauri2022partially,xiao2025asynchronous}, or strategic interactions in games of imperfect information~\cite{yao2020solving}, this assumption is violated due to noisy or partial state information. Partially observable Markov games (POMGs) extend the Markov game model by allowing each agent to receive only limited or noisy observations of the true state. Developing principled multi-agent learning methods for such partially observable settings remains a central challenge.

Despite the broad applicability of POMGs, our understanding of their tractability, as well as the design of efficient algorithms with provable guarantees, remains limited. In contrast, in the single-agent setting of partially observable Markov decision processes (POMDPs), it is known that optimal policies may in general depend on the entire action-observation history~\cite{puterman1994markov}, leading to computational intractability of the planning problem even for a single agent~\cite{papadimitriou1987complexity}, as well as exponential statistical complexity for learning near-optimal policies~\cite{krishnamurthy2016pac}. However, these hardness results are worst-case in nature, and recent advances paint a clearer picture of rich POMDP subclasses that admit (quasi-) efficient learning and planning~\cite{golowich2023planning,liu2022partially}. These developments motivate investigating whether similar progress can be achieved in the multi-agent setting.

\paragraph{Tractability under partial observability.} Since the main challenge in POMDPs arises from the dependence of optimal policies on the full action-observation history, a large body of work studies structural conditions under which finite-window approximations become effective. One line of work achieves this by imposing richness conditions on the observations. In particular, the $\alpha$-weakly revealing condition of~\cite{liu2022partially} yields polynomial sample complexity, disregarding computational aspects.
The related notion of $\gamma$-observability enables planning and learning algorithms with \emph{quasi-polynomial} computational and sample complexity via intricate policy cover constructions~\cite{golowich2022learning,golowich2023planning}. Furthermore, \cite{golowich2022learning} shows that under this condition, and standard complexity-theoretic assumptions, polynomial-time algorithms are ruled out.

Another family of positive algorithmic results is based on a \emph{filter stability} assumption, under which the posterior distribution over states given the action-observation history converges exponentially fast, regardless of the initial prior. This enables practical finite-window approximations and has led to provable learning algorithms closer to those used in practice, including Q-learning~\cite{kara2023convergence}, gradient methods~\cite{cayci2024finite}, and temporal difference learning~\cite{anjarlekarscalable}, that achieve quasi-polynomial computational and sample complexity~\cite{cayci2024finite,anjarlekarscalable,jordan2026model}.

These positive results in the single-agent setting raise the question of whether similar assumptions can enable provable methods for multi-agent learning under partial observability. While a large body of practical work proposes algorithms for POMGs~\cite{emery2004approximate,omidshafiei2017deep,lerer2020improving,lu2025divergence}, only few provide theoretical guarantees. From a statistical perspective, \cite{liu2022sample} extends a sample-efficient POMDP approach~\cite{liu2022partially} to POMGs under an $\alpha$-weakly revealing condition, by replacing the policy optimization oracle used in~\cite{liu2022partially} with an oracle for equilibrium computation. This leaves open the possibility of developing methods that are both statistically and computationally efficient.

Achieving computationally efficient guarantees in the multi-agent setting is challenging even under full observability. In particular, Nash equilibrium computation is intractable in general Markov games, and already PPAD-hard for normal-form games~\cite{daskalakis2009complexity}. As a result, much of the literature focuses on structured subclasses, among which Markov potential games are particularly well studied~\cite{macua2018learning,song2022can}. In this setting, several independent learning algorithms have been shown to converge provably to Nash equilibria without requiring centralized coordination and scaling to large numbers of players~\cite{leonardos2022global,ding2022independent,maheshwari2025independent}.

Partially observable games with a common reward structure, captured by decentralized POMDPs (Dec-POMDPs)~\cite{bernstein2002complexity}, form an important subclass of partially observable Markov potential games. A common approach in this setting is to assume that agents share information, effectively reducing the problem to a centralized decision process~\cite{dibangoye2016optimally,dibangoye2018learning,mao2023decentralized}. More recently, \cite{liu2023partially} establishes quasi-polynomial-time Nash equilibrium learning under $\gamma$-observability in cooperative POMGs by leveraging information sharing. However, this comes at the cost of an exponential dependence on the number of players, the so-called \emph{curse of multi-agency}. Moreover, their results suggest that observability assumptions sufficient in the single-agent setting do not directly extend to decentralized multi-agent settings, where agents may form inconsistent beliefs due to differing local observations.

\paragraph{Decoupled dynamics.}
Motivated by these challenges, we aim to design a communication-free independent learning algorithm that avoids the curse of multi-agency. To this end, we impose an additional structural assumption: while agents remain coupled through the reward function, their state dynamics are decoupled. Such models arise naturally in applications including wireless networks~\cite{altman2007constrained,altman2009dynamic}, smart energy grids~\cite{etesami2018stochastic}, autonomous driving, and multi-robot control~\cite{xiao2025asynchronous}; see Appendix~\ref{app:decoupled} for detailed illustrations. Decoupled dynamics have also been studied in the Markov game literature~\cite{zhang2023markov,etesami2024learning} and their partially observable extension defines a rich subclass of POMGs. Two concrete instances of the above applications, a dynamic transmission game over a shared channel and a cooperative navigation task, serve as the environments of our numerical experiments in Appendix~\ref{app:experiments}.

\paragraph{Contributions.}
In this work, we study POMGs with decoupled dynamics and potential structure under a filter stability assumption. Our approach builds on recent advances in approximating POMDPs by finite-state MDPs based on finite observation-action windows~\cite{kara2023convergence,golowich2022learning,cayci2024finite,anjarlekarscalable,jordan2026model}. Extending this idea to the multi-agent setting, we approximate a decoupled POMG $\mathcal{P}$ by a Markov game $\mathcal{G}^m$ defined over $m$-step history windows, termed superstate Markov game. We then establish the following:

\begin{enumerate}
\item We show that approximate Nash equilibria of the Markov game $\mathcal{G}^m$ induce approximate Nash equilibria of the original POMG with explicit error bounds (Proposition~\ref{prop:approx-ne}).
\item Based on decoupledness of transition dynamics, we establish that the superstate Markov game $\mathcal{G}^m$ admits a Markov near-potential structure~\cite{guo2025markov,guo2025potential} (Proposition~\ref{prop:alpha-potential}).
\item Building on recent independent learning methods for Markov potential games~\cite{zhang2023markov}, we propose an independent learning algorithm for the superstate Markov game $\mathcal{G}^m$ based on finite-window model estimation and soft policy iteration (see Algorithm~\ref{alg:main}).
\item We prove that the proposed algorithm converges to an $\epsilon$-approximate Nash equilibrium of the underlying POMG with quasi-polynomial computational and sample complexity, under a concentrability condition on the visitation of finite history windows. In particular, the complexity does not scale exponentially with the number of players (see Theorem~\ref{thm:main}).
\item In addition to our theoretical results, we provide simulations in two decoupled potential POMG environments, a dynamic transmission game and a cooperative gridworld, in which we observe convergence of the proposed algorithm and and increasingly accurate approximation of the equilibrium in the POMG as the window length $m$ grows (see Appendix~\ref{app:experiments}).
\end{enumerate}

The remainder of the paper is organized as follows. In Section~\ref{sec:problem}, we introduce the formal problem setting. Section~\ref{sec:mg-repr} develops the structural properties of the superstate Markov game. Building on these insights, Section~\ref{sec:algorithm} presents our independent learning algorithm, and Section~\ref{sec:results} establishes its convergence guarantees. We conclude in Section~\ref{sec:conclusion}.

\section{Problem setting}
\label{sec:problem}

Let $\mathbb{N} \coloneqq \left\{ 1,2,\dots \right\}$ and $\mathbb{N}_0 \coloneqq \mathbb{N} \cup \left\{ 0 \right\}$. For $n \in \mathbb{N}$, let $[n] \coloneqq \left\{ 1,2,\dots,n \right\}$ and $[n]_0 \coloneqq [n] \cup \left\{ 0 \right\}$. For $a,b \in \mathbb{N}$ with $a \leq b$, let $[a,b] \coloneqq \left\{ a,a+1,\dots,b-1,b \right\}$.

\paragraph{POMG.} We consider $N$-player tabular finite-horizon partially observable Markov games in which transition and observation kernels are independent across players, but rewards are coupled. Formally, a POMG within this class is defined as a tuple $\mathcal{P}=(\mathcal{N},H,\mathcal{S},\mathcal{A},\mathcal{O},r,\mathbb{P},\mathbb{O},\mu)$ with players $\mathcal{N} \coloneqq [N]$, horizon $H \in \mathbb{N}$, finite state space $\mathcal{S} \coloneqq \mathcal{S}_1 \times \dots \times \mathcal{S}_N$, finite action space $\mathcal{A} \coloneqq \mathcal{A}_1 \times \dots \times \mathcal{A}_N$, and finite observation space $\mathcal{O} \coloneqq \mathcal{O}_1 \times \dots \times \mathcal{O}_N$. Player $i$'s reward function at step $h \in [H]$ is defined as $r_{i,h}:\mathcal{S} \times \mathcal{A} \to [0,1]$. State transition probabilities are given by $\mathbb{P}=\left( \mathbb{P}_{i,h} \right)_{i \in \mathcal{N},h \in [H]}$ with each player's transition $\mathbb{P}_{i,h}(\cdot \mid s_{i,h},a_{i,h}) \in \Delta(\mathcal{S}_i)$ for $s_{i,h} \in \mathcal{S}_i$, $a_i \in \mathcal{A}_i$, and observation probabilities are given by $\mathbb{O}=\left( \mathbb{O}_{i,h} \right)_{i \in \mathcal{N}, h \in [H]}$ with $\mathbb{O}_{i,h}(\cdot \mid s_{i,h}) \in \Delta(\mathcal{O}_i)$ for each $i \in \mathcal{N}$. Observe that rewards are coupled through the dependence on joint states and actions, whereas state transitions and observations factor across players, as detailed below.

At step $h \in [H]$, when in state $s_h \in \mathcal{S}$, joint observation $o_h \in \mathcal{O}$ is emitted with probability $\mathbb{O}_h(o_h \mid s_h) = \prod_{i \in \mathcal{N}} \mathbb{O}_{i,h}(o_{i,h} \mid s_{i,h})$, reflecting the decoupled observation structure, and each player $i \in \mathcal{N}$ observes its component $o_{i,h}$. Then, player $i$ chooses an action $a_{i,h} \in \mathcal{A}_i$ and receives reward $r_{i,h}(s_h,a_h)$. Next, the unobserved state transitions to $s_{h+1} \in \mathcal{S}$ with probability $\mathbb{P}_h(s_{h+1} \mid s_h,a_h) = \prod_{i \in \mathcal{N}} \mathbb{P}_{i,h}(s_{i,h+1} \mid s_{i,h},a_{i,h})$, reflecting the decoupled transition structure. The initial state is drawn from the product distribution $\mu \in \Delta(\mathcal{S})$, i.e., $\mu(s_1) = \prod_{i \in \mathcal{N}} \mu_i(s_{1,i})$.

\paragraph{Histories and policies.} For each player $i \in \mathcal{N}$ and $h \in \mathbb{N}$, let \nw{$\mathcal{H}_i^h \coloneqq (\mathcal{O}_i \times \mathcal{A}_i)^{h-1} \times \mathcal{O}_i$} be the set of $h$-step action-observation histories, and let $\mathcal{H}_i^0$ denote the empty history. Let $\mathcal{H}_i^{\leq h} \coloneqq \bigcup_{h^{\prime} \in [h]_0} \mathcal{H}_i^{h^{\prime}}$. \nw{Following the standard POMDP convention (see, e.g.,~\cite{golowich2022learning}), histories are observation-first and end with the current observation:} let $\tau_i \in \mathcal{H}_i^h$ be a history written as \nw{$\tau_i = (o_{i,1},a_{i,1},\dots,a_{i,h-1},o_{i,h})$}. For $k,l \in \mathbb{N}$ with $k \leq l$, we define the sub-history \nw{$\tau_{i,k:l} = (o_{i,k},a_{i,k},\dots,a_{i,l-1},o_{i,l})$}. In particular, $\tau_{i,1:h} = \tau_i$. We denote by $|\tau_i| = h$ the length of $\tau_i$. For each player $i \in \mathcal{N}$, we define the class of history-dependent policies as
\begin{align*}
\Pi_i^H \coloneqq \left\{ \pi_i = (\pi_{i,h})_{h \in [H]} \mid \pi_{i,h}:\nw{\mathcal{H}_i^{h}} \to \Delta(\mathcal{A}_i) \text{ for all } h \in [H] \right\}.
\end{align*}
We consider localized product policies $\pi = (\pi_1,\dots,\pi_N) \in \Pi_1^H \times \dots \times \Pi_N^H$, meaning that at step $h$, given local histories $\tau=(\tau_1,\dots,\tau_N) \in \nw{\mathcal{H}^{h}}$, joint action $a = (a_1,\dots,a_N) \in \mathcal{A}$ is chosen by $\pi \in \Pi^H$ with probability $\pi_h(a \mid \tau) = \prod_{i \in \mathcal{N}} \pi_{i,h}(a_i \mid \tau_i)$.

Moreover, for any joint policy $\pi \in \Pi^H$, we define for each $i \in \mathcal{N}$ the value function
\begin{align}
\label{eqn:value-def}
V_i(\pi) \coloneqq \mathbb{E}_{\pi,\, s_1 \sim \mu} \left[ \sum_{h=1}^H r_{i,h}(s_h,a_h) \right],
\end{align}
where the expectation is taken over the distribution of trajectories induced by the POMG $\mathcal{P}$ under the joint policy $\pi$, including the randomness from state transitions and observation emissions.

\paragraph{Nash equilibria.}
Let $\epsilon > 0$. A joint policy $\pi \in \Pi^H$ is called an $\epsilon$-approximate Nash equilibrium if for all players $i \in \mathcal{N}$ and all $\pi^{\prime}_i \in \Pi_i^H$, it holds that $V_i(\pi) \geq V_i(\pi^{\prime}_i, \pi_{-i}) - \epsilon$. When $\epsilon=0$, we call $\pi$ a Nash equilibrium.

Our objective in this paper is to provide an algorithm for learning $\epsilon$-approximate Nash equilibria.

\paragraph{Potential structure.}
The hardness of Nash equilibrium computation in the general-sum regime~\cite{daskalakis2009complexity} motivates us to focus on a tractable subclass such as potential games.
For player $i \in \mathcal{N}$, let $\Pi_i \coloneqq \{ \pi_i = (\pi_{i,h})_{h=1}^H \mid \pi_{i,h} : \mathcal{S} \to \Delta(\mathcal{A}_i) \}$ be the set of Markov (state-feedback) policies, and let $\Pi \coloneqq \Pi_1 \times \dots \times \Pi_N$ denote the induced Markov product policies. Overloading the notation of~\eqref{eqn:value-def}, we write, for $\pi \in \Pi$,
\begin{align}
\label{eqn:value-def-2}
V_i(\pi) \coloneqq \mathbb{E}_{\pi,\, s_1 \sim \mu}\Big[\textstyle\sum_{h=1}^H r_{i,h}(s_h, a_h)\Big],
\end{align}
where the expectation is over joint state-action trajectories of the underlying fully observed Markov game with rewards $r_{i,h}$, transitions $\mathbb{P}$, and initial distribution $\mu$, when following policy the joint $\pi$.
\nw{\begin{assumption}
\label{ass:potential}
There exist functions $\phi_h : \mathcal{S} \times \mathcal{A} \to \mathbb{R}$, $h \in [H]$, such that for all $i \in \mathcal{N}$, $s_{-i} \in \mathcal{S}_{-i}$, $a_{-i} \in \mathcal{A}_{-i}$, $s_i,s_i^{\prime} \in \mathcal{S}_i$, and $a_i,a_i^{\prime} \in \mathcal{A}_i$,
\begin{align*}
&r_{i,h}(s_i^{\prime},s_{-i},a_i^{\prime},a_{-i})-r_{i,h}(s_i,s_{-i},a_i,a_{-i}) \\
&\qquad\qquad = \phi_h(s_i^{\prime},s_{-i},a_i^{\prime},a_{-i})-\phi_h(s_i,s_{-i},a_i,a_{-i}).
\end{align*}
\end{assumption}}
\nw{That is, at every step the stage game admits a potential $\phi_h$. This stagewise potential condition is common in the decoupled Markov game setting (Definition~4 of~\cite{zhang2023markov}) and is typically easier to verify than a global potential condition on value functions. Moreover, under decoupled transitions it implies the global Markov potential game condition, namely the existence of $\Phi:\Pi \to \mathbb{R}$ with $V_i(\pi)-V_i(\pi_i^{\prime},\pi_{-i}) = \Phi(\pi) - \Phi(\pi_i^{\prime},\pi_{-i})$ for all $i \in \mathcal{N}$, $\pi \in \Pi$, and $\pi_i^{\prime} \in \Pi_i$, which underlies a large body of work, see~\cite{leonardos2022global,zhang2022global,zhang2023markov} and others.} Assumption~\ref{ass:potential} is satisfied in the fully cooperative case $r_1=\dots=r_N$, as well as in practical settings such as demand-response markets~\cite{narasimha2022multi}, and other mixed cooperative/competitive scenarios for which \cite{narasimha2022multi} provides several sufficient conditions.

Before turning to the main objective of Nash equilibrium learning, the next section first establishes structural properties of the considered games which will then guide our algorithm design in Section~\ref{sec:algorithm}.

\section{Approximating POMGs via Markov near-potential games}
\label{sec:mg-repr}

In this section, we show that any decoupled POMG can be approximated by a Markov game, which we term the \emph{superstate Markov game}, obtained via a finite-window approximation based on truncated histories. This extends related approaches from the single-agent setting~\cite{kara2023convergence,golowich2022learning,cayci2024finite} to the multi-agent case. We establish that the resulting superstate Markov game approximately preserves Nash equilibria of the original POMG (Proposition~\ref{prop:approx-ne}) and admits a near-potential structure (Proposition~\ref{prop:alpha-potential}). These properties provide a key insight enabling us to lift techniques from Markov potential games to the partially observable case and will play a central role in the analysis of our algorithm in Section~\ref{sec:algorithm}.

\paragraph{Superstate Markov game.} Let $m \geq 1$. The superstate Markov game $\mathcal{G}^m$ has state space $\mathcal{H}^{\leq m}$ and action space $\mathcal{A}$. Concretely, at step $h$, the state is given by $w = (w_i)_{i \in \mathcal{N}}$, where we use the notation \nw{$w_i=(o_{i,h-m+1}, a_{i,h-m+1}, \dots, a_{i,h-1}, o_{i,h}) \in \mathcal{H}_i^{\min(h,m)}$} for the most recent $m$-step local history of player $i$.
To define the transition dynamics of the superstate Markov game, we introduce a belief over latent states induced by truncated histories. In general, a truncated history does not determine the posterior distribution of the underlying state. We therefore define, for each player~$i \in \mathcal{N}$ and step~$h \in [H]$, the belief $b^m_{i,h}(s_i \mid w_i)$: the probability that player $i$ is in state $s_i$ at step $h$ after observing the window $w_i$, starting from the initial distribution\footnote{We use $\mu_i$ as a prior on the state at the start of the window; any prior over $\mathcal{S}_i$ could be used instead.} $\mu_i$. Formally, writing $s_{i,h} = s_i$, this is obtained by marginalizing over all local state trajectories \nw{$s_{i,h-m+1},\dots,s_{i,h-1}$} consistent with $w_i$,
\nw{\begin{align*}
b^{m}_{i,h}(s_i \mid w_i) &=
\frac{1}{Z_i(w_i)}
\sum_{s_{i,h-m+1},\dots,s_{i,h-1} \in \mathcal{S}_i}
\mu_i(s_{i,h-m+1}) \, \mathbb{O}_{i,\, h-m+1}(o_{i,h-m+1} \mid s_{i,h-m+1}) \\
&\qquad \cdot \prod_{k=h-m+1}^{h-1} \mathbb{P}_{i,\, k}(s_{i,k+1} \mid s_{i,k}, a_{i,k}) \;
\mathbb{O}_{i,\, k+1}(o_{i,k+1} \mid s_{i,k+1}),
\end{align*}}
where the normalization factor $Z_i(w_i)$ is the sum over $s_i \in \mathcal{S}_i$ of the unnormalized probabilities of reaching $s_i$. Intuitively, this expression averages over all latent state sequences that could have generated the observed window $w_i$, each weighted by its likelihood under the model.
For \nw{$h \in [H-1]$}, $w,w^{\prime} \in \mathcal{H}^{\leq m}$, $a \in \mathcal{A}$, and $o \in \mathcal{O}$ with $|w^{\prime}| = n$ and $w' = (w \circ (a,o))_{n-m+2:n+1}$,
\begin{align}
\label{eqn:Pm-def}
\mathbb{P}_h^m(w' \mid w,a)
\;\coloneqq\;
\nw{\sum_{s,s^{\prime} \in \mathcal{S}}
\Big( \prod_{i \in \mathcal{N}} b^m_{i,h}(s_i \mid w_i) \Big)
\mathbb{P}_h(s^{\prime} \mid s,a) \,
\mathbb{O}_{h+1}(o \mid s^{\prime}).}
\end{align}
Otherwise, if $w^{\prime}$ cannot be obtained by concatenating $w$ with any action-observation pair, we set $\mathbb{P}_h^m(w' \mid w,a) = 0$.
Similarly, we define rewards $r_{i,h}^m(w,a) \coloneqq \sum_{s \in \mathcal{S}} r_{i,h}(s,a) \prod_{j \in \mathcal{N}} b^m_{j,h}(s_j \mid w_j)$. Moreover,
for policy $\pi \in \Pi^m$, its value in $\mathcal{G}^m$ is defined as
\begin{align*}
V^m_i(\pi) \coloneqq \mathbb{E}_{\pi,\, s_1 \sim \mu}\left[ \sum_{h=1}^H r^m_{i,h}(w_h,a_h) \right]
\end{align*}
where the expectation is taken over the randomness of the policy and the superstate transitions $\mathbb{P}^m$.

By construction, $\mathcal{G}^m$ is a Markov game with a finite state space. Moreover, due to the decoupled structure of the underlying dynamics and observations, its transition kernel factorizes across players. Importantly, $\mathcal{G}^m$ is a conceptual object used for analysis: it is defined via a finite-window truncation of histories, which does not correspond to the true evolution of observations in the POMG. In particular, trajectories are generated by the underlying POMG, where observations depend on the full history, and therefore do not follow the superstate transition kernel $\mathbb{P}^m$.

We note that for $m = H$, we obtain a Markov game with state space corresponding to the full action-observation histories. This model is equivalent to the POMG in the sense that any joint policy over histories induces the same trajectory distribution and value in both models.

\subsection{Finite-window policies and filter stability}

Having defined the superstate Markov game $\mathcal{G}^m$, we aim to establish that its Nash equilibria over finite-window policies correspond to approximate Nash equilibria of the original POMG. Such a guarantee would justify using $\mathcal{G}^m$ as a surrogate game in the analysis of learning algorithms.

In order to argue about finite-window Nash equilibria, we first introduce the respective policy class.
\paragraph{Finite-window policies.} For any $m \in [H]$, and each player $i \in \mathcal{N}$, let
\begin{align*}
\Pi_i^m \coloneqq \big\{ \pi_i = (\pi_{i,h})_{h \in [H]} \mid \pi_{i,h}:\nw{\mathcal{H}_i^{\min(h,m)}} \to \Delta(\mathcal{A}_i) \text{ for all } h \in [H] \big\}.
\end{align*}
We further define joint finite-window policies given by product policies $\pi = (\pi_1,\dots,\pi_N) \in \Pi_1^m \times \dots \times \Pi_N^m$, meaning that at step $h$, given local histories $\tau=(\tau_1,\dots,\tau_N) \in \nw{\mathcal{H}^{h}}$, $a \in \mathcal{A}$ is chosen with probability \nw{$\pi^{\prime}(a \mid \tau_{h-m+1 \,:\, h}) = \prod_{i \in \mathcal{N}} \pi^{\prime}_i(a_i \mid \tau_{i, h-m+1 \,:\, h})$}. We note that $\Pi^m$ can be seen as a subclass of $\Pi^H$ by associating to each $\pi \in \Pi^m$ its extension to $\Pi^H$ that depends only on the most recent $m$ steps of history. Thus, for $i \in \mathcal{N}$ and $\pi \in \Pi^m$, the value $V_i(\pi)$ is defined as in~(\ref{eqn:value-def}).

Without further assumptions, relevant information about the state may reside arbitrarily far in the past, and it is not clear whether finite-window approximation can yield meaningful guarantees. We therefore introduce a standard condition that ensures sufficient decay of past information.

\paragraph{Filter stability.} Filter stability formalizes the idea that the influence of initial beliefs decays exponentially over time, and has been studied in prior work on hidden Markov models~\cite{van2008hidden} and POMDPs~\cite{kara2023convergence,anjarlekarscalable}.

For $h \in [H]$ and joint history $\tau = (\tau_1,\dots,\tau_N) \in \mathcal{H}^h$, let $b_h(\cdot \mid \tau) \in \Delta(\mathcal{S})$ denote the posterior distribution over joint states $\mathcal{S}$. By decoupledness of the dynamics and observations, it factorizes into the playerwise beliefs, $b_h(s \mid \tau) = \prod_{i \in \mathcal{N}} b_{i,h}(s_i \mid \tau_i)$. For $\nu,\nu^{\prime} \in \Delta(\mathcal{S})$, define the total variation distance $\|\nu-\nu^{\prime}\|_{TV} \coloneqq \frac{1}{2} \sum_{s \in\mathcal{S}} |\nu(s)-\nu^{\prime}(s)|$.
\begin{assumption}
\label{ass:filter-stab}
There exists $\rho > 0$ such that for all $h \in [H]$, $\tau,\tau^{\prime} \in \mathcal{H}^h$, $a \in \mathcal{A}$, and $o \in \mathcal{O}$,
\begin{align*}
\left\lVert b_h(\cdot \mid \tau \circ (a,o)) - b_h(\cdot \mid \tau^{\prime} \circ (a,o)) \right\rVert_{TV}
\leq (1-\rho) \left\lVert b_h(\cdot \mid \tau)-b_h(\cdot \mid \tau^{\prime}) \right\rVert_{TV}
\end{align*}
where $\circ$ denotes concatenation.
\end{assumption}
Filter stability requires (a) state transitions to be sufficiently mixing, and (b) observations to be sufficiently noisy. We refer to~\cite{kara2022near} for a sufficient condition that formalizes these requirements in terms of Dobrushin coefficients of the transition and observation kernel. \nw{Although stated for the joint belief, Assumption~\ref{ass:filter-stab} hides no exponential dependence on the number of players; see the remark following Lemma~\ref{lem:beliefs-tv} in Appendix~\ref{app:proof-mg-repr}.}

\subsection{Finite-window Nash equilibrium approximation and near-potential structure}
Based on the filter stability assumption above, we aim to show that Nash equilibria of the superstate Markov game over the finite-window policy class $\Pi^m$ correspond to approximate Nash equilibria of the POMG over the full-history class $\Pi^H$. This approximation result will allow us to analyze equilibria in the tractable superstate setting while retaining guarantees for the original POMG. In addition, we establish that the superstate Markov game admits a near-potential structure, which will enable us to prove convergence of our learning algorithm.

For the single-agent case, it is known that under filter stability, the optimal value of a superstate MDP approximates the optimal value of the corresponding POMDP, see Theorem~2 of~\cite{anjarlekarscalable}. We extend this guarantee to the multi-agent setting by establishing value approximation results for joint policies, as well as for best-response values. Notably, this result, as well as Proposition~\ref{prop:approx-ne} below, hold for general POMGs and do not rely on the decoupled dynamics assumption.
\begin{restatable}{lemma}{valueapproxlemma}
\label{lem:value-approx}
Let $\pi \in \Pi^m$ and $\pi^{\prime} \in \Pi^H$ such that for all $h \geq m$ and all $w \in \mathcal{H}^m$, $\tau \in \mathcal{H}^h$ with $w=\tau_{h-m+1:h}$, it holds that $\pi_h(\cdot \mid w)=\pi^{\prime}_h(\cdot \mid \tau)$. Let
\begin{align}
\label{eqn:eps-def}
\epsilon^m_\rho \coloneqq 4H^2(1-\rho)^{\nw{m-1}}.
\end{align}
Under Assumption~\ref{ass:filter-stab}, for any $i \in \mathcal{N}$, we have
\begin{align*}
\left\vert V^m_i(\pi)-V_i(\pi^{\prime}) \right\vert \leq \epsilon^m_\rho.
\end{align*}
Moreover, for $\pi$ and $\pi^{\prime}$ as above, it holds that
\begin{align*}
\left\vert \max_{\hat{\pi}_i \in \Pi^m_i}V^m_i(\hat{\pi}_i,\pi_{-i}) - \max_{\hat{\pi}_i^{\prime} \in \Pi^H_i}V_i(\hat{\pi}_i^{\prime},\pi_{-i}) \right\vert \leq \epsilon^m_\rho.
\end{align*}
\end{restatable}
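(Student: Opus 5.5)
Both claims will follow from one comparison of the two models step by step, using the performance-difference (simulation) lemma. The plan is to couple the true POMG under $\pi'$ with $\mathcal{G}^m$ under $\pi$ through the joint window process. Under $\pi'$, actions at step $h$ depend only on the window $w_h=\tau_{h-m+1:h}$ once $h\ge m$. For $h<m$ the window is the full history, so the hypothesis is natural there: $\pi_h$ is defined on full histories and we take it to agree with $\pi'_h$.

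\textbf{Expressing the POMG through windows.} The value of the POMG can be written over full histories:
\[
V_i(\pi')=\mathbb{E}\big[\textstyle\sum_h \sum_s r_{i,h}(s,a_h)\, b_h(s\mid\tau_h)\big].
\]
The next observation has conditional law $\sum_{s,s'} b_h(s\mid\tau_h)\,\mathbb{P}_h(s'\mid s,a)\,\mathbb{O}_{h+1}(o\mid s')$. The superstate game has the same structure, with $b_h(\cdot\mid\tau_h)$ replaced by the truncated belief $b^m_h(\cdot\mid w_h)$. The first key step is the belief bound
\[
\|b_h(\cdot\mid\tau)-b^m_h(\cdot\mid \tau_{h-m+1:h})\|_{TV}\le (1-\rho)^{m-1}.
\]
To prove it, view $b^m_h$ as the filter run over the last window starting from prior $\mu$ at time $h-m+1$. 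View $b_h$ as the same filter started from the true posterior at time $h-m+1$, i.e.\ after conditioning on the earlier history. Iterating Assumption~\ref{ass:filter-stab} over the $m-1$ action--observation updates contracts the initial TV distance (at most $1$) by $(1-\rho)^{m-1}$.

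The main obstacle is that Assumption~\ref{ass:filter-stab} compares two \emph{posterior} beliefs $b_h(\cdot\mid\tau)$, not filters started from arbitrary priors. I would handle this by noting that $\mu$ at time $h-m+1$, after the observation $o_{h-m+1}$, is itself realized as a posterior of some history, or by reading the assumption as a contraction of the Bayes update map. For $h<m$ the two beliefs coincide, so the error is zero.

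\textbf{Telescoping.} With the belief bound in hand, I run a hybrid argument over $h$. Define a hybrid process that follows the true POMG dynamics up to step $h$ and the superstate dynamics afterwards. Consecutive hybrids differ only at step $h$, where both the reward term and the next-window law are computed with $b_h$ versus $b^m_h$. The reward difference is at most $2\|\cdot\|_{TV}\le 2(1-\rho)^{m-1}$. The transition difference is at most $2(1-\rho)^{m-1}$ in TV, and it is multiplied by the remaining value range $H$. Summing over $H$ steps gives
\[
|V^m_i(\pi)-V_i(\pi')|\le H\big(2+2H\big)(1-\rho)^{m-1}\le 4H^2(1-\rho)^{m-1}=\epsilon^m_\rho .
\]
The policy agreement hypothesis is what makes the action distributions identical in both hybrids at every step.

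\textbf{Best responses.} For the second claim, fix $\pi_{-i}$. First, any $\hat\pi_i\in\Pi_i^m$ extends to $\hat\pi_i'\in\Pi_i^H$ satisfying the hypothesis, so the first part gives $\max_{\Pi^m_i}V^m_i\le \max_{\Pi^H_i}V_i+\epsilon^m_\rho$.

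For the converse I need that a full-history best response in the POMG is nearly matched by a window policy. Against the fixed finite-window $\pi_{-i}$, player $i$ faces a single-agent POMDP. I would run the same hybrid argument on Bellman optimality recursions instead of policy evaluation. Concretely, compare $Q^*_h(\tau,a)$ of that POMDP with the optimal $Q^{m,*}_h(w,a)$ of the superstate MDP, by backward induction, with per-step error $2(1-\rho)^{m-1}(1+H)$. The maximum over actions is 1-Lipschitz, and both are Markov in their respective states. This yields $|\max V^m_i-\max V_i|\le\epsilon^m_\rho$, mirroring Theorem~2 of~\cite{anjarlekarscalable}. The delicate point is that the POMDP's state includes the other players' histories; since the joint belief factorization is not assumed here, I would carry the joint window $w$ as the state and let $\pi_{-i}$ act through $w_{-i}$.
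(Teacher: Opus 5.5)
Your first claim is correct and is the paper's proof in unrolled form. The paper runs a backward induction on $|V^m_{i,h}(\pi;w)-V_{i,h}(\pi';\tau)|$ with exactly your per-step cost $2(1-\rho)^{m-1}+2H(1-\rho)^{m-1}\le 4H(1-\rho)^{m-1}$ from Lemma~\ref{lem:aux-p-r-err}. Its belief bound (Lemma~\ref{lem:beliefs-tv}) is obtained, like yours, by iterating Assumption~\ref{ass:filter-stab} $m-1$ times via the single-agent lemma, so your ``contraction of the Bayes map'' reading is what is implicitly used there too. Requiring $\pi_h=\pi'_h$ also for $h<m$ is a correct repair of the hypothesis, and the easy half of the second claim via the first claim is fine.

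The gap is the converse, $\max_{\Pi^H_i}V_i(\cdot,\pi_{-i})\le\max_{\Pi^m_i}V^m_i(\cdot,\pi_{-i})+\epsilon^m_\rho$, which is the half Proposition~\ref{prop:approx-ne} actually uses. You flag the other players' histories as the delicate point, but your resolution is what breaks the argument. Once you take the joint window $w$ (and joint history $\tau$) as the state and maximize over $a_i$ at each joint state, the maximizing $a_i$ may depend on $w_{-i}$ (resp.\ $\tau_{-i}$) and on every future $o_{-i}$. The recursion therefore computes the best response of a player who sees the others' information; call this value $\max^{\mathrm{cen}}$. With coupled rewards it is in general strictly larger than the best response over the local class $\Pi^m_i$. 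In the transmission game of Appendix~\ref{app:exp-transmission}, for instance, the other user's window is informative about whether it is about to transmit. Your comparison thus gives $\max_{\Pi^H_i}V_i\le\max^{\mathrm{cen}}V_i\le\max^{\mathrm{cen}}V^m_i+\epsilon^m_\rho$, but $\max^{\mathrm{cen}}V^m_i$ bounds $\max_{\Pi^m_i}V^m_i$ from above, not from below. The paper writes its recursions for $V^{\dagger}_{i,h}(\pi'_{-i};\tau)$ and $V^{m,\dagger}_{i,h}(\pi_{-i};w)$ at joint histories in exactly this way. So you are following its proof here, but the step does not establish the inequality for the local policy classes in the statement.

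What closes the gap is the decoupledness you chose not to use. With product initial law, kernels and policies, player $i$'s window process in $\mathcal{G}^m$ is independent of $w_{-i}$. Hence $\max_{\Pi^m_i}V^m_i(\cdot,\pi_{-i})$ is the optimal value of the MDP $\mathcal{M}^{\pi_{-i}}$ on $w_i$, with reward $\sum_{s_i}b^m_{i,h}(s_i\mid w_i)\,\tilde r_{i,h}(s_i,a_i)$, where $\tilde r_{i,h}$ averages $r_{i,h}$ over the others' $(s_{-i},a_{-i})$ under $\mathcal{G}^m$. Likewise, the POMG best response solves an MDP on $\tau_i$, but with the others' marginals taken in $\mathcal{P}$. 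Comparing these two directly picks up that marginal mismatch at every step (cf.\ Lemma~\ref{lem:r-error}).

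A cleaner route passes through an intermediate game in which player $i$ keeps its full local history and all other players are truncated to windows. Your fixed-policy hybrid, applied to a POMG best response $\hat\pi'_i$ together with $\pi_{-i}$, moves from $\mathcal{P}$ to this game at cost $\epsilon^m_\rho$. In the intermediate game, player $i$ faces an MDP on $\tau_i$ with reward $\sum_{s_i}b_{i,h}(s_i\mid\tau_i)\,\tilde r_{i,h}(s_i,a_i)$, with the same $\tilde r_{i,h}$ as in $\mathcal{M}^{\pi_{-i}}$. A single-agent Bellman optimality induction on $\tau_i$ versus its window $w_i$ then costs another $\epsilon^m_\rho$. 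This proves the converse with $2\epsilon^m_\rho$ in place of $\epsilon^m_\rho$, which only changes constants downstream.
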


We are then able to connect equilibria of the superstate Markov game to those of the original POMG.
\begin{restatable}{proposition}{approxneprop}
\label{prop:approx-ne}
Let $\epsilon > 0$, and let $\pi \in \Pi^m$ be an $\epsilon$-approximate Nash equilibrium of the superstate Markov game $\mathcal{G}^m$. Under Assumption~\ref{ass:filter-stab}, $\pi$ is an $(\epsilon + 2\epsilon^m_\rho)$-approximate Nash equilibrium of $\mathcal{P}$.
\end{restatable}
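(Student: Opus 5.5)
The plan is to invoke Lemma~\ref{lem:value-approx} twice: once for the value of the joint policy and once for the best-response value, and then chain the resulting inequalities. Throughout, we view $\pi \in \Pi^m$ also as an element of $\Pi^H$ through its canonical extension, which depends only on the most recent $m$ steps of history. This extension plays the role of $\pi^{\prime}$ in Lemma~\ref{lem:value-approx}. By construction it satisfies the hypothesis $\pi_h(\cdot \mid w) = \pi^{\prime}_h(\cdot \mid \tau)$ whenever $w = \tau_{h-m+1:h}$. Consequently $V_i(\pi)$ as defined in~\eqref{eqn:value-def} coincides with $V_i(\pi^{\prime})$.

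Fix a player $i \in \mathcal{N}$ and an arbitrary deviation $\hat\pi_i^{\prime} \in \Pi_i^H$. We then chain three inequalities:
\begin{align*}
V_i(\hat\pi_i^{\prime},\pi_{-i})
&\leq \max_{\tilde\pi_i \in \Pi_i^H} V_i(\tilde\pi_i,\pi_{-i})
\leq \max_{\tilde\pi_i \in \Pi_i^m} V^m_i(\tilde\pi_i,\pi_{-i}) + \epsilon^m_\rho \\
&\leq V^m_i(\pi) + \epsilon + \epsilon^m_\rho
\leq V_i(\pi) + \epsilon + 2\epsilon^m_\rho .
\end{align*}
The second inequality is the best-response part of Lemma~\ref{lem:value-approx}. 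The third is the definition of $\pi$ being an $\epsilon$-approximate Nash equilibrium in $\mathcal{G}^m$, where deviations range over $\Pi_i^m$. The fourth is the first part of Lemma~\ref{lem:value-approx} applied to $\pi$ and its extension. Since $i$ and $\hat\pi_i^{\prime}$ are arbitrary, this yields exactly the claimed $(\epsilon+2\epsilon^m_\rho)$-Nash property in $\mathcal{P}$.

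The only step needing care is checking that the best-response statement of Lemma~\ref{lem:value-approx} applies with opponents $\pi_{-i}$ held fixed as finite-window policies. In the POMG, the opponents' policy $\pi_{-i}$ must be read as its history-based extension. The lemma as stated uses $\pi_{-i}$ in both maxima with the pair $(\pi,\pi^{\prime})$ matching, so this identification is exactly its setting.

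A second point is that the maxima over $\Pi_i^H$ and $\Pi_i^m$ are attained, since these are finite-horizon problems over compact policy sets. If one prefers to avoid this, one can use suprema instead, and the argument goes through unchanged. I do not expect a genuine obstacle: the proposition is a direct corollary of Lemma~\ref{lem:value-approx}.
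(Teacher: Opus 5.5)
Your proposal is correct and matches the paper's proof essentially line for line. The paper splits $V^{\dagger}_i(\pi_{-i}) - V_i(\pi)$ into the best-response approximation error, the $\epsilon$-Nash gap in $\mathcal{G}^m$ (deviations over $\Pi_i^m$), and the value approximation error, bounding the first and last terms by the two parts of Lemma~\ref{lem:value-approx}; your chain of inequalities is the same decomposition written out for an arbitrary deviation $\hat\pi_i^{\prime} \in \Pi_i^H$.
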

The proofs of Lemma~\ref{lem:value-approx} and Proposition~\ref{prop:approx-ne} are provided in Appendix~\ref{app:proof-mg-repr}.

The following result shows that $\mathcal{G}^m$ inherits a \emph{near-potential} structure~\cite{guo2025markov}, suggesting that efficient learning in the superstate Markov game is possible.
\begin{restatable}{proposition}{alphapotentialprop}
\label{prop:alpha-potential}
Under Assumptions~\ref{ass:potential} and~\ref{ass:filter-stab}, there exists a function $\Psi: \Pi^m \to \mathbb{R}$ such that for all $\pi \in \Pi^m$, $i \in \mathcal{N}$, and $\pi_i^{\prime} \in \Pi^m_i$, and $\epsilon_\rho^m$ as in~(\ref{eqn:eps-def}), we have
\begin{align*}
\left\vert \left( V^m_i({\pi})-V^m_i(\pi_i^{\prime},\pi_{-i}) \right) - \left( \Psi(\pi) - \Psi(\pi_i^{\prime},\pi_{-i}) \right) \right\vert \leq 2 \epsilon^m_\rho.
\end{align*}
\end{restatable}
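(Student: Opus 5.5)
We plan to define $\Psi$ as the value of the stagewise potential evaluated along the true POMG dynamics. For $\pi \in \Pi^m$, viewed as an element of $\Pi^H$ via its finite-window extension, set
\begin{align*}
\Psi(\pi) \coloneqq \mathbb{E}_{\pi,\, s_1 \sim \mu}\Big[\textstyle\sum_{h=1}^H \phi_h(s_h,a_h)\Big],
\end{align*}
where the expectation is over trajectories of $\mathcal{P}$. The proof then splits into two parts: an exact potential identity in the POMG, and a value-approximation step transferring it to $\mathcal{G}^m$.

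\textbf{Exact potential identity in $\mathcal{P}$.} We first show that $V_i(\pi) - V_i(\pi_i',\pi_{-i}) = \Psi(\pi) - \Psi(\pi_i',\pi_{-i})$ for all $\pi\in\Pi^m$ and $\pi_i'\in\Pi_i^m$. By Assumption~\ref{ass:potential}, $r_{i,h} - \phi_h$ does not depend on $(s_i,a_i)$. Hence $r_{i,h}(s,a)-\phi_h(s,a) = g_{i,h}(s_{-i},a_{-i})$ for some function $g_{i,h}$. (Fix a reference $(\bar s_i,\bar a_i)$ and set $g_{i,h}$ to the difference evaluated there.) Consequently,
\begin{align*}
V_i(\pi)-\Psi(\pi) = \mathbb{E}_{\pi}\Big[\textstyle\sum_h g_{i,h}(s_{-i,h},a_{-i,h})\Big].
\end{align*}
The key point is that under decoupled transitions and observations and product policies, the joint law of $(s_{-i,h},o_{-i,h},a_{-i,h})_h$ is generated only by the kernels $\mathbb{P}_{j,h},\mathbb{O}_{j,h}$ and the policies $\pi_j$ for $j\neq i$. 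Since player $j$'s policy depends only on its own local history, this law is independent of $\pi_i$. We verify this by induction on $h$, factorizing the trajectory density into a player-$i$ part times a $(-i)$ part. It follows that $V_i-\Psi$ is invariant under replacing $\pi_i$ by $\pi_i'$, which gives the identity.

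\textbf{Transfer to $\mathcal{G}^m$.} Lemma~\ref{lem:value-approx}, applied with $\pi'$ equal to the $\Pi^H$-extension of $\pi$ (this extension satisfies the lemma's hypothesis), gives $|V_i^m(\pi)-V_i(\pi)|\le\epsilon^m_\rho$. The same bound holds for $(\pi_i',\pi_{-i})\in\Pi^m$. Applying the triangle inequality to
\begin{align*}
\big(V^m_i(\pi)-V^m_i(\pi_i',\pi_{-i})\big)-\big(\Psi(\pi)-\Psi(\pi_i',\pi_{-i})\big)
\end{align*}
and substituting the exact identity bounds this quantity by $|V^m_i(\pi)-V_i(\pi)|+|V^m_i(\pi_i',\pi_{-i})-V_i(\pi_i',\pi_{-i})|\le 2\epsilon^m_\rho$.

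\textbf{Main obstacle.} The delicate step is the invariance claim, namely making the factorization argument rigorous with history-dependent policies. We would write the trajectory probability as $\prod_j \big[\mu_j(s_{j,1})\prod_h \mathbb{O}_{j,h}\,\pi_{j,h}(a_{j,h}\mid\tau_{j,h})\,\mathbb{P}_{j,h}\big]$ and marginalize out player $i$'s variables. Each of player $i$'s factors sums to one when summed backward in time.

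A minor subtlety is that Lemma~\ref{lem:value-approx} requires the matching hypothesis only for $h\ge m$. For $h<m$, we must confirm that windows and full histories coincide, so that $\pi$ and its extension agree there as well.
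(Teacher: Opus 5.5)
Your proposal is correct and has the same overall architecture as the paper. You use the same $\Psi$, which is the paper's $\Phi$: the expected cumulative stage potential under the true POMG dynamics. You prove an exact potential identity in $\mathcal{P}$ over $\Pi^m$. You then transfer it to $\mathcal{G}^m$ by applying the first bound of Lemma~\ref{lem:value-approx} to $\pi$ and to $(\pi_i',\pi_{-i})$ and using the triangle inequality; this transfer step is exactly the paper's.

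The only real difference is how you obtain the exact identity. The paper writes $V_i$ as a linear form in the state-action occupancy and factorizes it as $\lambda_h^\pi=\prod_j\lambda_{j,h}^{\pi_j}$. It then rewrites the value difference through the marginal rewards $r_{i,h}^{\pi_{-i}}$ paired with $\lambda_{i,h}^{\pi_i}$. Finally, it invokes the pairwise-difference identity of Lemma~\ref{lem:pairwise-difference} to replace reward differences by potential differences.

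You instead observe that $r_{i,h}-\phi_h=g_{i,h}(s_{-i},a_{-i})$, and that the marginal law of $(s_{-i,h},a_{-i,h})$ does not depend on $\pi_i$. Hence $V_i-\Psi$ is constant in $\pi_i$. The two arguments rest on the same fact. The paper's pairwise identity is a roundabout way of saying that a function constant in $(s_i,a_i)$ integrates identically against $\lambda_{i,h}^{\pi_i}$ and $\lambda_{i,h}^{\pi_i'}$.

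Your route is shorter and dispenses with Lemma~\ref{lem:pairwise-difference}. It also needs only invariance of the opponents' marginal law, not independence between player $i$ and the others. That is a slightly weaker requirement, although in this decoupled setting both follow from the same product form of the trajectory law. The paper's route, in exchange, makes the playerwise occupancy structure explicit.

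Your backward-marginalization argument for the invariance claim is sound. It works because policies depend only on local action-observation histories, not on rewards. Your remark about $h<m$ is also fine, since $\Pi^m$ policies act on full histories there.
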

The proof is deferred to Appendix~\ref{app:proof-mg-repr}. At a high level, the decoupled structure allows state visitation distributions to factor across players, enabling the construction of a potential function for the POMG, which then transfers approximately to the finite-window setting.

Building upon these structural properties, we now turn to the question of learning in POMGs.

\section{Independent Nash equilibrium learning algorithm}
\label{sec:algorithm}

In this section, we design a learning algorithm for decoupled POMGs with potential structure. We begin by stating the learning objective and protocol, then we outline our approach. An implementation of the resulting algorithm, along with simulation results in two decoupled potential POMG environments, is provided in Appendix~\ref{app:experiments}.

\paragraph{Learning Objective.} Given $\epsilon > 0$, our objective is to efficiently learn a finite-window policy $\pi \in \Pi^m$ such that $\pi$ is an $\epsilon$-approximate Nash equilibrium of the POMG, that is, for all $i \in \mathcal{N}$ and all deviations with full history dependence $\pi^{\prime}_i \in \Pi_i^H$, it holds that $V_i(\pi) \geq V_i(\pi^{\prime}_i, \pi_{-i}) - \epsilon$.

\paragraph{Independent learning protocol.} We consider an independent learning setting~\cite{zhang2021multi,daskalakis2020independent,ding2022independent} in which players interact with the game over a fixed number of episodes. Within each episode, every player follows a fixed policy $\pi_i \in \Pi^m$, observes only its own sequence of actions, observations, and rewards, and then updates its policy individually. Players do not observe other agents' actions, observations, rewards, or policies. Moreover, no information is shared among players or with any central entity. We do assume, however, that all players agree in advance to follow the same algorithm, which places the resulting dynamics in the self-play regime~\cite{bai2020provable}.

Our algorithm (Algorithm~\ref{alg:main}) closely follows the method of~\cite{zhang2023markov} for fully observable Markov games, applied to the superstate game. The main challenges arise in the analysis, which we outline in the proof overviews in Section~\ref{sec:results}.

\begin{algorithm}
\caption{Learning Finite-Window Policy in POMG (at player $i$)}
\label{alg:main}
\begin{algorithmic}[1]
\STATE \textbf{Input:} iteration number $K$, episode number $T$, window length~$m$, stepsizes $\eta^{(k)}$ for $k \in [K]$\nw{, exploration probability $\zeta \in (0,1]$}.
\vspace{.3em}
\STATE Initialize $\pi_{i,h}^{(0)} = 1 / |\mathcal{A}_i|$ for all $h \in [H]$.
\FOR{$k = 0,1,\dots,K - 1$}
\STATE Collect trajectory $\tau_i=((a^{(t)}_{i,h},o^{(t)}_{i,h},r_{i,h}^{(t)})_{h=1}^{H})_{t=1}^{T}$ by following \nw{$\pi_i^{\zeta,(k)} \coloneqq \zeta \mathcal{U}(\mathcal{A}_i)+(1-\zeta)\pi_i^{(k)}$}, i.e., $a^{(t)}_{i,h} \sim \mathcal{U}(\mathcal{A}_i)$ w.p.\ \nw{$\zeta$} and  $a^{(t)}_{i,h} \sim \pi_{i,h}^{(k)}(\cdot \mid w_{i,h}^{(t)})$ otherwise, for all $h \in [H]$ and $t \in [T]$. \label{ln:eps-greedy}
\STATE Estimate $\hat{\mathbb{P}}_{i,h}^{m,\pi_i^{(k)}}$ and $\hat{r}^{m,\pi^{(k)}_{-i}}_i$ based on empirical frequencies (see Appendix~\ref{app:alg-details} for details).
\STATE Let $\hat{Q}^{(k)}_{i,H+1}(w_i,a_i) \coloneqq 0$ for all $w_i \in \mathcal{H}^{\leq m}_i, a_i \in \mathcal{A}_i$.
\FOR{$h = H,\dots,1$}
\FOR{$w_i \in \mathcal{H}^{\leq m}_i, a_i \in \mathcal{A}_i$}
\STATE $\hat{Q}^{(k)}_{i,h}(w_i,a_i) \hspace{-1pt}\coloneqq\hspace{-1pt} \hat{r}_{i,h}^{m,\pi_{-i}^{(k)}}(w_i,a_i) +\hspace{-.35em} \displaystyle\sum_{w_i^{\prime},a_i^{\prime}} \hspace{-2pt} \hat{\mathbb{P}}_{i,h}^{m,\pi_i^{(k)}}(w_i^{\prime} \mid w_i,a_i) \pi^{(k)}_{i,h+1}(a_i^{\prime} \mid w_i^{\prime}) \hat{Q}^{(k)}_{i,h+1}(w_i^{\prime},a^{\prime}_i)$. \label{ln:Q-update}
\STATE $\pi_{i,h}^{(k+1)}(a_i \mid w_i) \hspace{-1pt}\coloneqq\hspace{-1pt} (1-\eta^{(k)}) \pi_{i,h}^{(k)}(a_i \mid w_i) + \eta^{(k)} \mathbf{1}\left\{ a_i \in \arg\max_{a_i^{\prime}} \hat{Q}^{(k)}_{i,h}(w_i,a_i^{\prime}) \right\}$. \label{ln:soft-policy-iter}
\ENDFOR
\ENDFOR
\ENDFOR
\vspace{.5em}
\end{algorithmic}
\end{algorithm}

\subsection{Independent soft policy iteration}

A key challenge in independent learning in Markov games, even under decoupled transitions, is that each player's value function depends on the joint policy of all players through the coupled rewards. However, when fixing any~$\pi_{-i}$ in the superstate Markov game $\mathcal{G}^m$, player $i$ faces an MDP $\mathcal{M}^{\pi_{-i}}$ with state space $\mathcal{S}_i$, action space $\mathcal{A}_i$, transition kernel $\mathbb{P}_i^m$ as defined in (\ref{eqn:Pm-def}), and reward
\begin{align*}
r_{i,h}^{m,\pi_{-i}}(w_i,a_i) &\coloneqq \mathbb{E}_{w_{-i} \sim d^{m,\pi}_{-i,h}, a_{-i} \sim \pi_{-i}(\cdot \mid w_{-i})} \left[ r_{i,h}^m(w_i,w_{-i},a_i,a_{-i}) \right],
\end{align*}
where $d^{m,\pi}_{i,h}(w_{i,h}) \coloneqq P_\pi(w_{i,h}=w_i)$, and $d^{m,\pi}_{-i,h}(w_{-i,h}) \coloneqq \prod_{j \in \mathcal{N} \setminus \{i\}} d^{m,\pi}_{j,h}(w_j)$. Note that here $P_\pi$ is taken over the randomness of policy $\pi$ and transition kernel $\mathbb{P}^m$. We further denote the Q-function of $\mathcal{M}^{\pi_{-i}}$ by $\overline{Q}^{m,\pi}_{i,h}: \mathcal{H}_i^{\leq m} \times \mathcal{A}_i \to \mathbb{R}$.

Observe that if a policy $\pi^{\star} \in \Pi^m$ is such that for all $i \in \mathcal{N}$, $\pi^{\star}_i$ achieves the optimal value in $\mathcal{M}^{\pi^{\star}_{-i}}$, then $\pi^{\star}$ is a Nash equilibrium of $\mathcal{G}^m$. This motivates having each player perform policy iteration with respect to $\overline{Q}^{m,\pi}_{i,h}$ in the hope of converging to a joint policy that satisfies this playerwise optimality in~$\mathcal{M}^{\pi_{-i}}$. However, in a game setting, such simultaneous updates may lead to cyclic patterns that fail to converge~\cite{zinkevich2005cyclic}. In contrast, the smoothed version with appropriately chosen stepsize $0 < \eta^{(k)} < 1$, as implemented in Line~\ref{ln:soft-policy-iter} of Algorithm~\ref{alg:main}, has been shown to converge to an approximate Nash equilibrium in Markov potential games~\cite{zhang2023markov}.

It remains to describe how to obtain $\overline{Q}^{m,\pi^{(k)}}_{i,h}$. As the Q-function of an MDP, namely $\mathcal{M}^{\pi^{(k)}_{-i}}$, $\overline{Q}^{m,\pi^{(k)}}_{i,h}$ satisfies the Bellman expectation equation, which allows us to compute an estimate $\smash{\hat{Q}^{(k)}_{i,h}}$ via backward iteration (see Line~\ref{ln:Q-update} of Algorithm~\ref{alg:main}). This computation relies on having estimates $\smash{\hat{\mathbb{P}}_{i,h}^{m,\pi_i^{(k)}}}$ and $\hat{r}_{i,h}^{m,\pi^{(k)}_{-i}}$ of the transition kernel of $\smash{\mathcal{M}^{\pi_{-i}^{(k)}}}$. We next describe how to obtain these estimates for the superstate game from interaction with the POMG at each iteration.

\subsection{Sampling \& model estimation}

As we do not assume access to transition kernels or a generative model, players sample action-observation sequences over $T$ episodes by simultaneously interacting with the POMG and receiving their respective observations and rewards. At iteration~$k$ and step $h$, in order to ensure exploration of actions, player $i \in \mathcal{N}$ takes an action uniformly at random with probability \nw{$\zeta$}, and otherwise follows its current policy $\smash{\pi_{i,h}^{(k)}}$. Based on the observed sequences $\smash{\tau_i = \{ \{ a_{i,h}^{(t)}, o_{i,h}^{(t)}, r_{i,h}^{(t)} \}_{h=1}^H \}_{t=1}^T}$, we apply standard model estimation techniques inspired by single-agent methods~\cite{azar2012sample,kearns2002near} as detailed below.

\paragraph{Transition estimation.}
For each player $i \in \mathcal{N}$, we estimate the superstate transition probabilities $\mathbb{P}_{i,h}^m(\cdot \mid w_i,a_i)$ by empirical frequencies of $m$-step window transitions. Concretely, for each pair $(w_i,a_i) \in \mathcal{H}_i^{\leq m} \times \mathcal{A}_i$, we collect all time steps at which the $m$-step history window equals $w_i$ and action $a_i$ is played by policy $\pi_i^{(k)}$, and record the resulting next window $w_i' \in \mathcal{H}_i^{\leq m}$. The estimate $\hat{\mathbb{P}}_{i,h}^{m,\pi_i^{(k)}}(w_i' \mid w_i,a_i)$ is given by the relative frequency of observing $w_i'$ among these occurrences. If a pair $(w_i,a_i)$ is not observed, the estimate is set to~$0$.

\paragraph{Reward estimation.}
Similarly, we estimate the reward function $r_{i,h}^{m,\pi^{(k)}_{-i}}$ of $\smash{\mathcal{M}^{\pi_{-i}^{(k)}}}$ by empirical averaging. For each $\smash{(w_i,a_i) \in \mathcal{H}_i^{\leq m} \times \mathcal{A}_i}$, we average the rewards observed at time steps where the history window equals $w_i$ and action $a_i$ is taken. If $(w_i,a_i)$ is not observed, the estimate is set to~$0$.

For formal definitions of the estimates $\hat{\mathbb{P}}_i^m$ and $\hat{r}_{i,h}^{m,\pi^{(k)}_{-i}}$, we refer to Appendix~\ref{app:alg-details}. A key challenge towards establishing guarantees for Algorithm~\ref{alg:main} lies in deriving confidence bounds for this sampling procedure. We provide further insight on this in the proof overview of Lemma~\ref{lem:est-prob} in the next section.

\section{Convergence to Nash equilibrium}
\label{sec:results}

In this section, we present our main result, Theorem~\ref{thm:main}, which provides the convergence guarantee for Algorithm~\ref{alg:main} towards an approximate Nash equilibrium.

Our analysis relies on establishing confidence bounds for the estimated superstate model~$\hat{\mathbb{P}}_i^m$ and~$\hat{r}_{i,h}^{m,\pi_{-i}}$ (see Lemma~\ref{lem:est-prob} below). This in turn requires that all $m$-step history windows are well explored. A sufficient condition is that, at each superstate, all actions are selected and all observations are realized with probabilities bounded away from zero. For actions, this is ensured by the \nw{$\zeta$}-greedy exploration strategy in Line~\ref{ln:eps-greedy} of Algorithm~\ref{alg:main}. For observations, we impose the following assumption.
\begin{assumption}
\label{ass:uni-obs}
There exists $\beta > 0$ such that for all $i \in \mathcal{N}$, $s_i \in \mathcal{S}_i$ and $o_i \in \mathcal{O}_i$, $\mathbb{O}_i(o_i \mid s_i) \geq \beta$.
\end{assumption}
We note that Assumption~\ref{ass:uni-obs} can be enforced, for example, by having each player inject a small amount of uniform noise into its observations, ensuring that every observation occurs with probability bounded away from zero. Next, based on this assumption, we state the estimation confidence bounds.

\begin{restatable}{lemma}{estproblemma}
\label{lem:est-prob}
Let Assumptions~\ref{ass:filter-stab} and~\ref{ass:uni-obs} hold, and let $\pi \in \Pi^m$. Suppose each player $i \in \mathcal{N}$ collects a trajectory according to Line~\ref{ln:eps-greedy} in Algorithm~\ref{alg:main} \nw{with exploration probability $\zeta \in (0,1]$}, and estimates $\hat{\mathbb{P}}^{m,\pi_i}_i$ and $\hat{r}^{m,\pi_{-i}}_i$ according to~(\ref{eq:sample-p}) and~(\ref{eq:sample-r}), respectively. Let $i \in \mathcal{N}$, $h \in [H]$, $a_i \in \mathcal{A}_i$, and $w_i,w_i^{\prime} \in \mathcal{H}_i^{\leq m}$. Then for any $0 < \epsilon \leq 1$,
\begin{align*}
P \left( \vert \mathbb{P}^m_{i,h}(w_i^{\prime} \mid w_i,a_i)-\hat{\mathbb{P}}_{i,h}^{m,\pi_i}(w_i^{\prime} \mid w_i,a_i) \vert \geq \epsilon + (1-\rho)^{\nw{m-1}} \right)
&\leq 4 \exp \left( -\frac{\epsilon^{2} \nw{\zeta^{2m}} \beta^{2m} T}{32 |\mathcal{A}_i|^{2m}} \right), \\[2pt]
P \left( \vert r_{i,h}^{m,\pi_{-i}}(w_i,a_i)-\hat{r}_{i,h}^{m,\pi_{-i}}(w_i,a_i) \vert \geq \nw{\epsilon + 4 N H \zeta} + N^2H(1-\rho)^{\nw{m-1}} \right)
&\leq 4 \exp \left( -\frac{\epsilon^{2} \nw{\zeta^{2m}} \beta^{2m} T}{32 |\mathcal{A}_i|^{2m}} \right).
\end{align*}
\end{restatable}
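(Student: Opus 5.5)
Each bound splits into a statistical term and a bias term. The statistical term comes from concentration of empirical frequencies over the $T$ independent episodes. The bias term arises because the window-conditional law of the true POMG differs from the superstate kernel $\mathbb{P}^m$. For the bias, I would compare the true conditional probability of $w_i'$ given the window $w_i$ and action $a_i$ at step $h$ in the POMG under the exploration policy with $\mathbb{P}^m_{i,h}(w_i'\mid w_i,a_i)$. By decoupledness, the true conditional law factors through player $i$'s posterior $b_{i,h}(\cdot\mid\tau_i)$, averaged over full histories $\tau_i$ ending in $w_i$. The superstate kernel instead uses $b^m_{i,h}(\cdot\mid w_i)$, which is the posterior started from prior $\mu_i$ at the beginning of the window.

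Applying Assumption~\ref{ass:filter-stab} $m-1$ times to two priors at the window start gives
\[
\|b_{i,h}(\cdot\mid\tau_i)-b^m_{i,h}(\cdot\mid w_i)\|_{TV}\le(1-\rho)^{m-1}.
\]
Both conditional next-window probabilities are linear in the belief and bounded by one. The bias in $\hat{\mathbb{P}}$ is therefore at most $(1-\rho)^{m-1}$.

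For the statistical part, write $\hat{\mathbb{P}}$ as a ratio $\hat N(w_i,a_i,w_i')/\hat N(w_i,a_i)$ of empirical counts normalized by $T$. First lower bound the visitation probability $p\coloneqq P(w_{i,h}=w_i,a_{i,h}=a_i)$. Over the at most $m$ steps of the window, each observation has probability at least $\beta$ by Assumption~\ref{ass:uni-obs}. Each action has probability at least $\zeta/|\mathcal{A}_i|$ from the $\zeta$-greedy step. Hence $p\ge(\zeta\beta/|\mathcal{A}_i|)^m$.

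Next apply Hoeffding to numerator and denominator separately. Use the inequality $|x/y-x'/y'|\le (|x-x'|+|y-y'|)/y'$, requiring deviations at most $\epsilon p/4$ each. This yields the tail $2\cdot 2\exp(-2(\epsilon p/4)^2T)$, which is at most $4\exp(-\epsilon^2\zeta^{2m}\beta^{2m}T/(32|\mathcal{A}_i|^{2m}))$. Combining with the bias via the triangle inequality gives the first bound.

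For the reward, $\hat r$ averages the observed $r_{i,h}(s_h,a_h)$ over visits to $(w_i,a_i)$. The same ratio-Hoeffding argument controls the deviation from the true conditional mean $\mathbb{E}[r_{i,h}\mid w_{i,h}=w_i,a_{i,h}=a_i]$ in the POMG under $\pi^{\zeta}$. Comparing this to $r^{m,\pi_{-i}}_{i,h}$ involves three gaps.

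\textbf{Own belief.} Player $i$'s belief is replaced by $b^m_i$, which costs $(1-\rho)^{m-1}$.

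\textbf{Other players' beliefs.} Because dynamics are decoupled and there is no communication, $s_{-i}$ and $a_{-i}$ are independent of player $i$'s trajectory. So the conditional law of $(s_{-i},a_{-i})$ is its unconditional POMG law. Matching it to the superstate quantity $d^{m,\pi}_{-i,h}$ combined with the beliefs $b^m_j$ requires a window-state marginal comparison between the POMG and $\mathcal{G}^m$. I would telescope over steps $h'\le h$, each contributing belief error $(1-\rho)^{m-1}$ per player. That gives an $N^2H(1-\rho)^{m-1}$-type term after summing over $N$ players and a union-type bound on product TV.

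\textbf{Exploration.} The other players act under $\pi^{\zeta}_{-i}$ rather than $\pi_{-i}$. The product-policy TV per step and player is at most $\zeta$, so across $H$ steps and $N$ players the discrepancy is at most $4NH\zeta$ with some slack.

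The main obstacle is this marginal-distribution comparison between the true POMG window process and the superstate chain $\mathbb{P}^m$ for the other players. It must be done carefully, with a per-step simulation-lemma telescoping and a TV bound on products of distributions, to obtain exactly the stated $N^2H$ dependence.
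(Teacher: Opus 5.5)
Your proposal is correct and follows the paper's proof essentially step for step. The paper also splits each bound into a concentration term and bias terms. For concentration it imports your ratio-Hoeffding argument as Lemma~4 of~\cite{zhang2023markov}, with the same visitation floor $(\zeta\beta/|\mathcal{A}_i|)^m$. The bias terms come from the finite-window approximation (Lemmas~\ref{lem:aux-p-r-err} and~\ref{lem:r-error}, the latter via the same window-marginal telescoping over the other players) and from exploration (Lemma~\ref{lem:r-greedy-error}). If anything, your centered total-variation argument is slightly tighter: it gives the transition bias $(1-\rho)^{m-1}$ exactly as stated, whereas the paper's proof goes through the cruder $2(1-\rho)^{m-1}$ bound of Lemma~\ref{lem:aux-p-r-err}.
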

\paragraph{Proof overview.}
The main challenge we face is a mismatch between superstate Markov game and the underlying POMG process: action-observation sequences are collected from the POMG kernel with full history dependence $\mathbb{P}^H$, rather than the superstate Markov game kernel $\mathbb{P}^m$. While this issue arises already in POMDPs~\cite{cayci2024finite,anjarlekarscalable,jordan2026model}, it is compounded in the game setting since rewards depend on other agents' superstates. We handle this by decomposing the estimation error of $\hat{\mathbb{P}}_{i,h}^{m,\pi_i}$ and $\hat{r}_{i,h}^{m,\pi_{-i}}$ into a statistical term arising from randomness of the policy and environment, and a bias term due to the finite-window approximation (Lemma~\ref{lem:est-prob}). The first term is controlled by a concentration argument across the independently sampled episodes, while the bias term is bounded under Assumption~\ref{ass:filter-stab} by analyzing how mismatch propagates through other agents' states (Lemma~\ref{lem:r-error}). Compared to the corresponding result for Markov games (see~\cite{zhang2023markov}, Lemma~4), we highlight the additional bias terms $(1-\rho)^{\nw{m-1}}$ and $N^2 H(1-\rho)^{\nw{m-1}}$, that stem from the superstate approximation error.

For the full proof of Lemma~\ref{lem:est-prob} and its supporting results, we refer to Appendix~\ref{app:proof-results}.

It is standard for sample-based single- and multi-agent reinforcement learning methods to require certain assumptions on the state visitation distribution~\cite{agarwal2021theory,leonardos2022global,ding2022independent}. In the fully observable Markov game case, \cite{zhang2023markov} impose a uniform lower bound on the visitation of the latent states (Assumption~1 in~\cite{zhang2023markov}). As Algorithm~\ref{alg:main} runs on superstates rather than latent states, the condition has to be imposed at that level. In the following, since transitions are decoupled, $d^{m,\pi}_{i,h}$ depends on $\pi$ only through $\pi_i$, and we write $d^{m,\pi_i}_{i,h}$ accordingly.
\begin{assumption}[Superstate concentrability]
\label{ass:concentr}
For iteration $k \in [K]$ of Algorithm~\ref{alg:main}, let
\begin{align*}
\kappa^{(k)} \coloneqq \max_{i \in \mathcal{N},\, h \in [H],\, w_i \in \mathcal{H}_i^{\leq m}} \frac{d^{m,\pi_i^{\star,k}}_{i,h}(w_i)}{d^{m,\pi_i^{(k+1)}}_{i,h}(w_i)},
\qquad \text{where }\;
\pi_i^{\star,k} \in \arg\max_{\pi_i^{\prime} \in \Pi^m_i} V_i(\pi_i^{\prime},\pi^{(k)}_{-i}).
\end{align*}
We assume that there exists $\kappa \geq 1$ with $\kappa \geq \frac{\sum_{k \in [K]} \eta^{(k)} / \kappa^{(k)}}{\sum_{k \in [K]} \eta^{(k)}}$.
\end{assumption}

Each $\kappa^{(k)}$ compares, for every player, the best response occupancy $d^{m,\pi_i^{\star,k}}_{i,h}(w_i)$ against the occupancy of policy $d^{\smash{m,\pi_i^{(k+1)}}}_{i,h}(w_i)$. We bound these coefficients on average because iteration $k$ contributes an improvement proportional to $\eta^{(k)}/\kappa^{(k)}$, so that only the weighted sum enters the telescoping argument in the proof of Lemma~\ref{lem:soft-policy-iter}. This makes the condition weaker compared to the uniform requirement $\kappa^{(k)} \leq \kappa$ for all $k$, which implies our condition.

Next, we state our guarantee for Algorithm~\ref{alg:main} in learning an approximate Nash equilibrium.
\begin{theorem}
\label{thm:main}
Let Assumptions~\ref{ass:potential},~\ref{ass:filter-stab},~\ref{ass:uni-obs}, and~\ref{ass:concentr} hold. For $\epsilon > 0$, $\delta > 0$, suppose each player $i \in \mathcal{N}$ runs Algorithm~\ref{alg:main} simultaneously with stepsizes $\eta^{(k)} = \Theta(1 / \sqrt{N^2 H^3 k})$, \nw{exploration $\zeta = \epsilon$,} and
\begin{align*}
m &\geq \nw{\min \Big\{ H, \;} c_1 \cdot \rho^{-1} \log \left( \frac{\kappa NHAO}{\beta \epsilon} \right) \nw{\Big\}},\\
T &\geq c_2 \cdot \frac{\kappa^2 A^{2m+2} H^6 O^2 N^2 }{\beta^{2m}\epsilon^{2m+2}} \log(1 / \delta),\\
K &\geq c_3 \cdot \frac{\kappa^2 H^5 N^4 A^2}{\epsilon^2},
\end{align*}
where $c_1,c_2,c_3 \in \mathbb{N}$ are constants independent of the problem parameters, and $A \coloneqq \max_{i \in \mathcal{N}} |\mathcal{A}_i|$, $O \coloneqq \max_{i \in \mathcal{N}} |\mathcal{O}_i|$. Then, with probability $1-\delta$, there exists $k \in [K]$ such that the joint policy $\pi^{(k)}$ is an $\epsilon$-approximate Nash equilibrium of the POMG $\mathcal{P}$. The sample and computational complexity of Algorithm~\ref{alg:main} are at most
\begin{align}
\label{eqn:sample-com}
\mathcal{O} \left( \left(\frac{\kappa N H A O}{\beta\epsilon}\right)^{\mathcal{O}(\rho^{-1}\log(A/(\beta \epsilon)))} \log(1/\delta) \right).
\end{align}
\end{theorem}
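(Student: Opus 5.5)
The plan is to reduce the statement to a Nash-gap guarantee in the superstate game $\mathcal{G}^m$ and then transfer it to $\mathcal{P}$ via Proposition~\ref{prop:approx-ne}. Concretely, I aim to show that with probability $1-\delta$ some iterate $\pi^{(k)}$ is an $\epsilon/2$-approximate Nash equilibrium of $\mathcal{G}^m$. The choice $m \geq c_1\rho^{-1}\log(\kappa NHAO/(\beta\epsilon))$ makes $2\epsilon^m_\rho = 8H^2(1-\rho)^{m-1} \leq \epsilon/2$, since $(1-\rho)^{m-1}\le e^{-\rho(m-1)}$. When instead $m=H$, the superstate game coincides with the POMG and the bias vanishes. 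The argument follows the potential-function analysis of~\cite{zhang2023markov} for independent soft policy iteration, with three sources of error added in: the near-potential slack of Proposition~\ref{prop:alpha-potential}, the bias terms of Lemma~\ref{lem:est-prob}, and the $\zeta$-exploration.

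\emph{Step 1 (uniform estimation).} I apply Lemma~\ref{lem:est-prob} at every iteration $k$, player $i$, step $h$, and pair $(w_i,a_i,w_i')$, and take a union bound. The number of events is at most $KNH(AO)^{2m}$, so it suffices that $\epsilon_{\rm est}^2\zeta^{2m}\beta^{2m}T/(32A^{2m}) \gtrsim m\log(AO)+\log(KNH/\delta)$. With $\zeta=\epsilon$, this is what the stated lower bound on $T$ delivers.

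\emph{Step 2 (Q-function error).} I propagate these per-entry errors through the backward recursion of Line~\ref{ln:Q-update}. Summing over the next windows and over $H$ steps gives
$\|\hat Q^{(k)}_{i,h}-\overline Q^{m,\pi^{(k)}}_{i,h}\|_\infty \lesssim H^2 O A\,(\epsilon_{\rm est}+(1-\rho)^{m-1}) + H(\epsilon_{\rm est}+4NH\zeta+N^2H(1-\rho)^{m-1})$.
I denote this total by $\epsilon_Q$ and tune parameters so that $\epsilon_Q \lesssim \epsilon/(\kappa N\cdot\mathrm{poly}(H))$. This requires some care, because $\zeta=\epsilon$ contributes $NH^2\epsilon$. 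I would either rescale by choosing $\zeta=\epsilon/(cNH^2\kappa)$ inside the constants, or observe that the $\zeta$-bias only enters the reward estimate of $\mathcal{M}^{\pi_{-i}}$ and is absorbed into the final $\epsilon$.

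\emph{Step 3 (potential ascent).} Let $\Psi$ be the function from Proposition~\ref{prop:alpha-potential}. Write the joint update as $\pi^{(k+1)}=(1-\eta^{(k)})\pi^{(k)}+\eta^{(k)}\mathrm{greedy}$. Using the performance difference lemma in each $\mathcal{M}^{\pi_{-i}}$ and the near-potential identity, I expand the joint change as a sum of unilateral changes plus a second-order interaction term of size $O(N^2H^3(\eta^{(k)})^2)$. This yields
$\Psi(\pi^{(k+1)})-\Psi(\pi^{(k)}) \geq \frac{\eta^{(k)}}{\kappa^{(k)}}\sum_i \mathrm{Gap}_i(\pi^{(k)}) - O(N^2H^3(\eta^{(k)})^2) - O(N\eta^{(k)}(H\epsilon_Q+\epsilon^m_\rho))$.
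The factor $1/\kappa^{(k)}$ arises from comparing the greedy-improvement advantage weighted by $d^{m,\pi_i^{(k+1)}}_{i,h}$ against the best-response occupancy $d^{m,\pi_i^{\star,k}}_{i,h}$, exactly as in Assumption~\ref{ass:concentr}. Here $\mathrm{Gap}_i$ is the best-response gap in $\mathcal{G}^m$.

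I expect this step to be the main obstacle. Two issues arise: handling the cross-player interaction when all players move simultaneously in a game that is only near-potential, and making sure the per-iteration slack $2\epsilon^m_\rho$ is multiplied by $\eta^{(k)}$ rather than accumulating additively over $K$ iterations. My plan for the second issue is to apply Proposition~\ref{prop:alpha-potential} to the difference of $\Psi$ along each unilateral interpolation and to bound the change in $V_i^m$ directly. Since that change is linear in $\eta^{(k)}$ to first order, the slack should enter only through the $\eta$-weighted deviation.

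\emph{Step 4 (telescoping).} I sum over $k$ and use $|\Psi|\lesssim NH$ together with $\sum_k \eta^{(k)}/\kappa^{(k)} \ge \sum_k\eta^{(k)}/\kappa$. This gives
$\min_k \max_i \mathrm{Gap}_i(\pi^{(k)}) \lesssim \kappa\big(NH+N^2H^3\sum_k(\eta^{(k)})^2\big)/\sum_k\eta^{(k)} + \kappa N(H\epsilon_Q+\epsilon^m_\rho)$.
With $\eta^{(k)}\propto 1/\sqrt{N^2H^3k}$, the first term is $\tilde O(\kappa N^2H^{5/2}/\sqrt K)$, so the stated $K$ makes it at most $\epsilon/4$. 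The bounds on $m$ and $T$ make the remaining terms at most $\epsilon/4$.

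\emph{Step 5 (transfer and complexity).} Proposition~\ref{prop:approx-ne} turns this into an $\epsilon$-approximate Nash equilibrium of $\mathcal{P}$. The complexity is $KT$ times a per-iteration cost of $H(AO)^{O(m)}$. Substituting $m=O(\rho^{-1}\log(\kappa NHAO/(\beta\epsilon)))$ into $A^{2m}/(\beta\epsilon)^{2m}$ gives the quasi-polynomial bound~(\ref{eqn:sample-com}).
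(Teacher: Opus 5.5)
Your skeleton is the paper's. It runs through Lemma~\ref{lem:est-prob} (with a union bound the paper leaves implicit), Lemma~\ref{lem:q-approx}, and the unilateral decomposition through $\widetilde\pi^i$ with the $\kappa^{(k)}$ comparison and the $O(N^2H^3(\eta^{(k)})^2)$ interaction term of Lemma~\ref{lem:soft-policy-iter}. It then telescopes and transfers to $\mathcal{P}$. Measuring the gap in $\mathcal{G}^m$ and finishing with Proposition~\ref{prop:approx-ne} is slightly cleaner than the paper's route, because it covers deviations in $\Pi^H_i$, whereas Lemma~\ref{lem:soft-policy-iter} only controls $\Pi^m_i$. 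To invoke $\kappa^{(k)}$ as defined, however, you must compare against $\pi_i^{\star,k}$, the best response in $\mathcal{P}$. Via Lemma~\ref{lem:value-approx} this costs an extra $2\epsilon^m_\rho$ per player, which is harmless because it is weighted by $\eta^{(k)}/\kappa^{(k)}$.

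The step that fails is your plan in Step~3 to make the near-potential slack proportional to $\eta^{(k)}$. Proposition~\ref{prop:alpha-potential} charges $2\epsilon^m_\rho$ for every unilateral deviation, whatever its size. Knowing that the changes in $V^m_i$ and in $\Psi$ are each $O(\eta^{(k)}H^2)$ only bounds their discrepancy by $O(\eta^{(k)}H^2)$, not by $O(\eta^{(k)}\epsilon^m_\rho)$. Getting the latter would require a new derivative-level comparison of occupancies and $Q$-functions of $\mathcal{G}^m$ and $\mathcal{P}$ along the interpolation, which you do not supply. It is also unnecessary. The paper accepts an additive $O(NH\epsilon^m_\rho)$ per iteration. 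After telescoping and dividing by $\sum_k\eta^{(k)}/\kappa^{(k)}$, this becomes a term of order $\kappa\,\mathrm{poly}(N,H)\,\epsilon^m_\rho\sqrt{K}$. Since the required $K$ is polynomial, the logarithmic choice of $m$ (with $c_1$ large enough) drives it below $\epsilon/3$. So the obstacle you anticipate is not real, but your Step~4 bound omits this term and must include it.

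On $\zeta$, you have found a genuine problem that the paper's proof does not address either. It asserts $\epsilon_Q\lesssim\epsilon/(\kappa NH)$ under $\zeta=\epsilon$. Yet the $4NH\zeta$ reward bias of Lemma~\ref{lem:est-prob} alone contributes order $NH^2\zeta$ to $\epsilon_Q$, and hence order $\kappa N^2H^3\epsilon$ to the gap bound. The bias is real, since players are computing responses to $\pi^{\zeta}_{-i}$ rather than $\pi_{-i}$. Your second option, absorbing it into $\epsilon$, therefore fails with problem-independent constants.

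Your first option is the right fix, but it has two problems as stated:
\begin{itemize}
\item It needs $\zeta\lesssim\epsilon/(\kappa N^2H^3)$. With your choice $\epsilon/(cNH^2\kappa)$, a residual of order $NH\epsilon/c$ remains.
\item It is not \emph{inside the constants}. Because $\zeta^{-2m}$ enters the sample requirement, $T$ gains a factor of order $(\kappa N^2H^3)^{2m}$ beyond the stated bound. The exponent in~\eqref{eqn:sample-com} then becomes $\mathcal{O}(\rho^{-1}\log(\kappa NHA/(\beta\epsilon)))$.
\end{itemize}
The complexity remains quasi-polynomial, but the theorem's stated choices of $\zeta$ and $T$ would have to change accordingly.
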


\paragraph{Proof overview.} Although the superstate game need not admit an exact potential, it satisfies a near-potential property (Proposition~\ref{prop:alpha-potential}). This ensures that each iteration of soft policy iteration yields an approximate improvement, up to an error on the order of $H^2(1-\rho)^{\nw{m-1}}$ (Lemma~\ref{lem:soft-policy-iter}). Choosing $m$ appropriately controls the accumulated error and yields convergence. For the full proof, we refer to Appendix~\ref{app:proof-results}.

\paragraph{(Quasi-)efficiency.} We highlight that the sample and computational complexity~(\ref{eqn:sample-com}) is quasi-polynomial whenever the concentrability constant $\kappa$ is, and in particular does \emph{not} scale exponentially in the number of players $N$. By decoupledness, $d^{m,\pi}_{i,h}$ depends on $\pi$ only through $\pi_i$, so that $\kappa$ is a single-player quantity carrying no dependence on $N$. Thus, in our decoupled potential game setting, Algorithm~\ref{alg:main} breaks the curse of multi-agency.

\paragraph{Concentrability.} Conditions of the form of $\kappa^{(k)}$ are standard in the finite-memory POMDP literature. In particular, Assumption~4.3 of~\cite{cayci2024finite} is a concentrability coefficient between the occupancy of the comparator policy and that of the current iterate, taken over the information state of a finite-memory controller and required to hold throughout the policy optimization; it is described there as standard for policy gradient and natural policy gradient methods in POMDPs. More generally, $\kappa$ plays a role similar to the distribution mismatch coefficients of~\cite{agarwal2021theory,ding2022independent} in the analysis of policy gradient methods. As in~\cite{cayci2024finite}, $\kappa$ is an instance-dependent constant that we do not bound, and in the worst case the individual $\kappa^{(k)}$ can grow exponentially in the window length $m$.

\paragraph{Window size tradeoff.} The choice of the window size $m$ induces a tradeoff between approximation accuracy and statistical efficiency. Larger values of $m$ reduce the approximation error of the value in the superstate Markov game relative to the respective value in the original POMG (see Propositions~\ref{prop:approx-ne} and~\ref{prop:alpha-potential}), but increase the sampling complexity required to accurately estimate the superstate model (see Lemma~\ref{lem:est-prob}). Choosing $m$ on a logarithmic scale as in Theorem~\ref{thm:main} balances these effects.

\paragraph{Comparison with prior work.} Compared to~\cite{liu2022sample}, our framework relies on a different structural assumption, filter stability rather than observability, and focuses on decoupled dynamics with potential structure. This restriction enables an \emph{independent learning} approach that does not rely on computationally intractable oracles. In contrast to~\cite{liu2023partially}, our setting does not require information sharing. Moreover, unlike~\cite{liu2022sample,liu2023partially} our method does not suffer from the curse of multi-agency.

\section{Conclusion}
\label{sec:conclusion}

We studied Nash equilibrium learning in partially observable Markov games under decoupled dynamics and potential structure. We proposed an independent learning algorithm that converges to an approximate Nash equilibrium with quasi-polynomial sample and computational complexity, under a boundedness assumption on a superstate concentrability coefficient. Our analysis is based on a superstate Markov game representation, which we show approximately preserves equilibria and admits a near-potential structure.

\paragraph{Limitations and future work.} The setting of POMGs with decoupled dynamics and potential structure forms a rich but still restricted subclass of POMGs. Understanding the boundary between tractable and intractable regimes remains an important direction for future work. Another limitation is Assumption~\ref{ass:concentr}, which bounds the superstate concentrability coefficients along the trajectory of Algorithm~\ref{alg:main} and which we do not quantify; identifying concrete classes of POMGs in which $\kappa$ is small is an interesting direction. To the best of our knowledge, this is the first work on provable independent learning in POMGs. Our results open up new possibilities for independent learning in partially observable multi-agent systems and provide a starting point for further extensions to other game classes, such as two-player zero-sum games, and to broader equilibrium notions, including coarse correlated equilibria in general-sum games.

\bibliographystyle{unsrtnat}
\bibliography{refs}

%%%%%%%%%%%%%%%%%%%%%%%%%%%%%%%%%%%%%%%%%%%%%%%%%%%%%%%%%%%%

\newpage
\appendix

\section*{\LARGE{Supplementary Material}}
\addtocontents{toc}{\protect\setcounter{tocdepth}{2}}
\tableofcontents
\bigskip
\noindent\rule{\textwidth}{0.4pt}
\newpage

\section{Overview of Notation}
\label{app:notation}
Table~\ref{tbl:notation} gives an overview of the notation used throughout this paper. All notations are introduced before their first use as well.
\begin{table}[H]
\caption{Overview of notation}
\label{tbl:notation}
\begin{center}
\small
\begin{tabular}{ll}
\toprule
\textbf{Symbol} & \textbf{Description} \\
\midrule
\multicolumn{2}{c}{\textit{POMG}} \\
\midrule
$\mathcal{P} = (\mathcal{N},H,\mathcal{S},\mathcal{A},\mathcal{O},r,\mathbb{P},\mathbb{O},\mu)$ & the partially observable Markov game \\
$\mathcal{N} = [N]$ & set of $N$ players \\
$H$ & planning horizon \\
$\mathcal{S} = \prod_i \mathcal{S}_i$ & joint state space \\
$\mathcal{A} = \prod_i \mathcal{A}_i$ & joint action space; $A \coloneqq \max_i |\mathcal{A}_i|$ \\
$\mathcal{O} = \prod_i \mathcal{O}_i$ & joint observation space; $O \coloneqq \max_i |\mathcal{O}_i|$ \\
$r_{i,h}: \mathcal{S} \times \mathcal{A} \to [0,1]$ & reward function for player $i$ at step $h$ \\
$\mathbb{P}_{i,h}(\cdot \mid s_i, a_i)$ & transition kernel for player $i$ at step $h$ \\
$\mathbb{O}_{i,h}(\cdot \mid s_i)$ & observation kernel for player $i$ at step $h$ \\
$\mu = \prod_i \mu_i$ & initial state distribution over $\mathcal{S}$ \\
\midrule
\multicolumn{2}{c}{\textit{Histories and Policies}} \\
\midrule
$\mathcal{H}_i^h = \nw{(\mathcal{O}_i \times \mathcal{A}_i)^{h-1} \times \mathcal{O}_i}$ & set of $h$-step local histories for player $i$ \\
$\mathcal{H}_i^{\le h}$ & $\bigcup_{h'=0}^{h} \mathcal{H}_i^{h'}$ \\
$\tau_i = \nw{(o_{i,1},a_{i,1},\dots,a_{i,h-1},o_{i,h})}$ & local history of length $|\tau_i| = h$; $\circ$ denotes concatenation \\
$\tau_{i,k:l}$ & sub-history from step $k$ to step $l$ (inclusive) \\
$\Pi_i^m$ & $m$-step finite-window policies for player $i$ \\
$V_i(\pi)$ & value of player $i$ under joint policy $\pi$ in $\mathcal{P}$ \\
\midrule
\multicolumn{2}{c}{\textit{Superstate Markov Game $\mathcal{G}^m$}} \\
\midrule
$w_i \in \mathcal{H}_i^{\le m}$ & local superstate (truncated history window) of player $i$ \\
$b_{i,h}^m(s_i \mid w_i)$ & player $i$'s belief over $\mathcal{S}_i$ induced by window $w_i$ at step $h$ \\
$b_{i,h}(s_i \mid \tau_i)$ & player $i$'s belief over $\mathcal{S}_i$ given full history $\tau_i$ \\
$\mathbb{P}_h^m(w' \mid w, a)$ & superstate transition kernel of $\mathcal{G}^m$ \\
$r_{i,h}^m(w, a)$ & reward of player $i$ in $\mathcal{G}^m$ \\
$V_i^m(\pi)$ & value of player $i$ under $\pi$ in $\mathcal{G}^m$ \\
$\overline{Q}_{i,h}^{m,\pi}(w_i, a_i)$ & marginalized $Q$-function in $\mathcal{G}^m$ under policy $\pi$ \\
$d_{i,h}^{m,\pi}(w_i)$ & superstate visitation distribution at step $h$ under $\pi$ \\
\midrule
\multicolumn{2}{c}{\textit{Potential Structure}} \\
\midrule
$\lambda_{i,h}^{\pi_i}(s_i,a_i)$ & state-action occupancy measure of player $i$ \\
$\Phi: \Pi \to \mathbb{R}$ & potential function of the underlying Markov game \\
$\phi_h: \mathcal{S} \times \mathcal{A} \to \mathbb{R}$ & stage-wise potential function at step $h$ \\
$\Psi: \Pi^m \to \mathbb{R}$ & near-potential function for $\mathcal{G}^m$ (see Proposition~\ref{prop:alpha-potential}) \\
\midrule
\multicolumn{2}{c}{\textit{Assumptions and Key Parameters}} \\
\midrule
$\rho > 0$ & filter stability contraction factor (see Assumption~\ref{ass:filter-stab}) \\
$\beta > 0$ & minimum observation probability (see Assumption~\ref{ass:uni-obs}) \\
$\kappa \geq 1$ & superstate concentrability constant (see Assumption~\ref{ass:concentr}) \\
\nw{$\zeta \in (0,1]$} & \nw{exploration probability of Algorithm~\ref{alg:main}} \\
$\epsilon_\rho^m \coloneqq 4H^2(1-\rho)^{\nw{m-1}}$ & finite-window approximation error (see Lemma~\ref{lem:value-approx}) \\
\midrule
\multicolumn{2}{c}{\textit{Algorithm~\ref{alg:main}}} \\
\midrule
$T$ & number of episodes per iteration \\
$K$ & number of iterations \\
$\eta^{(k)}$ & stepsize at iteration $k$ \\
$\hat{\mathbb{P}}_{i,h}^{m,\pi_i},\ \hat{r}_{i,h}^{m,\pi_{-i}}$ & empirical estimates of superstate transitions and rewards \\
$\hat{Q}_{i,h}^{(k)}$ & estimated $Q$-function at iteration $k$ \\
\bottomrule
\end{tabular}
\end{center}
\vskip -0.1in
\end{table}
\newpage

\section{Discussion of decoupled dynamics assumption}
\label{app:decoupled}

\subsection{Illustrative examples}

We provide several representative application domains in which the assumption of decoupled transition dynamics arises naturally, while interactions are captured through the reward structure.

\paragraph{Wireless networks.}
Consider a wireless communication network in which $N$ users aim to transmit packets over a shared channel. Each user maintains a local queue of packets and, at each time step, selects a transmission power level. The local state corresponds to the queue length, which evolves based on the transmission success and exogenous packet arrivals. Crucially, under standard modeling assumptions, the queue dynamics depend only on the user's own state and action, and are thus conditionally independent across users.

In contrast, rewards depend on the joint action profile: simultaneous high-power transmissions lead to increased interference and congestion, reducing transmission success probabilities. This induces a coupling in the reward function, often leading to a (statewise) congestion (and thus potential) structure. Such models have been studied in the networking literature (see, e.g.,~\cite{altman2007constrained,altman2009dynamic}). In practice, the queue length may only be partially observed due to delays, noise, or aggregation effects, leading naturally to a partially observable setting.

\paragraph{Distributed energy markets.}
In distributed energy systems~\cite{etesami2018stochastic}, multiple prosumers (e.g., households equipped with photovoltaic panels and storage) interact through a shared energy market. Each prosumer's local state is given by its battery level, which evolves according to locally generated and consumed energy, as well as charging and discharging decisions. Assuming participants are geographically separated with independent local conditions, these state dynamics may be modelled as decoupled across agents.

However, the reward, typically representing negative energy cost, depends on the aggregate supply and demand in the market, and hence on the joint actions of all agents. This induces strategic coupling through prices or clearing mechanisms, while maintaining independent state evolution.

\paragraph{Multi-robot systems.}
Decoupled transition dynamics also naturally arise in multi-robot systems and autonomous driving. Each agent (robot or vehicle) evolves according to its own local dynamics, which depend primarily on its state and control inputs.
Coupling between agents typically enters through the reward function, which encodes coordination objectives such as collision avoidance, or task allocation. In many practical settings, agents have only partial observations of the local state due to limited sensing.
For instance, \cite{xiao2025asynchronous} implements reinforcement learning methods under partial observability in real-world multi-robot tasks. Here, agents learn localized policies based on their individual state observations.

\section{Proofs for Section~\ref{sec:mg-repr}}
\label{app:proof-mg-repr}

In this section, we provide the proofs for our structural results on POMGs presented in Section~\ref{sec:mg-repr}.

\subsection{Finite window approximation results}

In this subsection, we establish several approximation results under filter stability that lead us towards proving Proposition~\ref{prop:approx-ne}.

First, we show that beliefs formed from joint histories corresponding to the same superstate must be close to each other in terms of total variation distance.
\begin{lemma}
\label{lem:beliefs-tv}
For $m,h,h^{\prime} \in \mathbb{N}$ with $h,h^{\prime} \geq m$, let $\tau \in \mathcal{H}^h, \tau^{\prime} \in \mathcal{H}^{h^{\prime}}$ such that $\tau_{h-m+1:h}=\nw{\tau^{\prime}_{h^{\prime}-m+1:h^{\prime}}}$. Then, under Assumption~\ref{ass:filter-stab}, we have
\begin{align*}
\left\lVert b_h(\cdot \mid \tau)-b_{h^{\prime}}(\cdot \mid \tau^{\prime}) \right\rVert_{TV} \leq (1-\rho)^{\nw{m-1}}.
\end{align*}
\end{lemma}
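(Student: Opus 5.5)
Both posteriors are obtained by running the same sequence of $m-1$ filter updates, one per action-observation pair in the common window, started from two different beliefs. Each update contracts total variation by $(1-\rho)$ under Assumption~\ref{ass:filter-stab}. Since the distance between any two beliefs is at most $1$, iterating the contraction $m-1$ times gives the bound.

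\textbf{Step 1: a uniform time index.} Write the common window as $w=(o_1,a_1,\dots,a_{m-1},o_m)$. Then
\[
\tau=\tau_{1:h-m}\circ(a_{h-m},o_1)\circ(a_1,o_2)\circ\dots\circ(a_{m-1},o_m),
\]
and similarly for $\tau'$ with its own prefix. The assumption compares two histories of the \emph{same} length $h$, whereas here $h\neq h'$ is allowed, so the histories have to be aligned first. I would read Assumption~\ref{ass:filter-stab} as a statement about the Bayes filter update map $F_{a,o}$ acting on beliefs. Concretely, for any two beliefs $\nu,\nu'$ that arise as posteriors, the assumption asserts
\[
\|F_{a,o}(\nu)-F_{a,o}(\nu')\|_{TV}\le(1-\rho)\|\nu-\nu'\|_{TV}.
\]
The posterior given $\tau$ at the point where the window's first observation $o_1$ has been processed is then $\nu_0:=b_{h-m+1}(\cdot\mid\tau_{1:h-m+1})$. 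The corresponding posterior for $\tau'$ is $\nu_0':=b_{h'-m+1}(\cdot\mid\tau'_{1:h'-m+1})$. Applying the remaining $m-1$ updates $F_{a_1,o_2},\dots,F_{a_{m-1},o_m}$ yields $b_h(\cdot\mid\tau)$ and $b_{h'}(\cdot\mid\tau')$ respectively.

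\textbf{Step 2: induction.} For $j=0,\dots,m-1$, let $\nu_j$ and $\nu'_j$ be the beliefs after $j$ updates. I would show by induction on $j$ that
\[
\|\nu_j-\nu'_j\|_{TV}\le(1-\rho)^j\|\nu_0-\nu'_0\|_{TV},
\]
each step being one application of the assumption. Since $\|\nu_0-\nu_0'\|_{TV}\le1$, taking $j=m-1$ gives $(1-\rho)^{m-1}$. The exponent is $m-1$ rather than $m$ because the first observation of the window enters through the initial beliefs $\nu_0,\nu_0'$, and only the $m-1$ action-observation pairs that follow it are contracted.

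\textbf{Main obstacle.} The assumption as literally stated quantifies over $\tau,\tau'\in\mathcal{H}^h$ of equal length and a single time index $h$. When $h\ne h'$, the kernels $\mathbb{P}_{i,k}$ and $\mathbb{O}_{i,k}$ are applied at different time indices along the two histories. I would handle this in one of two ways. The first is to interpret the assumption as contraction of the update map for arbitrary starting beliefs, which is the standard Dobrushin-type formulation referenced via~\cite{kara2022near}. The second is to note that in the paper's uses of the lemma, such as comparing $b_h(\cdot\mid\tau)$ with the window belief $b^m_{i,h}$ started from $\mu_i$, the relevant updates are taken at the same time steps. In that case one compares, at the common indices, the belief started from $\nu_0$ with the belief started from the prior, and the equal-length statement suffices. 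If needed, I would state this interpretation explicitly as a remark before running the induction.
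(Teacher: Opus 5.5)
Your core argument is the paper's: it views the POMG as a POMDP over joint actions and observations, defers to the single-agent Lemma~1 of \cite{anjarlekarscalable}, and applies Assumption~\ref{ass:filter-stab} once per appended action-observation pair, i.e.\ $m-1$ times, from an initial distance of at most $1$. The paper never addresses the $h\neq h'$ issue you flag, so your ``main obstacle'' paragraph is more careful than the source. However, only your second fix actually works in this model.

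The kernels $\mathbb{P}_{i,k}$ and $\mathbb{O}_{i,k}$ depend on $k$. So for $h\neq h'$ the two histories are pushed through different update maps: those at times $h-m+1,\dots,h-1$ versus those at $h'-m+1,\dots,h'-1$. Contraction of each single map, even from arbitrary starting beliefs, says nothing about $\lVert F\nu-G\nu'\rVert_{TV}$ for $F\neq G$. Your Step~1, which writes time-free maps $F_{a,o}$, silently assumes time-homogeneity.

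The claim can in fact fail. Take $\mathcal{S}=\{0,1\}$, uninformative observations, and $\mathbb{P}_k(\cdot\mid s,a)=\delta_{k \bmod 2}$. Every posterior after $\tau\circ(a,o)$ with $\tau\in\mathcal{H}^h$ is then the same point mass, so Assumption~\ref{ass:filter-stab} holds for every $\rho\in(0,1]$. Yet $b_h(\cdot\mid\tau)=\delta_{(h-1)\bmod 2}$ and $b_{h+1}(\cdot\mid\tau')=\delta_{h \bmod 2}$ are at distance $1$ for all $h\geq m\geq 2$ and every common window. So your first fix only works under time-homogeneous kernels.

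Your second fix is the right one, and it is exactly what is used downstream in Lemma~\ref{lem:aux-p-r-err}. There $b_h(\cdot\mid\tau)$ is compared with $b^m_h(\cdot\mid w)$, and both are updated with the kernels at times $h-m+1,\dots,h$.

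One correction is needed: the literal equal-length assumption does not suffice. The window belief's starting point $\propto\mu\,\mathbb{O}_{h-m+1}(o_1\mid\cdot)$ and its subsequent updates are in general not posteriors of any genuine history of the corresponding length. You should state explicitly that you read Assumption~\ref{ass:filter-stab} as contraction of each fixed-time update map on all of $\Delta(\mathcal{S})$ (the Dobrushin-type reading), and run your induction under that reading.
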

\begin{proof}
Since the POMG can be viewed as an POMDP with action space $\mathcal{A}$ and observation space $\mathcal{O}$, the result follows equivalently to the respective single-agent result, see Lemma~1 in \cite{anjarlekarscalable}. \nw{An $m$-step window consists of one observation followed by $m-1$ appended action-observation pairs, so Assumption~\ref{ass:filter-stab} is applied $m-1$ times}, and hence the desired bound follows.
\end{proof}

\nw{\paragraph{Remark.} Although Assumption~\ref{ass:filter-stab} is stated for the joint belief, it hides no exponential dependence on the number of players. It enters our analysis only through the belief bound of Lemma~\ref{lem:beliefs-tv}. If filter stability is instead assumed playerwise with parameter $\rho_i$ for each $i \in \mathcal{N}$, setting $\rho \coloneqq \min_{i \in \mathcal{N}} \rho_i$, using decoupledness and the fact that $\lVert \prod_{i} P_i - \prod_{i} Q_i \rVert_{TV} \leq \sum_{i} \lVert P_i-Q_i \rVert_{TV}$ for product measures, a linear $N$-dependence is introduced which gets absorbed in the big-$\mathcal{O}$ notation of Theorem~\ref{thm:main}.}

Next, we bound the error in terms of superstate rewards and transition kernels resulting from finite-window approximation. Recall that $\circ$ denotes concatenation of histories.
\begin{lemma}
\label{lem:aux-p-r-err}
Under Assumption~\ref{ass:filter-stab}, let $h \geq m$ and $w \in \mathcal{H}^m$, $\tau \in \mathcal{H}^h$ with $w=\tau_{h-m+1:h}$. Then, for any $i \in \mathcal{N}$, $h \in [H]$, and $a \in \mathcal{A}$,
\begin{align*}
\left\vert r^m_{i,h}(w,a)-r^H_{i,h}(\tau,a) \right\vert &\leq 2(1-\rho)^{\nw{m-1}},\\
\sum_{o \in \mathcal{O}} \left\vert \mathbb{P}^m_h(w_{2:m} \circ (a,o) \mid w,a) - \mathbb{P}^H_h(\tau_{h-m+1:h} \circ (a,o) \mid \tau,a) \right\vert &\leq 2(1-\rho)^{\nw{m-1}}.
\end{align*}
\end{lemma}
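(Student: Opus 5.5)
The plan is to express both the superstate quantities and the full-history quantities as expectations of fixed bounded functions under two beliefs over $\mathcal{S}$, and then apply Lemma~\ref{lem:beliefs-tv}. For the full-history model, $r^H_{i,h}(\tau,a)=\sum_{s}r_{i,h}(s,a)\,b_h(s\mid\tau)$ and $\mathbb{P}^H_h(\tau_{h-m+1:h}\circ(a,o)\mid\tau,a)=\sum_{s,s'}b_h(s\mid\tau)\mathbb{P}_h(s'\mid s,a)\mathbb{O}_{h+1}(o\mid s')$. By the definitions in Section~\ref{sec:mg-repr}, the superstate counterparts have the same form, with $b_h(\cdot\mid\tau)$ replaced by the product window belief $b^m_h(\cdot\mid w)\coloneqq\prod_j b^m_{j,h}(\cdot\mid w_j)$.

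\textbf{Step 1: identify the window belief with a full-history belief.} Observe that $b^m_{j,h}(\cdot\mid w_j)$ is exactly the posterior obtained by running the Bayes filter on the window $w_j$, started from the prior $\mu_j$ at time $h-m+1$. Equivalently, it is the true posterior $b_{h'}(\cdot\mid\tau')$ of a POMG started at the window's beginning, with history $\tau'=w$ of length $m$. Here I would use that Assumption~\ref{ass:filter-stab} applies to arbitrary time offsets and priors; if needed, I would reindex the kernels so that the filter recursion matches. Applying Lemma~\ref{lem:beliefs-tv} to $\tau$ and $\tau'$, which share the last $m$ steps, gives $\|b^m_h(\cdot\mid w)-b_h(\cdot\mid\tau)\|_{TV}\le(1-\rho)^{m-1}$. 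This identification is the main obstacle, since Lemma~\ref{lem:beliefs-tv} is stated for genuine histories of the same game. I would handle it by arguing that the contraction in Assumption~\ref{ass:filter-stab} is a property of the filter update map, which is identical for both processes. That map sends the prior at the window start to the current belief through $m-1$ update steps, so each step contracts TV by $(1-\rho)$.

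\textbf{Step 2: reward bound.} Since $r_{i,h}\in[0,1]$, I would bound
\[
|r^m_{i,h}(w,a)-r^H_{i,h}(\tau,a)|\le\sum_s|b^m_h(s\mid w)-b_h(s\mid\tau)|=2\|\cdot\|_{TV}\le 2(1-\rho)^{m-1}.
\]

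\textbf{Step 3: transition bound.} Summing over $o$ and using the triangle inequality gives
\[
\sum_o\Big|\sum_s\big(b^m_h(s\mid w)-b_h(s\mid\tau)\big)\sum_{s'}\mathbb{P}_h(s'\mid s,a)\mathbb{O}_{h+1}(o\mid s')\Big|
\le\sum_s|b^m_h(s\mid w)-b_h(s\mid\tau)|\sum_{o,s'}\mathbb{P}_h(s'\mid s,a)\mathbb{O}_{h+1}(o\mid s').
\]
The inner double sum equals $1$, so the transition error is again at most $2(1-\rho)^{m-1}$.
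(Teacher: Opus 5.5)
Your proposal is correct and follows essentially the same route as the paper. Both proofs write the two quantities as expectations under $b^m_h(\cdot\mid w)$ and $b_h(\cdot\mid\tau)$. The reward gap is bounded by $\sum_s |b^m_h(s\mid w)-b_h(s\mid\tau)|$ using $r_{i,h}\le 1$, the transition gap is bounded by the same sum after noting $\sum_{o,s'}\mathbb{P}_h(s'\mid s,a)\mathbb{O}_{h+1}(o\mid s')=1$, and both conclude with Lemma~\ref{lem:beliefs-tv}.

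Your Step~1 makes explicit something the paper takes for granted when it applies Lemma~\ref{lem:beliefs-tv} directly to the window belief. As you note, this identification requires reading Assumption~\ref{ass:filter-stab} as a contraction of the filter update for arbitrary priors and time offsets, not only for beliefs reachable from $\mu$ at time $1$. The paper relies on the same tacit strengthening, as its footnote saying any prior could be used at the window start suggests.
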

\begin{proof}
For the first bound, note that
\begin{align*}
\left\vert r^m_{i,h}(w,a)-r^H_{i,h}(\tau,a) \right\vert
\leq \sum_{s \in \mathcal{S}} r_{i,h}(s,a) \left\vert b^m_h(s \mid w)-b_h(s \mid \tau) \right\vert
\leq 2(1-\rho)^{\nw{m-1}}
\end{align*}
where in the last step we have used Lemma~\ref{lem:beliefs-tv} and the fact that rewards are bounded by $1$.

For the second bound, denoting $w^{\prime}_o \coloneqq w_{2:m} \circ (a,o)$ and $\tau^{\prime}_o \coloneqq \tau \circ (a,o)$, we have
\begin{align*}
&\sum_{o \in \mathcal{O}} \left\vert \mathbb{P}^m_h( w^{\prime}_o \mid w,a) - \mathbb{P}^H_h( \tau^{\prime}_o \mid \tau,a) \right\vert \\
&\qquad= \sum_{o \in \mathcal{O}} \sum_{s,s^{\prime} \in \mathcal{S}} \mathbb{O}_{h+1}(o \mid s^{\prime}) \mathbb{P}_h(s^{\prime} \mid s,a) \left\vert b^m_h(s \mid w)-b_h(s \mid \tau) \right\vert \\
&\qquad\leq \sum_{s \in \mathcal{S}} \left\vert b^m_h(s \mid w)-b_h(s \mid \tau) \right\vert \sum_{o \in \mathcal{O}} \sum_{s^{\prime} \in \mathcal{S}} \mathbb{O}_{h+1}(o \mid s^{\prime}) \mathbb{P}_h(s^{\prime} \mid s,a) \\
&\qquad\leq 2(1-\rho)^{\nw{m-1}}.
\end{align*}
\end{proof}

Using the above bound, we obtain the following approximation result for value functions and best responses, restated here for convenience.
\valueapproxlemma*
\begin{proof}
The proof of the first bound proceeds by showing a stronger inequality by backward induction on $h$. For this, we define for any $w \in \mathcal{H}^{\leq m}$ and $\tau \in \nw{\mathcal{H}^{h}}$ the step $h$ value functions
\begin{align*}
V_{i,h}(\pi; \tau) &\coloneqq \mathbb{E}_{\pi,\, s_1 \sim \mu}\left[ \sum_{h^{\prime}=h}^H r^m_{i,h^{\prime}}(s_{h^{\prime}},a_{h^{\prime}}) \mid \nw{(o_1,a_1,\dots,a_{h-1},o_h)}=\tau \right] \\
V^m_{i,h}(\pi;w) &\coloneqq \mathbb{E}_{\pi,\, s_1 \sim \mu}\left[ \sum_{h^{\prime}=h}^H r^m_{i,h^{\prime}}(w_{h^{\prime}},a_{h^{\prime}}) \mid \nw{w_{h}}=w \right].
\end{align*}
The inequality we aim to show for all $h \in [H]$ is
\begin{align}
\label{eqn:ind-ineq}
\left\vert V^m_{i,h}(\pi;w) - V_{i,h}(\pi;\tau) \right\vert
\leq 4(1-\rho)^{\nw{m-1}} H(H+1-h).
\end{align}
For $H+1$, (\ref{eqn:ind-ineq}) trivially holds. Suppose (\ref{eqn:ind-ineq}) holds for some $h \in [H]$. The Bellman expectation equations are given by
\begin{align*}
V_{i,h}(\pi^{\prime};\tau) &= \sum_{a \in \mathcal{A}} \pi^{\prime}_h(a \mid \tau) \left[ r^H_{i,h}(\tau,a) + \sum_{o \in \mathcal{O}} \mathbb{P}^H_h(\tau \circ (a,o) \mid \tau,a) V_{i,h+1}(\pi^{\prime}; \tau \circ (a,o)) \right], \\
V^m_{i,h}(\pi;w) &= \sum_{a \in \mathcal{A}} \pi_h(a \mid w) \left[ r^m_{i,h}(w,a) + \sum_{o \in \mathcal{O}} \mathbb{P}^m_h(w_{2:m} \circ (a,o) \mid w,a) V^m_{i,h+1}(\pi; w_{2:m} \circ (a,o)) \right].
\end{align*}
Hence, we can write the difference as
\begin{align*}
&\left\vert V^m_{i,h}(\pi;w) - V_{i,h}(\pi^{\prime};\tau) \right\vert \\
&\qquad \leq \sum_{a \in \mathcal{A}} \pi_h(a \mid w) \Bigg[ \left\vert r^m_{i,h}(w,a)-r^H_{i,h}(\tau,a) \right\vert \\
&\qquad\qquad+ \sum_{o \in \mathcal{O}} \mathbb{P}^m_h(w_{2:m} \circ (a,o) \mid w,a) \left[ V^m_{i,h+1}(\pi; w_{2:m} \circ (a,o)) - V_{i,h+1}(\pi^{\prime}; \tau \circ (a,o)) \right] \\
&\qquad\qquad+ \sum_{o \in \mathcal{O}} \left[ \mathbb{P}^m_h(w_{2:m} \circ (a,o) \mid w,a) - \mathbb{P}^H_h(\tau \circ (a,o) \mid \tau,a) \right] V_{i,h+1}(\pi^{\prime}; \tau \circ (a,o)) \Bigg] \\
&\qquad \overset{(a)}\leq \sum_{a \in \mathcal{A}} \pi_h(a \mid w) \Bigg[ 2(1-\rho)^{\nw{m-1}} + \max_{o \in \mathcal{O}} \left\vert V^m_{i,h+1}(\pi; w_{2:m} \circ (a,o)) - V_{i,h+1}(\pi^{\prime}; \tau \circ (a,o)) \right\vert + 2(1-\rho)^{\nw{m-1}} H \Bigg] \\
&\qquad \leq 4H(1-\rho)^{\nw{m-1}} + \max_{a \in \mathcal{A}, o \in \mathcal{O}} \left\vert V^m_{i,h+1}(\pi; w_{2:m} \circ (a,o)) - V_{i,h+1}(\pi^{\prime}; \tau \circ (a,o)) \right\vert \\
&\qquad \overset{(b)}\leq 4H(1-\rho)^{\nw{m-1}} + 4(1-\rho)^{\nw{m-1}} H(H-h) \\
&\qquad \leq 4(1-\rho)^{\nw{m-1}} H(H+1-h)
\end{align*}
where in (a) we use Lemma~\ref{lem:aux-p-r-err}, and in (b) we apply the induction hypothesis.

To show the second bound, define the best response functions for any $i \in \mathcal{N}$, $h \in [H]$, $w \in \mathcal{H}^{\leq m}$, and $\tau \in \nw{\mathcal{H}^{h}}$,
\begin{align*}
V^{\dagger}_{i,h}(\pi^{\prime}_{-i};\tau) &\coloneqq \max_{\hat{\pi}_i^{\prime} \in \Pi^H_i}V_{i,h}(\hat{\pi}_i^{\prime},\pi_{-i};\tau), \\
V_{i,h}^{m,\dagger}(\pi_{-i};w) &\coloneqq \max_{\hat{\pi}_i \in \Pi^m_i}V^m_{i,h}(\hat{\pi}_i,\pi_{-i};w).
\end{align*}
By Bellman optimality, we have
\begin{align*}
&V_{i,h}^{\dagger}(\pi^{\prime}_{-i};\tau) = \max_{a_i \in \mathcal{A}_i} \sum_{a_{-i} \in \mathcal{A}_{-i}} \pi^{\prime}_{-i}(a_{-i} \mid \tau_{-i}) \Bigg[ r^H_{i,h}(\tau,a_i,a_{-i}) \\
&\qquad\qquad\qquad\qquad+ \sum_{o \in \mathcal{O}} \mathbb{P}^H_h(\tau \circ (a_i,a_{-i},o) \mid \tau,a_i,a_{-i}) V_{i,h+1}^{\dagger}(\pi^{\prime}_{-i}; \tau \circ (a_i,a_{-i},o)) \Bigg], \\
&V_{i,h}^{m,\dagger}(\pi_{-i};w) = \max_{a_i \in \mathcal{A}_i} \sum_{a_{-i} \in \mathcal{A}_{-i}} \pi_{-i}(a_{-i} \mid w_{-i}) \Bigg[ r^m_{i,h}(w,a_i,a_{-i}) \\
&\qquad\qquad\qquad\qquad+ \sum_{o \in \mathcal{O}} \mathbb{P}^m_h(w_{2:m} \circ (a_i,a_{-i},o) \mid w,a_i,a_{-i}) V_{i,h+1}^{m,\dagger}(\pi_{-i};w_{2:m} \circ (a_i,a_{-i},o)) \Bigg].
\end{align*}
Therefore, using the same steps as above, we can decompose the difference as above and apply Lemma~\ref{lem:aux-p-r-err} to obtain
\begin{align*}
&\left\vert V_{i,h}^{m,\dagger}(\pi_{-i};w) - V_{i,h}^{\dagger}(\pi^{\prime}_{-i};\tau) \right\vert \\
&\qquad \leq \max_{a_i \in \mathcal{A}_i} \sum_{a_{-i} \in \mathcal{A}_{-i}} \pi_{-i}(a_{-i} \mid w_{-i}) \Bigg[ 4H(1-\rho)^{\nw{m-1}} \\
&\qquad\qquad+ \max_{o \in \mathcal{O}} \left\vert V_{i,h+1}^{m,\dagger}(\pi_{-i};w_{2:m} \circ (a_i,a_{-i},o)) - V_{i,h+1}^{\dagger}(\pi^{\prime}_{-i}; \tau \circ (a_i,a_{-i},o)) \right\vert \Bigg] \\
&\qquad \leq 4H(1-\rho)^{\nw{m-1}} + \max_{a \in \mathcal{A},o \in \mathcal{O}} \left\vert V_{i,h+1}^{m,\dagger}(\pi_{-i};w_{2:m} \circ (a_i,a_{-i},o)) - V_{i,h+1}^{\dagger}(\pi^{\prime}_{-i}; \tau \circ (a_i,a_{-i},o)) \right\vert.
\end{align*}
Then, by an induction argument over $h$ similar to above, we can show that for any $h \in [H]$,
\begin{align*}
\left\vert V_{i,h}^{m,\dagger}(\pi_{-i};w) - V_{i,h}^{\dagger}(\pi^{\prime}_{-i};\tau) \right\vert
\leq 4(1-\rho)^{\nw{m-1}} H(H+1-h),
\end{align*}
which implies the desired bound.
\end{proof}

\subsection{Nash equilibrium approximation}

As a direct application of Lemma~\ref{lem:value-approx}, we obtain the following approximation guarantee for Nash equilibria for $m$-step history-dependent policies.
\approxneprop*
\begin{proof}
With best response function defined as in the proof of Lemma~\ref{lem:value-approx}, we have
\begin{align*}
V^{\dagger}_i(\pi_{-i}) - V_i(\pi)
&\leq \left[ V^{\dagger}_i(\pi_{-i}) - V_i^{m,\dagger}(\pi_{-i}) \right] + \left[ V_i^{m,\dagger}(\pi_{-i}) - V_i^m(\pi) \right] + \Big[ V_i^m(\pi) - V_i(\pi) \Big] \\
&\overset{(a)}{\leq} \left\vert V^{\dagger}_i(\pi_{-i}) - V_i^{m,\dagger}(\pi_{-i}) \right\vert + \epsilon + \Big\vert V_i^m(\pi) - V_i(\pi) \Big\vert \\
&\overset{(b)}{\leq} \epsilon + 2\epsilon^{m}_\rho
\end{align*}
where (a) is due to $\pi$ being an $\epsilon$-approximate Nash equilibrium, and (b) follows from the two respective bounds in Lemma~\ref{lem:value-approx}.
\end{proof}

\subsection{Near-potential structure}

Having shown our Nash equilibrium approximation guarantee, we next establish near-potential structure of the superstate Markov game $\mathcal{G}^m$ in terms of the value functions $V_i^m$, for $i \in \mathcal{N}$.

\alphapotentialprop*
\begin{proof}
The proof is divided into two parts. First, using decoupledness of transitions, we show that under Assumption~\ref{ass:potential} on the underlying latent state Markov game, there exists an exact potential function for the POMG. As a second step, based on this, we then prove the $(2\epsilon_\rho^m)$-approximate potential structure for the superstate Markov game.
\begin{itemize}
\item \textbf{POMG is potential:} For any $m$-step policy $\pi_i \in \Pi^m_i$ and $h \in [H]$, let $\lambda^{\pi_i}_{i,h} \in \Delta(\mathcal{S}_i \times \mathcal{A}_i)$ denote the induced playerwise state-action visitation probability at step $h$, defined as
\begin{align*}
\lambda_{i,h}^{\pi_i}(s_i,a_i) \coloneqq P_{\pi_i} \left( s_{i,h}=s_i, a_{i,h}=a_i \right)
\end{align*}
where $P_{\pi_i}$ refers to the distribution over player $i$'s (unobserved) state trajectories when following $\pi_i$ in the underlying POMG. Analogously, for a joint policy $\pi \in \Pi^m$, let $\lambda_h^{\pi}(s,a) \coloneqq P_{\pi}(s_h=s, a_h=a) \in \Delta(\mathcal{S} \times \mathcal{A})$ denote the induced joint state-action occupancy at step $h$. Since the initial distribution $\mu$, the transitions $\mathbb{P}$, the observations $\mathbb{O}$, and the product policy $\pi$ all factor across players, each player's trajectory evolves independently, so the joint occupancy factorizes into the playerwise marginals, i.e.,
\begin{align*}
\lambda_h^{\pi}(s,a) = \prod_{j \in \mathcal{N}} \lambda_{j,h}^{\pi_j}(a_i,s_i)
\end{align*}
for any $a \in \mathcal{A}$, $s \in \mathcal{S}$, and $h \in [H]$.

\nw{By Assumption~\ref{ass:potential}, there exist $\phi_h:\mathcal{S} \times \mathcal{A} \to \mathbb{R}$ for each $h \in [H]$ such that} for all $i \in \mathcal{N}$, $h \in [H]$, $s \in \mathcal{S}$, $s_i^{\prime} \in \mathcal{S}_i$, $a \in \mathcal{A}$, and $a_i^{\prime} \in \mathcal{A}_i$, we have
\begin{align*}
&r_{i,h}(s_{i,h}', s_{-i,h}, a_{i,h}', a_{-i,h}) - r_{i,h}(s_{i,h}, s_{-i,h}, a_{i,h}, a_{-i,h})\\
&\qquad\qquad = \phi_{h}(s_{i,h}', s_{-i,h}, a_{i,h}', a_{-i,h}) - \phi_{h}(s_{i,h}, s_{-i,h}, a_{i,h}, a_{-i,h}).
\end{align*}
Define the function $\Phi: \Pi^m \to \mathbb{R}$ as
\begin{align*}
\Phi(\pi) \coloneqq \mathbb{E}_{\pi,s_1 \sim \mu} \left[ \sum_{h=1}^H \phi_h(s_h,a_h) \right],
\end{align*}
where the expectation is taken over the distribution of trajectories induced by the POMG $\mathcal{P}$ under joint policy $\pi \in \Pi^m$, including randomness from state transitions according to $\mathbb{P}$ and observations emissions according to $\mathbb{O}$.

Further define for any $i \in \mathcal{N}$, $\pi_{-i} \in \Pi_{-i}^m$, $h \in [H]$, $s_i \in \mathcal{S}_i$, and $a_i \in \mathcal{A}_i$,
\begin{align*}
r_{i,h}^{\pi_{-i}}(s_i,a_i) &\coloneqq \sum_{s_{-i},a_{-i}} r_{i,h}(s_i,s_{-i},a_i,a_{-i}) \prod_{j \in \mathcal{N} \setminus \left\{ i \right\}} \lambda_{j,h}^{\pi_j}(s_j,a_j),\\
\phi_h^{\pi_{-i}}(s_i,a_i) &\coloneqq \sum_{s_{-i},a_{-i}} \phi_h(s_i,s_{-i},a_i,a_{-i}) \prod_{j \in \mathcal{N} \setminus \left\{ i \right\}} \lambda_{j,h}^{\pi_j}(s_j,a_j).
\end{align*}
Then, for any $m$-step policy $\pi \in \Pi^m$ and deviation $\pi_i^{\prime} \in \Pi_i^m$, with the POMG value function defined as in (\ref{eqn:value-def}), we can write
\begin{align*}
&V_i({\pi}) - V_i(\pi_i^{\prime},\pi_{-i}) \\[5pt]
&\overset{(a)}{=} \sum_{h \in [H]} r^{\top}_{i,h} \lambda_h^{\pi} - \sum_{h \in [H]} r^{\top}_{i,h} \lambda_h^{(\pi_i^{\prime},\pi_{-i})} \\
&\overset{(b)}{=} \sum_{h \in [H]} r^{\top}_{i,h} \left( \bigtimes_{j \in \mathcal{N}} \lambda_{j,h}^{\pi_j} \right) - \sum_{h \in [H]} r^{\top}_{i,h} \left( \lambda_{i,h}^{\pi_i^{\prime}} \times \left( \bigtimes_{j \in \mathcal{N} \setminus \left\{ i \right\}} \lambda_{j,h}^{\pi_j} \right) \right) \\
&= \sum_{h \in [H]} \left( r_{i,h}^{\pi_{-i}} \right)^{\top} \lambda_{i,h}^{\pi_i} - \left( r_{i,h}^{\pi_{-i}} \right)^{\top} \lambda_{i,h}^{\pi_i^{\prime}} \\
&\overset{(c)}{=} \frac{1}{2} \sum_{h \in [H]} \sum_{s_i,s_i^{\prime},a_i,a_i^{\prime}} \left( \lambda_{i,h}^{\pi_i}(s_i,a_i) \lambda_{i,h}^{\pi_i^{\prime}}(s_i^{\prime},a_i^{\prime}) - \lambda_{i,h}^{\pi_i}(s^{\prime}_i,a^{\prime}_i) \lambda_{i,h}^{\pi_i^{\prime}}(s_i,a_i) \right) \left( r^{\pi_{-i}}_{i,h}(s_i,a_i) - r^{\pi_{-i}}_{i,h}(s^{\prime}_i,a^{\prime}_i) \right) \\
&= \frac{1}{2} \sum_{h \in [H]} \sum_{s_i,s_i^{\prime},a_i,a_i^{\prime}} \left( \lambda_{i,h}^{\pi_i}(s_i,a_i) \lambda_{i,h}^{\pi_i^{\prime}}(s_i^{\prime},a_i^{\prime}) - \lambda_{i,h}^{\pi_i}(s^{\prime}_i,a^{\prime}_i) \lambda_{i,h}^{\pi_i^{\prime}}(s_i,a_i) \right) \left( \phi_h^{\pi_{-i}}(s_i,a_i) - \phi_h^{\pi_{-i}}(s^{\prime}_i,a^{\prime}_i) \right) \\
&\,= \sum_{h \in [H]} \left( \phi_h^{\pi_{-i}} \right)^{\top} \lambda_{i,h}^{\pi_i} - \left( \phi_h^{\pi_{-i}} \right)^{\top} \lambda_{i,h}^{\pi_i^{\prime}} \\
&\,= \sum_{h \in [H]} \phi_h^{\top} \left( \bigtimes_{j \in \mathcal{N}} \lambda_{j,h}^{\pi_j} \right) - \phi_h^{\top} \left( \lambda_{i,h}^{\pi_i^{\prime}} \times \left( \bigtimes_{j \in \mathcal{N} \setminus \left\{ i \right\}} \lambda_{j,h}^{\pi_j} \right) \right) \\
&= \Phi({\pi}) - \Phi(\pi_i^{\prime},\pi_{-i})
\end{align*}
where (a) writes each value as a linear form in the joint state-action occupancy, which holds by linearity of expectation since each $r_{i,h}$ depends only on $(s_h,a_h)$; (b) factorizes the joint occupancy into the playerwise marginals, $\lambda_h^{\pi} = \bigtimes_{j \in \mathcal{N}} \lambda_{j,h}^{\pi_j}$, by decoupledness of the initial distribution, transitions, observations, and policies; and (c) uses the identity proven in Lemma~\ref{lem:pairwise-difference}. We have established potential structure with potential function $\Phi$ over the policy class $\Pi^m$ for the POMG $\mathcal{P}$. Next, we use this fact, together with our finite-window approximation guarantees relating the value functions $V_i$ and $V^m_i$, to show near-potential structure for the superstate Markov game.

\item \textbf{Superstate Markov game is near-potential:} For any $\pi \in \Pi^m$, $i \in \mathcal{N}$, and~$\pi_i^{\prime} \in \Pi^m_i$,
\begin{align*}
\left( V_i^m({\pi})-V_i^m(\pi_i^{\prime},\pi_{-i}) \right) &- \left( \Phi(\pi) - \Phi(\pi_i^{\prime},\pi_{-i}) \right) \\[8pt]
&= \underbrace{\left( V_i^m(\pi)-V_i(\pi) \right) + \left( V_i(\pi_i^{\prime},\pi_{-i})-V_i^m(\pi_i^{\prime},\pi_{-i}) \right)}_{(a)} \\
&\quad+ \underbrace{\left( V_i(\pi)-V_i(\pi_i^{\prime},\pi_{-i}) \right) - \left( \Phi(\pi) - \Phi(\pi_i^{\prime},\pi_{-i}) \right)}_{(b)}.
\end{align*}
Note that since $\Phi$ is a potential function of the POMG $\mathcal{P}$, we have $(b)=0$. Moreover, using Lemma~\ref{lem:value-approx}, we bound $\left\vert (a) \right\vert \leq 2\epsilon^{m}_\rho$, which concludes the proof.
\end{itemize}
\end{proof}

\subsection{Auxiliary lemma}

In the above proof of Proposition~\ref{prop:alpha-potential}, we invoked the following algebraic identity, which expresses the difference of two linear forms over distributions as a symmetric pairwise sum. Here we provide its proof.

\begin{lemma}
\label{lem:pairwise-difference}
For $k \in \mathbb{N}$ and $\mathcal{I}=\{1,\dots,k\}$, let $p,q \in \Delta(\mathcal{I})$, and let $x:\mathcal{I}\to\mathbb{R}$ be arbitrary. Then
\begin{align*}
\sum_{i\in \mathcal{I}} p(i)x(i) - \sum_{i\in \mathcal{I}} q(i)x(i)
=
\frac12
\sum_{i,j\in \mathcal{I}}
\bigl(p(i)q(j)-p(j)q(i)\bigr)\,\bigl(x(i)-x(j)\bigr).
\end{align*}
\end{lemma}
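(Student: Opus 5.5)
The identity is purely algebraic, so the plan is to expand the right-hand side and use only the normalization $\sum_{i} p(i) = \sum_{i} q(i) = 1$. Write
\begin{align*}
R \coloneqq \frac12 \sum_{i,j\in\mathcal{I}} \bigl(p(i)q(j)-p(j)q(i)\bigr)\bigl(x(i)-x(j)\bigr).
\end{align*}
Multiplying out the product gives four double sums:
\begin{align*}
\sum_{i,j} p(i)q(j)x(i), \quad -\sum_{i,j} p(i)q(j)x(j), \quad -\sum_{i,j} p(j)q(i)x(i), \quad \sum_{i,j} p(j)q(i)x(j).
\end{align*}

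Next I would sum out the index not attached to $x$ in each term, using the normalization of $p$ and $q$. This leaves, in order,
\begin{align*}
\textstyle\sum_i p(i)x(i), \quad -\sum_j q(j)x(j), \quad -\sum_i q(i)x(i), \quad \sum_j p(j)x(j).
\end{align*}
Collecting these, the bracketed sum equals $2\bigl(\sum_i p(i)x(i) - \sum_i q(i)x(i)\bigr)$. The factor $\tfrac12$ then yields exactly the left-hand side.

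A slightly slicker route reaches the same point. The summand $\bigl(p(i)q(j)-p(j)q(i)\bigr)\bigl(x(i)-x(j)\bigr)$ is symmetric under swapping $i$ and $j$, since both factors change sign. Hence $R = \sum_{i,j}\bigl(p(i)q(j)-p(j)q(i)\bigr)x(i)$, and summing over $j$ gives $\sum_i \bigl(p(i)-q(i)\bigr)x(i)$.

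There is no real obstacle here. The only point requiring care is the bookkeeping of which index is summed out in each of the four terms, and the observation that only $\sum p = \sum q = 1$ is needed, not nonnegativity.
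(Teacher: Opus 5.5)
Your proof is correct and follows the same route as the paper: expand the right-hand side into four double sums and sum out the index not attached to $x$ using $\sum_i p(i)=\sum_i q(i)=1$. Your bookkeeping is in fact cleaner than the paper's. There, the second grouped sum should evaluate to $\sum_i \tfrac12\bigl(p(i)-q(i)\bigr)x(i)$ rather than the $\sum_i \tfrac12\bigl(q(i)-p(i)\bigr)x(i)$ written, a sign slip that would otherwise cancel the first sum.
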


\begin{proof}
Expanding the right-hand side yields
\begin{align*}
\frac12\sum_{i,j} p(i)q(j)x(i)
-\frac12\sum_{i,j} p(i)q(j)x(j)
-\frac12\sum_{i,j} p(j)q(i)x(i)
+\frac12\sum_{i,j} p(j)q(i)x(j).
\end{align*}
We first group terms according to whether they multiply $x(i)$ or $x(j)$:
\begin{align*}
\sum_{i,j} \frac12\bigl(p(i)q(j)-p(j)q(i)\bigr)x(i)
+
\sum_{i,j} \frac12\bigl(p(j)q(i)-p(i)q(j)\bigr)x(j).
\end{align*}
For the first sum, fixing $i$ and summing over $j$ yields
\[
\sum_j \frac12\bigl(p(i)q(j)-p(j)q(i)\bigr)
=
\frac12\Bigl(p(i)\sum_j q(j)-q(i)\sum_j p(j)\Bigr)
=
\frac12\bigl(p(i)-q(i)\bigr),
\]
where we used $\sum_j p(j)=\sum_j q(j)=1$. Hence
\[
\sum_{i,j} \frac12\bigl(p(i)q(j)-p(j)q(i)\bigr)x(i)
=
\sum_i \frac12\bigl(p(i)-q(i)\bigr)x(i).
\]
Applying the same argument to the second sum (with indices relabeled) gives
\[
\sum_{i,j} \frac12\bigl(p(j)q(i)-p(i)q(j)\bigr)x(j)
=
\sum_i \frac12\bigl(q(i)-p(i)\bigr)x(i).
\]
Combining the two expressions and canceling terms yields
\[
\sum_i (p(i)-q(i))x(i)
=
\sum_i p(i)x(i)-\sum_i q(i)x(i),
\]
which completes the proof.
\end{proof}

\section{Algorithm details}
\label{app:alg-details}

Below, we provide further details omitted in the main part by formally specifying how to compute empirical frequencies for estimating superstate transition probabilities and rewards based on the sampled action-observation sequences. Denote by $\tau^{(t)}_i=(a^{(t)}_{i,h},o^{(t)}_{i,h},r_{i,h}^{(t)})_{h=1}^{H}$ for $t \in [T]$ the $t$-th sequence sampled at some iteration $k \in [K]$ of Algorithm~\ref{alg:main} when players follow the joint policy~$\pi^{(k)} \in \Pi^m$.

\paragraph{Transition probabilities.} At each player $i \in \mathcal{N}$, at step $h \in [H]$, and for all $w_i,w_i^{\prime} \in \mathcal{H}_i^{\leq m}$ and $a_i \in \mathcal{A}_i$, take the empirical average of transitions from $w_i$ to $w_i^{\prime}$, that is,
\begin{align}
\hat{\mathbb{P}}_{i,h}^{m,\pi_i^{(k)}}(w_i^{\prime} \mid w_i,a_i) \coloneqq \label{eq:sample-p}
\begin{cases}
\frac{\sum_{t=1}^{T} \mathbf{1}\{ \tau^{(t)}_{i, h-|w_i^{\prime}|+2\,:\,h+1}=w_i^{\prime} \;\land\; \tau^{(t)}_{i, h-|w_i|+1\,:\,h}=w_i \;\land\; a^{(t)}_{i,h}=a_i \}}{\sum_{t=1}^{T} \mathbf{1}\{ \tau^{(t)}_{i, h-|w_i|+1\,:\,h}=w_i \land a_{i,h}^{(t)}=a_i \}}, \\[14pt]
\quad \text{ if } \sum_{t=1}^{T} \mathbf{1}\{ \tau^{(t)}_{h-|w_i|+1\,:\,h}=w_i \land a_{i,h}^{(t)}=a_i \} \ge 1 \text{ and } |w_i^{\prime}| = \min(m, |w_i|+1), \\[16pt]
0, \; \text{otherwise}.
\end{cases}
\end{align}

\paragraph{Rewards.} Similarly, at each player $i \in \mathcal{N}$, at step $h \in [H]$, and for all $w_i \in \mathcal{H}_i^{\leq m}$ and $a_i \in \mathcal{A}_i$, set
\begin{align}
\hat{r}_{i,h}^{m,\pi^{(k)}_{-i}}(w_i,a_i) \coloneqq \label{eq:sample-r}
\begin{cases}
\frac{\sum_{t=1}^{T} r_{i,h}^{(t)} \cdot \mathbf{1}\{ \tau^{(t)}_{i, h-|w_i|+1\,:\,h}=w_i \;\land\; a^{(t)}_{i,h}=a_i \}}
{\sum_{t=1}^{T} \mathbf{1}\{ \tau^{(t)}_{h-|w_i|+1\,:\,h}=w_i \;\land\; a^{(t)}_{i,h}=a_i \}},\\[14pt]
\qquad\qquad\qquad \text{ if } \sum_{t=1}^{T} \mathbf{1}\{ \tau^{(t)}_{i, h-|w_i|+1\,:\,h}=w_i \;\land\; a^{(t)}_{i,h}=a_i \} \ge 1, \\[12pt]
0, \; \text{otherwise}.
\end{cases}
\end{align}

\section{Proofs for Section~\ref{sec:results}}
\label{app:proof-results}

In this section, we provide the proof our main result, Theorem~\ref{thm:main}, on the convergence of Algorithm~\ref{alg:main} to an $\epsilon$-approximate Nash equilibrium.

\subsection{Model estimation guarantees}

We begin by deriving confidence bounds for the transition and reward estimates computed at each iteration of Algorithm~\ref{alg:main}. Note that these differ from the Markov game case, since we additionally need to account for the bias introduced by the superstate Markov game approximation.

\estproblemma*
\begin{proof}
We divide the proof into two parts for proving each of the two bounds.
\begin{itemize}
\item \textbf{Bound for estimation of transitions:} For some $\pi_i \in \Pi_i^H$, let $\widetilde{\mathbb{P}}^{\pi_i}_{i,h}(\cdot \mid w_i,a_i) \in \Delta(\mathcal{H}_i^{\leq m})$ denote the distribution over $m$-step windows observed at step $h$ marginalized over the full history, i.e.,
\begin{align*}
&\widetilde{\mathbb{P}}^{\pi_i}_{i,h}(w_i^{\prime} \mid w_i,a_i) \coloneqq \sum_{\tau_i=\nw{(o_{i,1},a_{i,1},\dots,a_{i,h},o_{i,h+1})} \in \mathcal{H}_i^{h+1}} P_{\pi_i} \left( \mathcal{T}_{i,h}=\tau_i \right) \\
&\hspace{11em}\cdot\mathbf{1} \left\{ \tau_{i,h-m+2 : h+1}=w_i^{\prime} \land \tau_{i,h-m+1 : h}=w_i \land a_{i,h+1}=a_i \right\}
\end{align*}
where $P_{\pi_i} \left( \mathcal{T}_{i,h}=\tau_i \right)$ is the probability that the random history $\mathcal{T}_{i,h}$, induced by the POMG dynamics and policy $\pi_i$, is equal to $\tau_i$ at step $h$ for player $i$.

Note that
\begin{align*}
\mathbb{E}_{\pi_i,s_{i,1} \sim \mu_i} \left[ \hat{\mathbb{P}}_{i,h}^{m,\pi_i}(w_i^{\prime} \mid w_i,a_i) \right] = \widetilde{\mathbb{P}}^{\pi^{\zeta}_i}_{i,h}(w_i^{\prime} \mid w_i,a_i),
\end{align*}
where \nw{$\pi^{\zeta}_i$ is the $\zeta$-greedy policy} from which we sample in Line~\ref{ln:eps-greedy} of Algorithm~\ref{alg:main}. \nw{Being a ratio of two empirical counts over the $T$ sampled trajectories, $\hat{\mathbb{P}}_{i,h}^{m,\pi_i}(w_i^{\prime} \mid w_i,a_i)$ is itself not a sum of independent terms. Lemma~4 of~\cite{zhang2023markov} does not require this: it applies to empirical transition estimates of exactly this form, and is obtained by first rearranging the ratio into a quantity to which Hoeffding's inequality applies. Invoking it, for any $0 < \epsilon \leq 1$, we have}
\begin{align*}
P \left( \left\vert \widetilde{\mathbb{P}}^{\pi^{\zeta}_i}_{i,h}(w_i^{\prime} \mid w_i,a_i)-\hat{\mathbb{P}}_{i,h}^{m,\pi_i}(w_i^{\prime} \mid w_i,a_i) \right\vert \geq \epsilon \right)
\leq 4 \exp \left( -\frac{\epsilon^{2} \nw{\zeta^{2m}} \beta^{2m} T}{32 A_i^{2m}} \right).
\end{align*}
where the lower bound $c>0$ on state visitation probabilities required by this Lemma in our case is given by \nw{$c=\left( (\beta \zeta) / A_i \right)^m$}. This is due to Assumption~\ref{ass:uni-obs} and the fact that we sample trajectories in a \nw{$\zeta$}-greedy manner. At any step $h \in [H]$, any $m$-step window is visited with probability at least~\nw{$\left( (\beta \zeta) / A_i \right)^m$}.

Moreover, by Lemma~\ref{lem:aux-p-r-err}, we have the deterministic bound
\begin{align*}
\left\vert \widetilde{\mathbb{P}}^{\pi_i^{\zeta}}_{i,h}(w_i^{\prime} \mid w_i,a_i) - \mathbb{P}^m_{i,h}(w_i^{\prime} \mid w_i,a_i) \right\vert \leq 2(1-\rho)^{\nw{m-1}}
\end{align*}
which together with the above concludes the proof of the first bound.

\item \textbf{Bound for reward estimation:} Similar to above, we define
\begin{align*}
&\widetilde{r}_{i,h}^{\pi}(w_i,a_i) \coloneqq \sum_{\tau_i=\nw{(o_{i,1},a_{i,1},\dots,a_{i,h-1},o_{i,h})} \in \mathcal{H}^h_i} P_\pi \left( \mathcal{T}_{i,h}=\tau_i \right) \\
&\qquad\qquad\qquad\qquad\qquad \cdot \mathbf{1} \left\{ \tau_{i,h-m+1 : h}=w_i \land a_{i,h+1}=a_i \right\}\,r_{i,h}^{H,\pi_{-i}}(\tau_i,a_i),
\end{align*}
such that we have
\begin{align*}
\mathbb{E}_{\pi,s_1 \sim \mu} \left[ \hat{r}_{i,h}^{m,\pi_{-i}}(w_i,a_i) \right] = \widetilde{r}_{i,h}^{\pi^{\zeta}}(w_i,a_i),
\end{align*}
Similar as for transition probabilities, we obtain from Lemma~4 of \cite{zhang2023markov} the bound
\begin{align*}
P \left( \Big\vert \widetilde{r}_{i,h}^{\pi^{\zeta}}(w_i,a_i)-\hat{r}_{i,h}^{m,\pi_{-i}}(w_i,a_i) \Big\vert \geq \epsilon \right)
\leq 4 \exp \left( -\frac{\epsilon^{2} \nw{\zeta^{2m}} \beta^{2m} T}{32 A_i^{2m}} \right).
\end{align*}
Moreover, we decompose the bias as follows
\begin{align*}
&\left\vert \widetilde{r}_{i,h}^{\pi^{\zeta}}(w_i,a_i) - r_{i,h}^{m,\pi_{-i}}(w_i,a_i) \right\vert \\
&\qquad\leq \underbrace{\left\vert \widetilde{r}_{i,h}^{\pi^{\zeta}}(w_i,a_i) - \widetilde{r}_{i,h}^{\pi}(w_i,a_i) \right\vert}_{(a)}
+ \underbrace{\left\vert \widetilde{r}_{i,h}^{\pi}(w_i,a_i) - r_{i,h}^{m,\pi_{-i}}(w_i,a_i) \right\vert}_{(b)}.
\end{align*}
Using Lemma~\ref{lem:r-error} below, for the second term, we obtain
\begin{align*}
(b) &\leq \left\vert \widetilde{r}_{i,h}^{\pi}(w_i,a_i) - r_{i,h}^{m,\pi_{-i}}(w_i,a_i) \right\vert \\
&\leq \sum_{\tau_i=\nw{(o_{i,1},a_{i,1},\dots,a_{i,h-1},o_{i,h})} \in \mathcal{H}^h_i} P \left( \mathcal{T}_{i,h}=\tau_i \right)
\mathbf{1} \left\{ \tau_{i,h-m+1 : h}=w_i \land a_{i,h+1}=a_i \right\} \\
&\hspace{20em} \cdot \left\vert r_{i,h}^{H,\pi_{-i}}(\tau_i,a_i) - r_{i,h}^{m,\pi_{-i}}(w_i,a_i) \right\vert \\
&\lesssim N^2H(1-\rho)^{\nw{m-1}},
\end{align*}
and the first term is bounded by Lemma~\ref{lem:r-greedy-error},
\begin{align*}
(a) \leq 4 H (N-1)\nw{\zeta}.
\end{align*}
Together with our probabilistic bound on $\vert \widetilde{r}_{i,h}^{\pi}(w_i,a_i)-\hat{r}_{i,h}^{m,\pi_{-i}}(w_i,a_i) \vert$, this concludes the proof.
\end{itemize}
\end{proof}

\begin{lemma}
\label{lem:r-greedy-error}
Let $\pi \in \Pi^m$, and for \nw{$\zeta > 0$} define $\pi^{\zeta} \in \Pi^m$ such that
\begin{align*}
\pi_i^{\zeta}(a_i \mid w_i) \coloneqq \nw{\frac{\zeta}{A_i} + (1-\zeta)}\pi_i(a_i \mid w_i).
\end{align*}
Let also
\begin{align*}
r_{i,h}^{m,\pi}(w_i,a_i) \coloneqq \mathbb{E}_{w_{-i} \sim d^{m,\pi}_{-i,h}, \, a_{-i} \sim \pi_{-i}(\cdot \mid w_{-i})} \big[ r_{i,h}^m(w_i,w_{-i},a_i,a_{-i}) \big].
\end{align*}
Assume $r_{i,h}^m \in [0,1]$. Then, for any $w_i \in \mathcal{H}_i^{\leq m}$ and $a_i \in \mathcal{A}_i$,
\begin{align*}
\big| r_{i,h}^{m,\pi}(w_i,a_i) - r_{i,h}^{m,\pi^{\zeta}}(w_i,a_i) \big|
\le 4 H (N-1)\nw{\zeta}.
\end{align*}
\end{lemma}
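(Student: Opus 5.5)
The plan is to compare the two reward functions term by term via their defining expectations. We write
\begin{align*}
r_{i,h}^{m,\pi}(w_i,a_i) = \sum_{w_{-i},a_{-i}} \Big(\prod_{j\neq i} d^{m,\pi_j}_{j,h}(w_j)\,\pi_{j,h}(a_j\mid w_j)\Big)\, r^m_{i,h}(w_i,w_{-i},a_i,a_{-i}),
\end{align*}
using that, by decoupledness of $\mathbb{P}^m$, the marginal $d^{m,\pi}_{j,h}$ depends only on $\pi_j$. The analogous formula holds for $\pi^\zeta$. Since $r^m_{i,h}\in[0,1]$, the difference is bounded by the total variation distance (up to a factor $2$) between the two product measures $\bigotimes_{j\neq i} \nu_j$ and $\bigotimes_{j\neq i}\nu_j^\zeta$. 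Here $\nu_j(w_j,a_j)=d^{m,\pi_j}_{j,h}(w_j)\pi_{j,h}(a_j\mid w_j)$, and $\nu_j^\zeta$ is defined the same way with $\pi_j^\zeta$. By the product-measure inequality $\|\prod P_j-\prod Q_j\|_{TV}\le \sum_j\|P_j-Q_j\|_{TV}$ already used in the remark after Lemma~\ref{lem:beliefs-tv}, it then suffices to show $\|\nu_j-\nu_j^\zeta\|_{TV}\le 2H\zeta$ for each of the $N-1$ opponents.

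For a single opponent $j$, we split the difference as
\begin{align*}
\|\nu_j-\nu_j^\zeta\|_{TV} \le \|d^{m,\pi_j}_{j,h}-d^{m,\pi_j^\zeta}_{j,h}\|_{TV} + \max_{w_j}\|\pi_{j,h}(\cdot\mid w_j)-\pi^\zeta_{j,h}(\cdot\mid w_j)\|_{TV}.
\end{align*}
The second term is at most $\zeta$, since $\pi^\zeta-\pi = \zeta(\mathcal{U}-\pi)$. For the first term, we run a standard simulation-lemma induction over steps. Both marginals are generated by the same kernel $\mathbb{P}^m_{j}$ from the same initial distribution and differ only in the policy. Writing $d_{h+1}-d^\zeta_{h+1}$ as the kernel applied to $(d_h\otimes\pi_h - d^\zeta_h\otimes\pi^\zeta_h)$, and using that a Markov kernel is a TV contraction, we get $\|d_{h+1}-d^\zeta_{h+1}\|_{TV}\le \|d_h-d_h^\zeta\|_{TV}+\zeta$. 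Hence $\|d_h-d_h^\zeta\|_{TV}\le (h-1)\zeta\le H\zeta$. Combining the two terms gives $\|\nu_j-\nu_j^\zeta\|_{TV}\le (H+1)\zeta\le 2H\zeta$.

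Summing over the $N-1$ opponents and accounting for the factor $2$ from bounding a $[0,1]$-valued integrand difference by $2\|\cdot\|_{TV}$ (even $1\cdot\|\cdot\|_{TV}$ suffices with the $\tfrac12$ normalization) yields $|r^{m,\pi}_{i,h}-r^{m,\pi^\zeta}_{i,h}|\le 2(N-1)(H+1)\zeta\le 4H(N-1)\zeta$. The main point requiring care is the occupancy step. We must make sure the marginal $d^{m,\pi}_{j,h}$ truly evolves under player $j$'s own superstate kernel, independent of other players. This holds because $\mathbb{P}^m_h$ is a product over players, since the beliefs $b^m$ and the kernels factor. Also note that only the opponents' policies change in $r^{m,\pi}_{i,h}$, which is why the factor is $N-1$ rather than $N$.
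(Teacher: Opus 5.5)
Your proof is correct and follows essentially the same route as the paper's. Both arguments bound the reward gap by a total variation distance, use subadditivity of TV over the $N-1$ independent opponents, bound the per-step policy perturbation by $\zeta$, and control the occupancy drift $\|d^{m,\pi_j}_{j,h}-d^{m,\pi_j^\zeta}_{j,h}\|_{TV}$ by an induction over steps; you simply tensorize over opponents before splitting into policy and occupancy terms, whereas the paper splits first. Your explicit induction $\|d_h-d_h^\zeta\|_{TV}\le (h-1)\zeta$, obtained via TV-contraction of the factorized superstate kernel, fills in the step the paper only sketches, and it even gives the sharper bound $(N-1)H\zeta$.
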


\begin{proof}
Fix $w_i,a_i$. We write
\begin{align*}
&\big| r_{i,h}^{m,\pi}(w_i,a_i) - r_{i,h}^{m,\pi^{\zeta}}(w_i,a_i) \big| \\[6pt]
&= \Bigg| 
\mathbb{E}_{\substack{w_{-i} \sim d^{m,\pi}_{-i,h} \\ a_{-i} \sim \pi_{-i}(\cdot \mid w_{-i})}} \big[ r_{i,h}^m(w_i,w_{-i},a_i,a_{-i}) \big]
-
\mathbb{E}_{\substack{w_{-i} \sim d^{m,\pi^\zeta}_{-i,h} \\ a_{-i} \sim \pi^\zeta_{-i}(\cdot \mid w_{-i})}} \big[ r_{i,h}^m(w_i,w_{-i},a_i,a_{-i}) \big]
\Bigg| \\[6pt]
&\le 
\underbrace{
\Bigg|
\mathbb{E}_{\substack{w_{-i} \sim d^{m,\pi}_{-i,h} \\ a_{-i} \sim \pi_{-i}(\cdot \mid w_{-i})}} \big[ r_{i,h}^m(w_i,w_{-i},a_i,a_{-i}) \big]
-
\mathbb{E}_{\substack{w_{-i} \sim d^{m,\pi}_{-i,h} \\ a_{-i} \sim \pi^\zeta_{-i}(\cdot \mid w_{-i})}} \big[ r_{i,h}^m(w_i,w_{-i},a_i,a_{-i}) \big]
\Bigg|
}_{(a)} \\
&\quad +
\underbrace{
\Bigg|
\mathbb{E}_{\substack{w_{-i} \sim d^{m,\pi}_{-i,h} \\ a_{-i} \sim \pi^\zeta_{-i}(\cdot \mid w_{-i})}} \big[ r_{i,h}^m(w_i,w_{-i},a_i,a_{-i}) \big]
-
\mathbb{E}_{\substack{w_{-i} \sim d^{m,\pi^\zeta}_{-i,h} \\ a_{-i} \sim \pi^\zeta_{-i}(\cdot \mid w_{-i})}} \big[ r_{i,h}^m(w_i,w_{-i},a_i,a_{-i}) \big]
\Bigg|
}_{(b)}.
\end{align*}
For term $(a)$, since $r_{i,h}^m \in [0,1]$, for any fixed $w_{-i}$,
\begin{align*}
&\Big|
\mathbb{E}_{a_{-i} \sim \pi_{-i}(\cdot \mid w_{-i})} [ r_{i,h}^m(w_i,w_{-i},a_i,a_{-i}) ]
-
\mathbb{E}_{a_{-i} \sim \pi^\zeta_{-i}(\cdot \mid w_{-i})} [ r_{i,h}^m(w_i,w_{-i},a_i,a_{-i}) ]
\Big| \\[4pt]
&\qquad\le 2\| \pi_{-i}(\cdot \mid w_{-i}) - \pi^\zeta_{-i}(\cdot \mid w_{-i}) \|_{TV} \\[2pt]
&\qquad\le 2(N-1)\nw{\zeta},
\end{align*}
and hence $(a) \le 2 (N-1)\nw{\zeta}$.

For the second term, we have $(b) \le \| d^{m,\pi}_{-i,h} - d^{m,\pi^\zeta}_{-i,h} \|_{TV}$. By decoupledness of dynamics, $d^{m,\pi}_{-i,h}$ factorizes across players, and hence by induction on $h \in [H]$ similar to the proof of Lemma~\ref{lem:value-approx}, we obtain a bound of
\begin{align*}
(b) \le 2\sum_{j \in \mathcal{N} \setminus \{i\}} \| d^{m,\pi}_{j,h} - d^{m,\pi^\zeta}_{j,h} \|_{TV} \le 2H(N-1)\nw{\zeta}.
\end{align*}
Combining the bounds on $(a)$ and $(b)$ gives the claimed inequality.
\end{proof}

\begin{lemma}
\label{lem:r-error}
Suppose Assumption~\ref{ass:filter-stab} holds. Let $\pi \in \Pi^m$, and for any $i \in \mathcal{N}$, $h \in [H]$, $w_i \in \mathcal{H}_i^{\leq m}$, $\tau_i \in \nw{\mathcal{H}_i^{h}}$, and $a_i \in \mathcal{A}_i$, recall that marginal rewards are defined as
\begin{align*}
r_{i,h}^{m,\pi_{-i}}(w_i,a_i) &\coloneqq \mathbb{E}_{w_{-i} \sim d^{m,\pi}_{-i,h}, a_{-i} \sim \pi_{-i}(\cdot \mid w_{-i})} \left[ r_{i,h}^m(w_i,w_{-i},a_i,a_{-i}) \right],\\
r_{i,h}^{H,\pi_{-i}}(\tau_i,a_i) &\coloneqq \mathbb{E}_{\tau_{-i} \sim d^{H,\pi}_{-i,h}, a_{-i} \sim \pi_{-i}(\cdot \mid \tau_{-i})} \left[ r_{i,h}^H(\tau_i,\tau_{-i},a_i,a_{-i}) \right].
\end{align*}
Then, for any $w_i \in \mathcal{H}_i^{\leq m}$ and $\tau_i \in \mathcal{H}_i^H$ with $\tau_{i,h-m+1:h}=w_i$, it holds that
\begin{align*}
\left\vert r_{i,h}^{m,\pi_{-i}}(w_i,a_i)-r_{i,h}^{H,\pi_{-i}}(\tau_i,a_i) \right\vert
\lesssim N^2H(1-\rho)^{\nw{m-1}}.
\end{align*}
\end{lemma}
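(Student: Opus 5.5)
The plan is to split the difference into a \emph{belief-mismatch} part and a \emph{visitation-mismatch} part, and bound each with the finite-window results of Appendix~\ref{app:proof-mg-repr}. Write $r^{H,\pi_{-i}}_{i,h}(\tau_i,a_i)$ as an expectation over $\tau_{-i}\sim d^{H,\pi}_{-i,h}$ and $a_{-i}\sim\pi_{-i}(\cdot\mid \tau_{-i,h-m+1:h})$ of $r^H_{i,h}(\tau,a)=\sum_s r_{i,h}(s,a)\,b_h(s\mid\tau)$. Since $\pi\in\Pi^m$, the law of $a_{-i}$ depends only on the windows $w_{-i}=\tau_{-i,h-m+1:h}$. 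Push $d^{H,\pi}_{-i,h}$ forward to the window marginal $\tilde d_{-i,h}$; it is the law of $w_{-i}$ under the true POMG. We then insert the intermediate quantity
\begin{align*}
\widehat{r} \coloneqq \mathbb{E}_{w_{-i}\sim \tilde d_{-i,h},\,a_{-i}\sim\pi_{-i}(\cdot\mid w_{-i})}\big[r^m_{i,h}(w_i,w_{-i},a_i,a_{-i})\big].
\end{align*}
The triangle inequality then leaves two terms. The first is $|r^{H,\pi_{-i}}_{i,h}(\tau_i,a_i)-\widehat r|$. The second is $|\widehat r - r^{m,\pi_{-i}}_{i,h}(w_i,a_i)|$.

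For the first term, fix the full joint history $\tau=(\tau_i,\tau_{-i})$ with window $w$, and apply the first bound of Lemma~\ref{lem:aux-p-r-err}. This gives $|r^m_{i,h}(w,a)-r^H_{i,h}(\tau,a)|\le 2(1-\rho)^{m-1}$ pointwise. Averaging over $\tau_{-i}$ and $a_{-i}$ preserves the bound. If one prefers to pass through playerwise beliefs and the product-measure TV inequality from the remark after Lemma~\ref{lem:beliefs-tv}, this step costs a factor $N$. For $h<m$ the window is the full history and the term vanishes.

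For the second term, note that $r^m_{i,h}\in[0,1]$, so it is at most $2\|\tilde d_{-i,h}-d^{m,\pi}_{-i,h}\|_{TV}$. By decoupledness, both laws factor over $j\ne i$. The product-measure TV bound then reduces this to $\sum_{j\neq i}\|\tilde d_{j,h}-d^{m,\pi}_{j,h}\|_{TV}$, where $\tilde d_{j,h}$ is player $j$'s true window law and $d^{m,\pi}_{j,h}$ its law under the superstate kernel $\mathbb{P}^m$. Each summand is bounded by forward induction over $h'\le h$. Write both distributions at step $h'+1$ as the step-$h'$ distribution pushed through a kernel. In the POMG, the one-step window kernel is a mixture over full histories with a given window of $\mathbb{P}^H_{h'}(\cdot\mid\tau_j,a_j)$. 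By the second bound of Lemma~\ref{lem:aux-p-r-err} (playerwise, via decoupledness), each of these is within $2(1-\rho)^{m-1}$ in $\ell_1$ of $\mathbb{P}^m_{j,h'}(\cdot\mid w_j,a_j)$. The policy is the same in both processes. Hence $\|\tilde d_{j,h'+1}-d^{m,\pi}_{j,h'+1}\|_{TV}\le \|\tilde d_{j,h'}-d^{m,\pi}_{j,h'}\|_{TV}+(1-\rho)^{m-1}$, so the step-$h$ error is at most $H(1-\rho)^{m-1}$. Summing over $j$ gives $O(NH(1-\rho)^{m-1})$.

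Combining the two terms gives $O(NH(1-\rho)^{m-1})$, which is within the claimed $N^2H(1-\rho)^{m-1}$. The extra factor of $N$ absorbs the cost of translating the joint filter-stability assumption into playerwise belief bounds, as in the remark after Lemma~\ref{lem:beliefs-tv}. I expect the main obstacle to be the visitation-mismatch induction. The POMG window process is not Markov in $w_j$, so one must check carefully that the true window marginal evolves as the previous marginal composed with an averaged kernel. That kernel must be averaged against the conditional law of full histories given the window, and it must be shown to stay uniformly close to $\mathbb{P}^m_j$. A secondary care point is the transient regime $h<m$, where $\mathbb{P}^m$ and the window definitions coincide with the exact history process and the bounds hold trivially.
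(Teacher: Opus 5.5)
Your proposal is correct and follows essentially the same route as the paper's proof. Both arguments replace $r^H_{i,h}$ pointwise by the window-based $r^m_{i,h}$ via the first bound of Lemma~\ref{lem:aux-p-r-err}, reduce the remainder to an $\ell_1$ gap between the true window marginal and $d^{m,\pi}_{-i,h}$, factor it over $j\neq i$ by decoupledness, and bound each playerwise gap by forward induction using the transition bound of Lemma~\ref{lem:aux-p-r-err}. Your bookkeeping differs only in constants: you get the playerwise kernel bound by marginalizing the joint one, and you make explicit the averaged-kernel argument for the non-Markov window process, which the paper leaves implicit. This gives the sharper $O(NH(1-\rho)^{m-1})$, which implies the stated $N^2H(1-\rho)^{m-1}$.
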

\begin{proof}
Let $h \in [H]$. We decompose the difference as follows,
\begin{align*}
&\left\vert r_{i,h}^{m,\pi_{-i}}(w_i,a_i)-r_{i,h}^{H,\pi_{-i}}(\tau_i,a_i) \right\vert \\
&\quad= \Bigg\lvert \sum_{w_{-i} \in \mathcal{H}_{-i}^{\leq m},a_{-i} \in \mathcal{A}_{-i}} d^{m,\pi}_{-i,h}(w_{-i}) \cdot \pi_{-i}(a_{-i} \mid w_{-i}) \cdot r_{i,h}^m(w_i,w_{-i},a_i,a_{-i}) \\
&\qquad\qquad- \sum_{\tau_{-i} \in \mathcal{H}_{-i}^H,a_{-i} \in \mathcal{A}_{-i}} d^{H,\pi}_{-i,h}(\tau_{-i}) \cdot \pi_{-i}(a_{-i} \mid \tau_{-i}) \cdot r_{i,h}^H(\tau_i,\tau_{-i},a_i,a_{-i}) \Bigg\rvert \\
&\quad \overset{(a)}{\leq} \Bigg\lvert \sum_{w_{-i} \in \mathcal{H}_{-i}^{\leq m},a_{-i} \in \mathcal{A}_{-i}} d^{m,\pi}_{-i,h}(w_{-i}) \cdot \pi_{-i}(a_{-i} \mid w_{-i}) \cdot r_{i,h}^m(w_i,w_{-i},a_i,a_{-i}) \\
&\qquad\qquad- \sum_{\tau_{-i} \in \mathcal{H}_{-i}^H,a_{-i} \in \mathcal{A}_{-i}} d^{H,\pi}_{-i,h}(\tau_{-i}) \cdot \pi_{-i}(a_{-i} \mid \tau_{-i}) \cdot r_{i,h}^m(\tau_{i,h-m+1:h},\tau_{-i,h-m+1:h},a_i,a_{-i}) \Bigg\rvert \\
&\qquad\qquad+ 2N(1-\rho)^{\nw{m-1}} \\
&\quad \overset{(b)}{\leq} \sum_{w_{-i} \in \mathcal{H}_{-i}^{\leq m}} \left\vert d^{m,\pi}_{-i,h}(w_{-i}) - \sum_{\tau_{-i,1:h-m} \in \mathcal{H}_{-i}^H} d^{H,\pi}_{-i,h}(\tau_{-i,1:h-m} \circ w_{-i}) \right\vert +2N(1-\rho)^{\nw{m-1}} \\
&\quad \overset{(c)}{\leq} \sum_{j \in \mathcal{N} \setminus \{i\}} \sum_{w_j \in \mathcal{H}_j^{\leq m}} \left\vert d^{m,\pi}_{j,h}(w_j) - \sum_{\tau_{j,1:h-m} \in \mathcal{H}_j^H} d^{H,\pi}_{j,h}(\tau_{j,1:h-m} \circ w_j) \right\vert +2N(1-\rho)^{\nw{m-1}} \\
&\quad \overset{(d)}{\leq} N \cdot HN(1-\rho)^{\nw{m-1}} + 2N(1-\rho)^{\nw{m-1}},
\end{align*}
where
\begin{itemize}
\item (a) uses the reward approximation bound from Lemma~\ref{lem:aux-p-r-err},
\item (b) is by boundedness of rewards and rearraging the summations,
\item (c) holds since by decoupledness of state transitions, we can factor state visitation distributions over players, and
\item (d) follows from bounding the error in state visitation distributions induced by $\mathbb{P}^m_i$ vs.\ $\mathbb{P}^H_i$ inductively over the $H$ steps (similar to the proofs in Lemma~\ref{lem:value-approx}), and using the transition probability approximation bound from Lemma~\ref{lem:aux-p-r-err}.
\end{itemize}
\end{proof}

Next, we provide an approximation bound for the $Q$ function obtained from backward dynamic programming in terms of reward and transition probability estimation errors.
\begin{lemma}
\label{lem:q-approx}
Suppose the bounds on transition and reward estimates from Lemma~\ref{lem:est-prob} hold, namely, we have
\begin{align*}
\left\vert \mathbb{P}^m_{i,h}(w_i^{\prime} \mid w_i,a_i)-\hat{\mathbb{P}}_{i,h}^{m,\pi_i}(w_i^{\prime} \mid w_i,a_i) \right\vert &\leq \epsilon + (1-\rho)^{\nw{m-1}} \eqqcolon \epsilon_r,\\
\left\vert r_{i,h}^{m,\pi_{-i}}(w_i,a_i)-\hat{r}_{i,h}^{m,\pi_{-i}}(w_i,a_i) \right\vert &\leq \nw{\epsilon + 4NH\zeta} + N^2 H (1-\rho)^{\nw{m-1}} \eqqcolon \epsilon_\mathbb{P}.
\end{align*}
Then, for all $i \in \mathcal{N}$, $h \in [H]$, $k \in [K]$, $w_i \in \mathcal{H}_i^{\leq m}$, and $a_i \in \mathcal{A}_i$, it holds that
\begin{align*}
\left\vert \hat{Q}^{m,\pi}_{i,h}(w_i,a_i) - \overline{Q}_{i,h}^{m,\pi}(w_i,a_i) \right\vert \,\lesssim\,
H^3 N \cdot |\mathcal{A}_i| \cdot |\mathcal{O}_i| \cdot \left( \epsilon N + (1-\rho)^{\nw{m-1}} \right).
\end{align*}
\end{lemma}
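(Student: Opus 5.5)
The plan is a standard simulation-lemma argument by backward induction on $h$, comparing the recursion in Line~\ref{ln:Q-update} of Algorithm~\ref{alg:main} with the Bellman expectation equation for $\overline{Q}^{m,\pi}_{i,h}$ in the MDP $\mathcal{M}^{\pi_{-i}}$. Define the error $e_h \coloneqq \max_{w_i,a_i} \vert \hat{Q}^{m,\pi}_{i,h}(w_i,a_i) - \overline{Q}^{m,\pi}_{i,h}(w_i,a_i)\vert$, with $e_{H+1}=0$. Subtracting the two recursions and adding and subtracting $\sum_{w_i',a_i'} \hat{\mathbb{P}}^{m,\pi_i}_{i,h}(w_i'\mid w_i,a_i)\pi_{i,h+1}(a_i'\mid w_i')\overline{Q}^{m,\pi}_{i,h+1}(w_i',a_i')$ splits the difference into three terms:
\begin{align*}
\text{(i)}&\;\; \vert \hat{r}^{m,\pi_{-i}}_{i,h}-r^{m,\pi_{-i}}_{i,h}\vert, \\
\text{(ii)}&\;\; \sum_{w_i'} \vert \hat{\mathbb{P}}^{m,\pi_i}_{i,h}(w_i'\mid w_i,a_i)-\mathbb{P}^m_{i,h}(w_i'\mid w_i,a_i)\vert \cdot \max \vert \overline{Q}^{m,\pi}_{i,h+1}\vert, \\
\text{(iii)}&\;\; \sum_{w_i'} \hat{\mathbb{P}}^{m,\pi_i}_{i,h}(w_i'\mid w_i,a_i)\, e_{h+1}.
\end{align*}

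Term (i) is bounded directly by the assumed reward bound. For term (ii), note that for fixed $(w_i,a_i)$ the only successors $w_i'$ with nonzero mass in either kernel are $w_i'=(w_i\circ(a_i,o_i))$ truncated, with $o_i\in\mathcal{O}_i$. The sum therefore has at most $|\mathcal{O}_i|$ nonzero terms, and this is where the factor $|\mathcal{O}_i|$ enters. Since rewards lie in $[0,1]$, we have $\vert\overline{Q}^{m,\pi}_{i,h+1}\vert\le H$. Together this gives $(\text{ii})\le H|\mathcal{O}_i|$ times the transition error bound.

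Term (iii) is the delicate one, because $\hat{\mathbb{P}}$ need not sum to one. Its total mass is at most $1+|\mathcal{O}_i|$ times the transition error, which is at most $2$ once $\epsilon$ and $(1-\rho)^{m-1}$ are small. A naive recursion would then blow up like $2^H$. To avoid this, I would rewrite (iii) as
\[
e_{h+1} + \sum_{w_i'}\bigl\vert\hat{\mathbb{P}}^{m,\pi_i}_{i,h}-\mathbb{P}^m_{i,h}\bigr\vert\, e_{h+1},
\]
and use a crude a priori bound $e_{h+1}\lesssim H$. This bound holds because $\hat{Q}$ is bounded by $H$ times the sup of $\hat{\mathbb{P}}$-mass: empirical frequencies sum to at most one, and $\hat r\le 1$. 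The extra piece is then again of order $H|\mathcal{O}_i|$ times the transition error. If the action average over $\pi_{i,h+1}$ is bounded crudely by summing over $a_i'$ instead of averaging, a factor $|\mathcal{A}_i|$ appears, which matches the stated bound; I would keep it to stay consistent with the claim.

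The recursion then reads
\[
e_h\le e_{h+1} + \epsilon_{\mathrm{rew}} + c\,H|\mathcal{A}_i||\mathcal{O}_i|\,\epsilon_{\mathrm{trans}},
\]
where $\epsilon_{\mathrm{rew}}$ and $\epsilon_{\mathrm{trans}}$ are the assumed reward and transition error bounds. Unrolling over $H$ steps gives
\[
e_1\lesssim H\,\epsilon_{\mathrm{rew}} + H^2|\mathcal{A}_i||\mathcal{O}_i|\,\epsilon_{\mathrm{trans}}.
\]
Substituting $\epsilon_{\mathrm{rew}}=\epsilon+4NH\zeta+N^2H(1-\rho)^{m-1}$ and $\epsilon_{\mathrm{trans}}=\epsilon+(1-\rho)^{m-1}$, absorbing $\zeta$ into $\epsilon$ (as in Theorem~\ref{thm:main}, where $\zeta=\epsilon$), and using generous powers of $H$ and $N$ yields the claimed $H^3N|\mathcal{A}_i||\mathcal{O}_i|(\epsilon N+(1-\rho)^{m-1})$ up to constants. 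Note that the paper's statement of Lemma~\ref{lem:q-approx} swaps the labels $\epsilon_r$ and $\epsilon_{\mathbb{P}}$; only their values matter here.

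The main obstacle is the bookkeeping in term (iii). The non-normalized empirical kernel, which is set to $0$ on unseen pairs, must not be allowed to compound multiplicatively. I would handle this with the a priori bound on $\hat{Q}$ and by treating the mass excess as an additive error at each step.
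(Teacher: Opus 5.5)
Your argument is essentially the paper's: backward induction on $h$, a three-term split of $\hat{Q}^{m,\pi}_{i,h}-\overline{Q}^{m,\pi}_{i,h}$, the observation that for fixed $(w_i,a_i)$ only the $|\mathcal{O}_i|$ one-step extensions of $w_i$ carry mass in either kernel, a crude sum over $a_i'$ producing $|\mathcal{A}_i|$, and an additive recursion unrolled to $H\cdot(\text{reward error})+H^2|\mathcal{A}_i||\mathcal{O}_i|\cdot(\text{transition error})$. This is the paper's induction hypothesis~(\ref{eqn:stronger-ineq}) with the labels read as you read them.

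The only difference is the add-and-subtract. The paper keeps the true kernel $\mathbb{P}^m_{i,h}$ on the propagated error and puts the kernel difference on $\hat{Q}^{m,\pi}_{i,h+1}$; this tacitly uses $|\hat{Q}^{m,\pi}_{i,h+1}|\le H$, which you justify explicitly. You do the reverse. Your detour for term (iii) is unnecessary: by~(\ref{eq:sample-p}), each row $\hat{\mathbb{P}}^{m,\pi_i}_{i,h}(\cdot\mid w_i,a_i)$ is either a probability vector or identically zero, so (iii) is at most $e_{h+1}$ outright.

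The step that does not hold as written is the final absorption, which the paper also compresses into ``plugging in''. Your recursion yields
\[
e_1\lesssim H\epsilon+NH^2\zeta+N^2H^2(1-\rho)^{m-1}+H^2|\mathcal{A}_i||\mathcal{O}_i|\bigl(\epsilon+(1-\rho)^{m-1}\bigr).
\]
Two terms are not covered by the stated bound.

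\begin{itemize}
\item The term $N^2H^2(1-\rho)^{m-1}$ is not dominated by $H^3N|\mathcal{A}_i||\mathcal{O}_i|(\epsilon N+(1-\rho)^{m-1})$ unless $N\lesssim H|\mathcal{A}_i||\mathcal{O}_i|$ or $(1-\rho)^{m-1}\lesssim H|\mathcal{A}_i||\mathcal{O}_i|\,\epsilon$. Loosening powers of $H$ cannot help, because the target is only linear in $N$ on the $(1-\rho)^{m-1}$ term.
\item The term $NH^2\zeta$ needs $\zeta\lesssim HN|\mathcal{A}_i||\mathcal{O}_i|\,\epsilon$, which is not a hypothesis of the lemma. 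Note also that the $\zeta=\epsilon$ of Theorem~\ref{thm:main} is the target accuracy, not the estimation accuracy $\epsilon$ of Lemma~\ref{lem:est-prob} appearing here.
\end{itemize}

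What the argument actually proves is $H^3N|\mathcal{A}_i||\mathcal{O}_i|\bigl(\epsilon+\zeta+N(1-\rho)^{m-1}\bigr)$. The stated form follows only under the additional conditions above.
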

\begin{proof}
The proof proceeds by showing a stronger inequality by backward induction on $h$, namely that
\begin{align}
\label{eqn:stronger-ineq}
\left\vert \hat{Q}^{m,\pi}_{i,h}(w_i,a_i) - \overline{Q}_{i,h}^{m,\pi}(w_i,a_i) \right\vert
\leq \epsilon_r(H+1-h) + \epsilon_\mathbb{P} H(H+1-h) \cdot |\mathcal{A}_i| \cdot |\mathcal{O}_i|.
\end{align}
For $H+1$, (\ref{eqn:stronger-ineq}) trivially holds. Suppose (\ref{eqn:stronger-ineq}) holds for some $h \in [H]$. We can write the error as
\begin{align*}
\left\vert \left( \hat{Q}^{m,\pi}_{i,h} - \overline{Q}_{i,h}^{m,\pi} \right)(w_i,a_i) \right\vert
&\leq \left\vert \left( \hat{r}_{i,h}^{m,\pi_{-i}}-r_{i,h}^{m,\pi_{-i}} \right)(w_i,a_i) \right\vert \\
&\qquad + \sum_{w_i^{\prime},a_i^{\prime}} \mathbb{P}_{i,h}(w_i^{\prime} \mid w_i,a_i) \cdot \pi_{i,h+1}(a_i^{\prime} \mid w_i^{\prime}) \cdot \left( \hat{Q}^{m,\pi}_{i,h+1} - \overline{Q}_{i,h+1}^{m,\pi} \right)(w_i^{\prime},a_i^{\prime}) \\
&\qquad + \sum_{w_i^{\prime},a_i^{\prime}} \left( \hat{\mathbb{P}}_{i,h}-\mathbb{P}_{i,h} \right)(w_i^{\prime} \mid w_i,a_i) \cdot \pi_{i,h+1}(a_i^{\prime} \mid w_i^{\prime}) \cdot \hat{Q}^{m,\pi}_{i,h+1}(w_i^{\prime},a_i^{\prime}) \\
&\overset{(a)}{\leq} \epsilon_r + \left\lVert \hat{Q}^{m,\pi}_{i,h+1} - \overline{Q}_{i,h+1}^{m,\pi} \right\rVert_\infty + \epsilon_\mathbb{P} H \cdot |\mathcal{A}_i| \cdot |\mathcal{O}_i| \\
&\overset{(b)}{\leq} \epsilon_r + \left( \epsilon_r(H-h)+\epsilon_\mathbb{P}H(H-h) \cdot |\mathcal{A}_i| \cdot |\mathcal{O}_i| \right) + \epsilon_\mathbb{P} H  \cdot |\mathcal{A}_i| \cdot |\mathcal{O}_i| \\
&\leq \epsilon_r(H+1-h)+\epsilon_\mathbb{P}H(H+1-h) \cdot |\mathcal{A}_i| \cdot |\mathcal{O}_i|
\end{align*}
For (a) we use the fact that for most $m$-step windows, the transition probability error is $0$. Namely, for all $w_i^{\prime} \in \mathcal{H}_i^{\leq m}$ for which the first $m-1$ actions and observation do not coincide with the last $m-1$ actions and observations of $w_i$, we have $\hat{\mathbb{P}}_{i,h}(w_i^{\prime} \mid w_i,a_i) = \mathbb{P}_{i,h}(w_i^{\prime} \mid w_i,a_i) = 0$. This observation is crucial, as it saves us an factor exponential in $m$ one would obtain from summing over the entire space $\mathcal{H}_i^{\leq m}$. In (b) we apply the induction hypothesis. The final bound follows after plugging in~$\epsilon_r$ and~$\epsilon_\mathbb{P}$.
\end{proof}

Based on the marginalized Q-function, we define the advantage functions in $\mathcal{G}^m$ and in the Markov game underlying $\mathcal{P}$, respectively. For any $i \in \mathcal{N}$, $h \in [H]$, $\pi \in \Pi^m$, $w_i \in \mathcal{H}_i^{\leq m}$, $s_i \in \mathcal{S}_i$, and $a_i \in \mathcal{A}_i$, let
\begin{align*}
A^{m,\pi}_{i,h}(w_i,a_i) &\coloneqq \overline{Q}^{m,\pi}_{i,h}(w_i,a_i) - \sum_{a_i^{\prime} \in \mathcal{A}_i} \pi_{i,h}(a_i^{\prime} \mid w_i) \overline{Q}^{m,\pi}_{i,h}(w_i,a^{\prime}_i), \\
A_{i,h}^\pi(s_i,a_i) &\coloneqq \overline{Q}^\pi_{i,h}(s_i,a_i) - \sum_{a_i^{\prime} \in \mathcal{A}_i} \pi_{i,h}(a_i^{\prime} \mid s_i) \overline{Q}^\pi_{i,h}(s_i,a^{\prime}_i).
\end{align*}

Note for later use that the gain $\max_{a_i} A^{m,\pi}_{i,h}(w_i,a_i)$ is nonnegative for every $\pi \in \Pi^m$, $i \in \mathcal{N}$, $h \in [H]$, and $w_i \in \mathcal{H}_i^{\leq m}$, since $\sum_{a_i} \pi_{i,h}(a_i \mid w_i) A^{m,\pi}_{i,h}(w_i,a_i) = 0$ by definition.

The following lemma is the analogue of the gradient domination bound of Lemma~8 of~\cite{zhang2023markov} for the superstate game $\mathcal{G}^m$, transferred to the POMG $\mathcal{P}$ via the finite-window approximation of Lemma~\ref{lem:value-approx}. It is applied in the proof of Lemma~\ref{lem:soft-policy-iter} below (see step~(d)).
\begin{lemma}[Approximate gradient domination]
\label{lem:grad-dom}
Let Assumption~\ref{ass:filter-stab} hold, let $\pi \in \Pi^m$ and $i \in \mathcal{N}$, and let $\pi_i^{\star} \in \arg\max_{\pi_i^{\prime} \in \Pi^m_i} V_i(\pi_i^{\prime},\pi_{-i})$. Then
\begin{align}
\label{eqn:grad-dom}
\max_{\pi_i^{\prime} \in \Pi^m_i} V_i(\pi_i^{\prime},\pi_{-i}) - V_i(\pi)
\;\leq\; \sum_{h=1}^H \sum_{w_i \in \mathcal{H}_i^{\leq m}} d^{m,\pi_i^{\star}}_{i,h}(w_i)\, \max_{a_i} A^{m,\pi}_{i,h}(w_i,a_i) + 2\epsilon_\rho^m.
\end{align}
\end{lemma}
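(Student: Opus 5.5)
The plan is to transfer the statement to the superstate game $\mathcal{G}^m$, prove an exact gradient-domination inequality there via the performance difference lemma, and then move back to the POMG values $V_i$ at a cost of two applications of Lemma~\ref{lem:value-approx}.

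\textbf{Transfer to $\mathcal{G}^m$.} Apply the second bound of Lemma~\ref{lem:value-approx}, with $\pi' $ the history-dependent extension of $\pi$:
\begin{align*}
\max_{\pi_i^{\prime} \in \Pi^m_i} V_i(\pi_i^{\prime},\pi_{-i}) \le \max_{\pi_i^{\prime} \in \Pi^H_i} V_i(\pi_i^{\prime},\pi_{-i}) \le \max_{\hat{\pi}_i \in \Pi^m_i} V^m_i(\hat{\pi}_i,\pi_{-i}) + \epsilon^m_\rho .
\end{align*}
By the first bound of the same lemma, $-V_i(\pi) \le -V^m_i(\pi) + \epsilon^m_\rho$. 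It therefore suffices to show
\begin{align*}
\max_{\hat{\pi}_i} V^m_i(\hat{\pi}_i,\pi_{-i}) - V^m_i(\pi) \le \sum_{h} \sum_{w_i} d^{m,\pi_i^{\star}}_{i,h}(w_i)\max_{a_i} A^{m,\pi}_{i,h}(w_i,a_i).
\end{align*}

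\textbf{Gradient domination in $\mathcal{G}^m$.} Fix $\pi_{-i}$. Player $i$ then faces the single-agent MDP $\mathcal{M}^{\pi_{-i}}$ on local superstates $w_i$, with kernel $\mathbb{P}^m_i$ and reward $r^{m,\pi_{-i}}_{i,h}$. Because the dynamics are decoupled, $d^{m,\cdot}_{-i,h}$ does not depend on player $i$'s policy. Hence $V^m_i(\hat\pi_i,\pi_{-i})$ equals the value of $\hat\pi_i$ in $\mathcal{M}^{\pi_{-i}}$ for every $\hat\pi_i \in \Pi^m_i$. Here one should check that player $i$'s marginal dynamics in $\mathcal{G}^m$ do not depend on $w_{-i}$; this follows from the product form of $b^m$ and of $\mathbb{P}_h$ in (\ref{eqn:Pm-def}).

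The finite-horizon performance difference lemma in $\mathcal{M}^{\pi_{-i}}$ gives, for any $\hat\pi_i$,
\begin{align*}
V^m_i(\hat\pi_i,\pi_{-i}) - V^m_i(\pi) = \sum_h \sum_{w_i} d^{m,\hat\pi_i}_{i,h}(w_i) \sum_{a_i}\hat\pi_{i,h}(a_i\mid w_i) A^{m,\pi}_{i,h}(w_i,a_i).
\end{align*}
Each inner sum is at most $\max_{a_i} A^{m,\pi}_{i,h}(w_i,a_i)$. This yields the bound with the occupancy of the maximizer of $V^m_i$.

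\textbf{Main obstacle: which occupancy appears.} The statement weights by $d^{m,\pi_i^\star}$, where $\pi_i^\star$ maximizes the POMG value $V_i$, not $V^m_i$. To handle this, I would not take the $V^m$-maximizer. Instead, plug $\hat\pi_i = \pi_i^\star$ directly:
\begin{align*}
\max_{\pi'_i\in\Pi^m_i} V_i(\pi'_i,\pi_{-i}) = V_i(\pi_i^\star,\pi_{-i}) \le V^m_i(\pi_i^\star,\pi_{-i}) + \epsilon^m_\rho ,
\end{align*}
using the first bound of Lemma~\ref{lem:value-approx} applied to the joint policy $(\pi_i^\star,\pi_{-i})\in\Pi^m$. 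Combining this with $-V_i(\pi)\le -V^m_i(\pi)+\epsilon^m_\rho$ and the performance difference identity for $\hat\pi_i=\pi_i^\star$, each inner sum is bounded by the max-advantage. This gives (\ref{eqn:grad-dom}) with total slack $2\epsilon^m_\rho$, and the second bound of Lemma~\ref{lem:value-approx} is not needed at all.

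What remains is a careful check that the performance difference lemma holds in $\mathcal{M}^{\pi_{-i}}$ with advantages defined from $\overline{Q}^{m,\pi}_{i,h}$. This includes the boundary behaviour for $h<m$, where windows are shorter, which is handled by the state space $\mathcal{H}_i^{\le m}$.
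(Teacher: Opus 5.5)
Your final argument is correct and is essentially the paper's proof. You apply the performance difference lemma in $\mathcal{M}^{\pi_{-i}}$ (valid by decoupledness) at $\hat\pi_i=\pi_i^\star$, bound each inner sum by $\max_{a_i} A^{m,\pi}_{i,h}(w_i,a_i)$, and pay $2\epsilon^m_\rho$ from two applications of the first bound of Lemma~\ref{lem:value-approx}, to $(\pi_i^\star,\pi_{-i})$ and to $\pi$; you were also right to discard the initial detour through the best-response bound, since it would weight by the occupancy of the $V^m_i$-maximizer rather than $\pi_i^\star$.
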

\begin{proof}
We first establish the corresponding bound within the superstate game $\mathcal{G}^m$. Since state transitions are decoupled, neither the kernel $\mathbb{P}^m_i$ nor the marginalized reward $r^{m,\pi_{-i}}_{i,h}$ of the MDP $\mathcal{M}^{\pi_{-i}}$ depends on player $i$'s policy, so that $V^m_i(\pi_i^{\prime},\pi_{-i})$ and $V^m_i(\pi)$ are the values of $\pi_i^{\prime}$ and $\pi_i$ in the same MDP $\mathcal{M}^{\pi_{-i}}$, whose $Q$-function under $\pi$ is $\overline{Q}^{m,\pi}_{i,h}$. Applying the performance difference lemma in $\mathcal{M}^{\pi_{-i}}$ therefore gives, for any $\pi_i^{\prime} \in \Pi^m_i$,
\begin{align}
V^m_i(\pi_i^{\prime},\pi_{-i}) - V^m_i(\pi)
&= \sum_{h=1}^H \sum_{w_i \in \mathcal{H}_i^{\leq m}} d^{m,\pi_i^{\prime}}_{i,h}(w_i) \sum_{a_i \in \mathcal{A}_i} \pi^{\prime}_{i,h}(a_i \mid w_i)\, A^{m,\pi}_{i,h}(w_i,a_i) \nonumber \\
&\leq \sum_{h=1}^H \sum_{w_i \in \mathcal{H}_i^{\leq m}} d^{m,\pi_i^{\prime}}_{i,h}(w_i)\, \max_{a_i} A^{m,\pi}_{i,h}(w_i,a_i),
\label{eqn:grad-dom-superstate}
\end{align}
since $\pi^{\prime}_{i,h}(\cdot \mid w_i)$ is a probability distribution over $\mathcal{A}_i$, and where we used that $d^{m,(\pi_i^{\prime},\pi_{-i})}_{i,h} = d^{m,\pi_i^{\prime}}_{i,h}$, as $d^{m,\pi}_{i,h}$ depends on $\pi$ only through $\pi_i$ by decoupledness.

It remains to transfer this bound from $\mathcal{G}^m$ to the POMG $\mathcal{P}$. Both $(\pi_i^{\star},\pi_{-i})$ and $\pi$ lie in $\Pi^m$, so applying the first bound of Lemma~\ref{lem:value-approx} to each of them yields
\begin{align*}
V_i(\pi_i^{\star},\pi_{-i}) - V_i(\pi)
\;\leq\; V^m_i(\pi_i^{\star},\pi_{-i}) - V^m_i(\pi) + 2\epsilon_\rho^m.
\end{align*}
Bounding the right-hand side by~(\ref{eqn:grad-dom-superstate}) at $\pi_i^{\prime} = \pi_i^{\star}$, and using $V_i(\pi_i^{\star},\pi_{-i}) = \max_{\pi_i^{\prime} \in \Pi^m_i} V_i(\pi_i^{\prime},\pi_{-i})$ gives~(\ref{eqn:grad-dom}).
\end{proof}

\begin{lemma}
\label{lem:soft-policy-iter}
Suppose Assumptions~\ref{ass:potential},~\ref{ass:filter-stab}, and~\ref{ass:concentr} hold, and that for some $0 < \epsilon \leq 1$ and for all $i \in \mathcal{N}$, $h \in [H]$, and $k \in [K]$, our $Q$-function estimates satisfies
\begin{align*}
\left\vert \hat{Q}^{m,\pi^{(k)}}_{i,h}(w_i,a_i) - Q_{i,h}^{m,\pi^{(k)}}(w_i,a_i) \right\vert \,\lesssim\,
H^3 N \cdot |\mathcal{A}_i| \cdot |\mathcal{O}_i| \cdot \left( \epsilon N + (1-\rho)^{\nw{m-1}} \right) \coloneqq \epsilon_Q.
\end{align*}
Then, choosing stepsize $\eta^{(k)} = \frac{1}{\sqrt{4N^2H^3 k}}$, there exists $k \in [K]$ such that for all $i \in \mathcal{N}$,
\begin{align}
\label{eqn:soft-policy-iter}
\max_{\pi_i^{\prime} \in \Pi_i^m}V(\pi_i^{\prime},\pi_{-i}^{(k)}) - V(\pi^{(k)})
\lesssim \kappa \left( \frac{\sqrt{N^2 H^3} \left(HN + \log(K)\right) }{\sqrt{K} }+ NH\epsilon_Q + NH \epsilon_\rho^m \sqrt{K} \right).
\end{align}
\end{lemma}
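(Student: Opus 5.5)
We follow the potential-function ascent argument of \cite{zhang2023markov}, but run it on the superstate game $\mathcal{G}^m$ with the near-potential $\Psi$ of Proposition~\ref{prop:alpha-potential}, which we take to be $\Phi$. The plan is to lower bound the one-step increase $\Psi(\pi^{(k+1)})-\Psi(\pi^{(k)})$ by a positive multiple of $\eta^{(k)}$ times the sum of players' gains. We then telescope over $k$, using the boundedness $|\Psi|\lesssim H$, and convert the averaged gain into a Nash gap through Lemma~\ref{lem:grad-dom} and Assumption~\ref{ass:concentr}.

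\textbf{Step 1 (multi-player potential difference).} We write $\pi^{(k+1)}$ as obtained from $\pi^{(k)}$ by changing the players one at a time. This gives
\[
\Psi(\pi^{(k+1)})-\Psi(\pi^{(k)})=\sum_{i} \big[\Psi(\pi^{(k+1)}_{\le i},\pi^{(k)}_{>i})-\Psi(\pi^{(k+1)}_{<i},\pi^{(k)}_{\ge i})\big].
\]
Each bracket is a unilateral deviation. Proposition~\ref{prop:alpha-potential} lets us replace it by $V^m_i(\pi^{(k+1)}_{\le i},\pi^{(k)}_{>i})-V^m_i(\pi^{(k+1)}_{<i},\pi^{(k)}_{\ge i})$, at a cost of $2\epsilon_\rho^m$ per player and hence $2N\epsilon^m_\rho$ in total. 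The change in the others' policies, $\pi^{(k+1)}_{<i}$ versus $\pi^{(k)}_{<i}$, is of order $\eta^{(k)}$ in total variation. We use this to compare each term with $V^m_i(\pi_i^{(k+1)},\pi_{-i}^{(k)})-V^m_i(\pi^{(k)})$. By decoupledness, this cross term is a second-order quantity bounded by $\lesssim N^2H^3(\eta^{(k)})^2$, as in Lemma~5 of \cite{zhang2023markov}. The argument there carries over because the occupancies $d^{m,\pi}_{j,h}$ factor over players.

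\textbf{Step 2 (single-player improvement).} In the MDP $\mathcal{M}^{\pi^{(k)}_{-i}}$ we apply the performance difference lemma to $\pi_i^{(k+1)}=(1-\eta)\pi_i^{(k)}+\eta\,\mathrm{greedy}(\hat Q)$. This gives
\[
V^m_i(\pi_i^{(k+1)},\pi^{(k)}_{-i})-V^m_i(\pi^{(k)})\ge \eta^{(k)}\sum_h\sum_{w_i} d^{m,\pi_i^{(k+1)}}_{i,h}(w_i)\max_{a_i}A^{m,\pi^{(k)}}_{i,h}(w_i,a_i)-2\eta^{(k)}H\epsilon_Q,
\]
where the $\epsilon_Q$ term accounts for the argmax being taken over $\hat Q$ rather than $\overline Q$. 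By the definition of $\kappa^{(k)}$, the weighted gain dominates $(1/\kappa^{(k)})\sum_h\sum_{w_i}d^{m,\pi_i^{\star,k}}_{i,h}\max_{a_i}A^{m,\pi^{(k)}}_{i,h}$. Lemma~\ref{lem:grad-dom} then lower bounds this by $(1/\kappa^{(k)})\big(\mathrm{NE\text{-}gap}_i(\pi^{(k)})-2\epsilon^m_\rho\big)$.

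\textbf{Step 3 (telescoping).} We sum over $i$ and $k$ and divide by $\sum_k\eta^{(k)}$. With $\eta^{(k)}\propto 1/\sqrt{N^2H^3k}$ we have $\sum_k\eta^{(k)}\asymp\sqrt{K/(N^2H^3)}$ and $\sum_k(\eta^{(k)})^2N^2H^3\lesssim\log K$. The weighted average of $\eta^{(k)}\max_i\mathrm{gap}_i/\kappa^{(k)}$ is then at most
\[
\frac{\sqrt{N^2H^3}}{\sqrt K}(HN+\log K)+NH\epsilon_Q+N\epsilon^m_\rho\sqrt K\cdot(\text{const}).
\]
The last term arises because the per-iteration potential error $2N\epsilon^m_\rho$ is not multiplied by $\eta^{(k)}$, so its sum over $K$ iterations, divided by $\sum_k\eta^{(k)}$, grows like $\sqrt K$. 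Assumption~\ref{ass:concentr} is phrased exactly so that the weights $\eta^{(k)}/\kappa^{(k)}$ can be converted into a factor $\kappa$. Taking the best $k$ in the weighted average yields~(\ref{eqn:soft-policy-iter}).

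\textbf{Main obstacle.} We expect the hardest part to be the cross-term bound in Step 1, since simultaneous updates by all players interact. That bound must be redone in $\mathcal{G}^m$ and must keep the $N^2H^3$ scaling. It also must avoid extra $\epsilon_\rho^m$ terms beyond the per-player $2\epsilon^m_\rho$ coming from Proposition~\ref{prop:alpha-potential}.
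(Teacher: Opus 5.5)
Your proposal is correct and follows essentially the same route as the paper's proof. The shared steps are: the hybrid-policy decomposition of $\Psi(\pi^{(k+1)})-\Psi(\pi^{(k)})$ with a $2\epsilon_\rho^m$ near-potential cost per player; a second-order $N^2H^3(\eta^{(k)})^2$ cross term from the simultaneous updates; the performance-difference improvement weighted by $d^{m,\pi_i^{(k+1)}}_{i,h}$; the switch to $d^{m,\pi_i^{\star,k}}_{i,h}$ via $\kappa^{(k)}$; Lemma~\ref{lem:grad-dom}; and telescoping with the stepsize sums. There are three bookkeeping caveats:
\begin{itemize}
\item The range of $\Psi$ is $\lesssim NH$, not $H$, which is what the paper uses.
\item Step~3 needs Assumption~\ref{ass:concentr} in its intended form $\sum_k\eta^{(k)}/\kappa^{(k)}\ge\kappa^{-1}\sum_k\eta^{(k)}$. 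As literally written it is vacuous, since $\kappa^{(k)}\ge 1$; the paper has the same issue.
\item Dividing the accumulated $\mathcal{O}(NK\epsilon_\rho^m)$ error by $\sum_k\eta^{(k)}\asymp\sqrt{K/(N^2H^3)}$ gives $N^2H^{3/2}\epsilon_\rho^m\sqrt{K}$. So your ``(const)'' hides a factor $\sqrt{N^2H^3}$, as does the $NH\epsilon_\rho^m\sqrt{K}$ term in~(\ref{eqn:soft-policy-iter}) itself. This is harmless for Theorem~\ref{thm:main} because $m$ is chosen logarithmically.
\end{itemize}
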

\begin{proof}
Fix some $k \in [K]$ and define for each $i \in \mathcal{N}$ the joint policy
\begin{align*}
\widetilde\pi^i:= (\pi_1^{(k)},\dots,\pi_{i-1}^{(k)},\pi_i^{(k+1)},\dots,\pi_N^{(k+1)}) \in \Pi^m.
\end{align*}
Then, since by Proposition~\ref{prop:alpha-potential}, the superstate game $\mathcal{G}^m$ is a Markov $2\epsilon_\rho^m$-potential game, we can decompose
\begin{align}
\label{eqn:improvement}
&\Psi(\pi^{(k+1)}) - \Psi(\pi^{(k)}) \\
&\qquad= \sum_{i=1}^N \Psi(\widetilde \pi^i) - \Psi(\widetilde \pi^{i+1})\\
&\qquad\geq -2N\epsilon_\rho^m + \sum_{i=1}^N V^m_i(\widetilde \pi^i) - V^m_i(\widetilde \pi^{i+1})\\
&\qquad\overset{(a)}{=} - 2N\epsilon_\rho^m + \sum_{i=1}^N \sum_{h=1}^H \sum_{w_i,a_i} d^{m,\widetilde\pi^i}_{i,h}(w_i)\, \big(\pi^{(k+1)}_i(a_i \mid w_i) - \pi^{(k)}_i(a_i \mid w_i)\big) \,Q^{\widetilde\pi^{i+1}}_i(w_i,a_i) \\
&\qquad\overset{(b)}{\geq} \eta^{(k)} \sum_{i=1}^N \sum_{h=1}^H \sum_{w_i} d^{m,\widetilde\pi^i}_{i,h}(w_i)\, \max_{a_i} A^{m,\pi^{(k)}}_{i,h}(w_i,a_i) - 2\eta^{(k)}NH \epsilon_Q - 4N^2 H^3 (\eta^{(k)})^2 - 2N \epsilon_\rho^m \\
&\qquad\overset{(c)}{\geq} \frac{\eta^{(k)}}{\kappa^{(k)}} \sum_{i=1}^N \sum_{h=1}^H \sum_{w_i} d^{m,\pi_i^{\star,k}}_{i,h}(w_i) \max_{a_i} A^{m,\pi^{(k)}}_{i,h}(w_i,a_i) - 2\eta^{(k)}NH \epsilon_Q - 4N^2 H^3 (\eta^{(k)})^2 - 2N \epsilon_\rho^m \\
&\qquad\overset{(d)}{\geq} \frac{\eta^{(k)}}{\kappa^{(k)}} \sum_{i=1}^N \left( \max_{\pi_i^{\prime} \in \Pi_i^m}V(\pi_i^{\prime},\pi_{-i}^{(k)}) - V(\pi^{(k)}) \right) - 2\eta^{(k)}NH \epsilon_Q - 4N^2 H^3 (\eta^{(k)})^2 - 4NH \epsilon_\rho^m
\end{align}
where (a) is by the performance difference lemma (see Lemma~6 of~\cite{zhang2023markov}), and (b) follows similar as in \cite{zhang2023markov}.
Step (c) compares the superstate visitation of the current iterate with that of the best response. This is where the coefficient $\kappa^{(k)}$ of Assumption~\ref{ass:concentr} enters: Since the $i$-th component of $\widetilde\pi^i$ is $\pi^{(k+1)}_i$ and $d^{m,\pi}_{i,h}$ depends on $\pi$ only through $\pi_i$, we have $d^{m,\widetilde\pi^i}_{i,h} = d^{m,\pi^{(k+1)}_i}_{i,h}$, and the definition of $\kappa^{(k)}$ gives $d^{m,\pi^{\star,k}_i}_{i,h}(w_i) \leq \kappa^{(k)} d^{m,\widetilde\pi^i}_{i,h}(w_i)$ for every $w_i$. As the gains $\max_{a_i} A^{m,\pi^{(k)}}_{i,h}(w_i,a_i)$ are nonnegative, this yields, for every $i \in \mathcal{N}$ and $h \in [H]$,
\begin{align*}
\sum_{w_i \in \mathcal{H}_i^{\leq m}} d^{m,\widetilde\pi^i}_{i,h}(w_i)\, \max_{a_i} A^{m,\pi^{(k)}}_{i,h}(w_i,a_i)
\;\geq\; \frac{1}{\kappa^{(k)}} \sum_{w_i \in \mathcal{H}_i^{\leq m}} d^{m,\pi^{\star,k}_i}_{i,h}(w_i)\, \max_{a_i} A^{m,\pi^{(k)}}_{i,h}(w_i,a_i),
\end{align*}
and summing over $h \in [H]$ gives step (c).

Step (d) follows from Lemma~\ref{lem:grad-dom}, applied at $\pi = \pi^{(k)}$ with $\pi_i^{\star} = \pi_i^{\star,k}$ the best response of Assumption~\ref{ass:concentr}. Multiplying~(\ref{eqn:grad-dom}) by $\eta^{(k)} / \kappa^{(k)} \leq 1$, summing over $i \in \mathcal{N}$, and using $H \geq 1$ shows that the additional error is absorbed by replacing $-2N\epsilon_\rho^m$ with $-4NH\epsilon_\rho^m$ in the last line of~(\ref{eqn:improvement}).

The final bound~(\ref{eqn:soft-policy-iter}) follows from a telescoping argument applied to~(\ref{eqn:improvement}), the fact that $\Psi(\pi) \leq HN$ for all $\pi \in \Pi^m$, and our choice of stepsize. Concretely, writing $G^{(k)} \coloneqq \sum_{i \in \mathcal{N}} \big( \max_{\pi_i^{\prime} \in \Pi^m_i} V(\pi_i^{\prime},\pi^{(k)}_{-i}) - V(\pi^{(k)}) \big) \geq 0$ and summing~(\ref{eqn:improvement}) over $k \in [K]$ gives
\begin{align*}
\Big( \min_{k \in [K]} G^{(k)} \Big) \sum_{k=1}^K \frac{\eta^{(k)}}{\kappa^{(k)}}
\;\leq\; HN + 2NH\epsilon_Q \sum_{k=1}^K \eta^{(k)} + 4N^2H^3 \sum_{k=1}^K \big(\eta^{(k)}\big)^2 + 4NH\epsilon_\rho^m K.
\end{align*}
Dividing by $\sum_{k=1}^K \eta^{(k)} \gtrsim \sqrt{K/(N^2H^3)}$ and using $\sum_{k=1}^K (\eta^{(k)})^2 \lesssim \log(K)/(N^2H^3)$ then yields~(\ref{eqn:soft-policy-iter}).
\end{proof}

We are now ready to prove the main theorem.

\begin{proof}[Proof of Theorem~\ref{thm:main}]
\nw{If the minimum in the requirement on $m$ is attained by $H$, then $m = H$ and, as noted in Section~\ref{sec:mg-repr}, $\mathcal{G}^H$ is equivalent to $\mathcal{P}$, so that the finite-window approximation is exact and every term involving $\epsilon^m_\rho$ below vanishes. We may therefore assume throughout that $m \geq c_1 \cdot \rho^{-1} \log \big( \frac{\kappa NHAO}{\beta \epsilon} \big)$.}
By Lemma~\ref{lem:soft-policy-iter}, there exists $k \in [K]$ such that for all $i \in \mathcal{N}$,
\begin{align*}
\max_{\pi_i^{\prime} \in \Pi_i^m}V(\pi_i^{\prime},\pi_{-i}^{(k)}) - V(\pi^{(k)})
\lesssim \underbrace{\frac{\kappa\sqrt{N^2 H^3} \left(HN + \log(K)\right)}{\sqrt{K}}}_{(a)} + \underbrace{\kappa NH\epsilon_Q}_{(b)} + \underbrace{\kappa NH \epsilon_\rho^m \sqrt{K}}_{(c)}.
\end{align*}
We next show that for appropriately chosen iteration number $K$, trajectory length $T$, and window size $m$, the terms (a), (b), and (c) are all upper bounded by $\epsilon / 3$.

With our choices of \nw{$\zeta = \epsilon$,} $T \gtrsim \frac{\kappa^2 A^{2m+2}O^2 N^2 H^6}{\beta^{2m} \epsilon^{2m+2}} \log(1 / \delta)$ and $m \gtrsim \rho^{-1} \log \big( \frac{\kappa HAON}{\beta \epsilon} \big)$, applying Lemma~\ref{lem:est-prob} and Lemma~\ref{lem:q-approx}, yields that with probability at least $1-\delta$, we have $\epsilon_Q \lesssim \frac{\epsilon}{\kappa NH}$ and hence $(b) \lesssim \epsilon / 3$.

Moreover, by choosing $K \gtrsim \frac{\kappa^2 N^4 H^5}{\epsilon^2}$, we ensure that both $(a) \lesssim \epsilon / 3$, as well as $(c) \lesssim \epsilon / 3$. Note that, in contrast to a uniform lower bound on the superstate visitation, $\kappa$ does not depend on $m$, so that the requirement on $m$ imposed by $(c)$, namely $\epsilon_\rho^m \lesssim \epsilon^2/(\kappa^2 N^3H^{7/2}A)$, is met by the logarithmic choice of $m$ above for any $\rho > 0$.

The overall sample complexity of Algorithm~\ref{alg:main} is given by
\begin{align*}
TK = \mathcal{O} \left( \frac{\kappa^4 A^2 O^2 N^6 H^{11}}{\epsilon^2} \left( \frac{A}{\beta \epsilon} \right)^{2m} \log(1/\delta) \right).
\end{align*}
Moreover, we have
\begin{align*}
\left( \frac{A}{\beta \epsilon} \right)^{2m} = \exp\!\left( \frac{2}{\rho} \log\frac{A}{\beta \epsilon} \cdot \log\!\left(\frac{\kappa H A O N}{\beta\epsilon}\right) \right) = \left(\frac{\kappa H A O N}{\beta \epsilon}\right)^{\frac{2}{\rho}\log(A/(\beta \epsilon))},
\end{align*}
from which we can conclude the bound
\begin{align*}
TK = \mathcal{O} \left( \frac{\kappa^4 A^2 O^2 N^6 H^{11}}{\epsilon^2}
\left(\frac{\kappa H A O N}{\beta\epsilon}\right)^{\frac{2}{\rho}\log(A/(\beta \epsilon))} \log(1/\delta) \right) .
\end{align*}
\end{proof}

%%%%%%%%%%%%%%%%%%%%%%%%%%%%%%%%%%%%%%%%%%%%%%%%%%%%%%%%%%%%

\newpage

\section{Numerical experiments}
\label{app:experiments}

In this section we report simulation results for Algorithm~\ref{alg:main} in two decoupled POMG environments related to the applications discussed in Appendix~\ref{app:decoupled}. The first is a dynamic transmission game over a shared channel, in which the action and observation spaces are deliberately small so that exact Nash gaps against full-history deviations can be computed. The second is a cooperative gridworld navigation task with a congestion penalty, in which exact Nash gaps are out of reach and we instead report the exact expected episode return. In both environments we run Algorithm~\ref{alg:main} for different window lengths $m$.

\subsection{Dynamic transmission game}
\label{app:exp-transmission}

\paragraph{Setting.} We consider a minimal instance of the wireless network example of Appendix~\ref{app:decoupled}: two users share a communication channel and each maintains a local packet queue, with rewards coupled through the congestion of the shared channel. For $N=2$ players over horizon $H=6$, the POMG is as follows.
\begin{itemize}[topsep=2pt, itemsep=1pt]
\item \emph{States.} $s_i \in \mathcal{S}_i = \{0,1\}$, indicating whether player $i$'s queue is empty or holds a packet. The initial distribution $\mu_i$ is uniform on $\mathcal{S}_i$.
\item \emph{Actions.} $a_i \in \mathcal{A}_i = \{ \mathsf{wait}, \mathsf{transmit} \}$.
\item \emph{Rewards.} Writing $\mathsf{T}$ for $\mathsf{transmit}$, we set
\begin{align}
\label{eqn:exp-reward}
r_{i,h}(s,a) \coloneqq c_0 + c_1 \mathbf{1}\{ s_i = 1,\, a_i = \mathsf{T} \} - c_2 \mathbf{1}\{ a_i = \mathsf{T} \} - c_3 \mathbf{1}\{ a_1 = a_2 = \mathsf{T} \}.
\end{align}
This models a benefit $c_1$ for transmitting a queued packet, a fixed transmission cost $c_2$, and an interference penalty $c_3$ incurred when both users transmit at once. We use $c_0=0.5$, $c_1=0.3$, $c_2=0.15$, and $c_3=0.35$, so that $r_{i,h} \in [0,1]$.
\item \emph{Transitions.} Player $i$'s queue evolves independently of the other players: an empty queue receives a packet with probability $\lambda$; a held packet is delivered successfully with probability $q$ if the player transmits, and otherwise expires with probability $p_{\mathrm{e}}$ while the player waits. We use $\lambda=0.125$, $q=0.8$, and $p_{\mathrm{e}}=0.06$, so that packets are long-lived and the local state stays correlated over several steps.
\item \emph{Observations.} Player $i$ observes its true queue state, $o_i = s_i$, with probability $p_{\mathrm{o}} = 0.1$, and an observation drawn uniformly from $\mathcal{O}_i = \{0,1\}$ otherwise.
\end{itemize}

\paragraph{Potential structure.} The stage game at every joint state $s \in \mathcal{S}$ is a congestion game with player-specific dummy terms, and hence admits the Rosenthal potential
\begin{align*}
\phi_h(s,a) = c_1 \big( \mathbf{1}\{ s_1 = 1,\, a_1 = \mathsf{T} \} + \mathbf{1}\{ s_2 = 1,\, a_2 = \mathsf{T} \} \big) &- c_2 \big( \mathbf{1}\{ a_1 = \mathsf{T} \} + \mathbf{1}\{ a_2 = \mathsf{T} \} \big)\\
&- c_3 \mathbf{1}\{ a_1 = a_2 = \mathsf{T} \}.
\end{align*}
Indeed, writing $j$ for the player other than $i$, a direct computation from~(\ref{eqn:exp-reward}) gives
\begin{align*}
r_{i,h}(s,a) - \phi_h(s,a) = c_0 - c_1 \mathbf{1}\{ s_j = 1,\, a_j = \mathsf{T} \} + c_2 \mathbf{1}\{ a_j = \mathsf{T} \},
\end{align*}
which depends neither on $s_i$ nor on $a_i$. \nw{This is exactly the stage potential condition of Assumption~\ref{ass:potential}.} Note that the game is not of identical interest, as players disagree on who should be the one to transmit.

\paragraph{Evaluation.} We report the exact Nash gap of the iterates $\pi^{(k)}$ against \emph{full-history} deviations,
\begin{align*}
\max_{i \in \mathcal{N}} \Big[ \max_{\pi_i^{\prime} \in \Pi_i^H} V_i(\pi_i^{\prime}, \pi_{-i}^{(k)}) - V_i(\pi^{(k)}) \Big],
\end{align*}
which is the quantity our learning objective in Section~\ref{sec:algorithm} controls. This is feasible here because, by decoupledness, fixing $\pi_{-i}$ leaves player $i$ facing a POMDP over its own two-state chain in which the other player enters only through the mean rewards $r_{i,h}^{H,\pi_{-i}}$, and the resulting best-response value can be computed by backward induction with exact belief propagation over player $i$'s history tree, which has $(|\mathcal{A}_i| \cdot |\mathcal{O}_i|)^H = 4^6 = 4096$ leaves.

\paragraph{Experiment.} We run Algorithm~\ref{alg:main} for window lengths $m \in \{1,\dots,6\}$ with $K=500$ iterations, $T=10^5$ episodes per iteration, exploration \nw{$\zeta = 0.02$}, and stepsizes $\eta^{(k)} = \min\{1, \tfrac{1}{2}(k+1)^{-1/2}\}$, over $8$ independent runs.

\paragraph{Results.} Figure~\ref{fig:nashgap} shows the Nash gap over the first $100$ iterations, by which point all curves have flattened; the gaps at $k=500$ agree with those at $k=100$ to within the plotted band. For every window length, the gap drops by one to two orders of magnitude within roughly $20$ iterations and then settles at an $m$-dependent floor which decreases monotonically in $m$. This is the behavior predicted by Proposition~\ref{prop:approx-ne} and Theorem~\ref{thm:main}: larger windows shrink the finite-window approximation error and therefore the accuracy at which independent learning saturates. The curves for $m=5$ and $m=6$ coincide, since with $H=6$ a window of five action-observation pairs already covers the entire history available to a player, so that no further increase of $m$ can help.

\begin{figure}[H]
\centering
\includegraphics[width=0.5\textwidth]{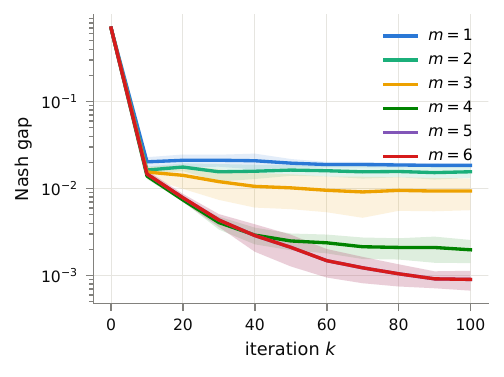}
\caption{Transmission game: exact Nash gap against full-history deviations over the iterations of Algorithm~\ref{alg:main}, for window lengths $m \in \{1,\dots,6\}$ ($N=2$, $H=6$, $T=10^5$). Curves show the mean over $8$ independent runs and shaded areas one standard deviation. The $m=5$ curve is hidden underneath the $m=6$ curve.}
\label{fig:nashgap}
\end{figure}

\subsection{Cooperative gridworld}
\label{app:exp-gridworld}

\paragraph{Setting.} Two agents, starting in cells \circled{$1$} and \circled{$2$}, navigate a $3 \times 3$ grid with two walls (shaded) towards a common goal cell \circled{$G$}, while congestion is penalized. This instantiates the multi-robot example of Appendix~\ref{app:decoupled}: transitions are decoupled, so agents may occupy the same cell, but sharing a cell is penalized. Since the only route between the two halves of the grid is the narrow passage in the center cell, the agents have to coordinate their crossings.

\begin{center}
\begin{tikzpicture}[scale=0.7, every node/.style={font=\small}]
  \foreach \r in {0,1,2}{
    \foreach \c in {0,1,2}{
      \draw (\c,2-\r) rectangle ++(1,1);
    }
  }
  \fill[black!18] (1,2) rectangle ++(1,1);
  \fill[black!18] (1,0) rectangle ++(1,1);
  \draw (1,2) rectangle ++(1,1);
  \draw (1,0) rectangle ++(1,1);
  \node[shape=circle, draw, inner sep=1.2pt] at (0.5,2.5) {$1$};
  \node[shape=circle, draw, inner sep=1.2pt] at (2.5,2.5) {$G$};
  \node[shape=circle, draw, inner sep=1.2pt] at (0.5,0.5) {$2$};
\end{tikzpicture}
\end{center}

Over horizon $H=10$, the POMG is as follows.
\begin{itemize}[topsep=2pt, itemsep=1pt]
\item \emph{States.} $s_i \in \mathcal{S}_i$ is player $i$'s cell, one of the $7$ non-wall cells. Player $1$ starts in the top-left cell and player $2$ in the bottom-left cell, so $\mu_i$ is a point mass. The goal cell $G$ is absorbing.
\item \emph{Actions.} $\mathcal{A}_i = \{ \mathsf{up}, \mathsf{down}, \mathsf{left}, \mathsf{right}, \mathsf{stay} \}$. A move into a wall or off the grid leaves the agent in place.
\item \emph{Rewards.} A player receives $+10$ at the step it first reaches $G$ and $0$ at every step thereafter. Before reaching $G$, it receives $-5$ if it shares its cell with the other player and $-1$ otherwise. For readability we state these rewards unnormalized; rescaling them into $[0,1]$ by a common affine map, as assumed in Section~\ref{sec:problem}, leaves the potential structure and the set of (approximate) Nash equilibria unchanged up to the corresponding rescaling of $\epsilon$.
\item \emph{Transitions.} Each player moves according to its own action, but with probability $\mathsf{slip} = 0.1$ the executed action is replaced by one drawn uniformly from $\mathcal{A}_i$.
\item \emph{Observations.} Player $i$ observes its true cell with probability $p_{\mathrm{o}} = 0.5$, and a uniformly random non-wall cell otherwise, so that $|\mathcal{O}_i| = 7$.
\end{itemize}

\paragraph{Potential structure.} The players are coupled only through the co-location penalty, which is an anonymous per-cell congestion cost, i.e., a sum of symmetric pairwise penalties $-4 \cdot \mathbf{1}\{s_j = s_l \neq G\}$ over pairs $j < l$ of players. The remaining reward terms, the goal bonus and the step cost, depend on a player's own cell and arrival step alone. Consequently the stage game at each joint position is a congestion game with player-specific dummy terms and thus admits a potential $\phi_h$, so that, as in Appendix~\ref{app:exp-transmission}, Assumption~\ref{ass:potential} holds. The game is again not of identical interest: at the central passage each player prefers the other to yield.

\paragraph{Evaluation.} Here, computing the exact Nash gap is out of reach, since determining a full-history best response would require a dynamic program over $(|\mathcal{A}_i| \cdot |\mathcal{O}_i|)^H = 35^{10} \approx 2.8 \cdot 10^{15}$ histories. We therefore report the expected episode return summed over both players. Since transitions are decoupled, the joint distribution over (state, window) pairs factorizes across players, and this return can be evaluated exactly from the two players' marginals, without sampling.

\paragraph{Experiment.} We run Algorithm~\ref{alg:main} for window lengths $m \in \{1,2,3\}$ with $K=100$ iterations, $T=10^6$ episodes per iteration, exploration \nw{$\zeta = 0.1$}, and stepsizes $\eta^{(k)} = \min\{1, \tfrac{1}{2}(k+1)^{-1/2}\}$, over $8$ independent runs. The number of superstates $|\mathcal{H}_i^{\leq m}| = \sum_{l=0}^m 35^l$ grows from $36$ at $m=1$ to $44\,136$ at $m=3$, which is why $m=3$ requires the comparatively large episode budget in order for the superstate model to be adequately covered. All runs complete within about an hour on a consumer-grade CPU when the independent runs are executed in parallel.

\paragraph{Results.} Figure~\ref{fig:gridworld} shows the expected episode return over the iterations of Algorithm~\ref{alg:main}. In the left panel, the shortest window $m=1$ is insufficient for learning policies that reach the goal: the return improves over the initial uniform policy but stays deeply negative, around $-19$. For $m=2$ and $m=3$, the return instead climbs to a positive value of about $+5$, i.e., the players learn to reach the goal while avoiding congestion in the narrow central passage. The right panel zooms in on these two window lengths and shows that the longer window $m=3$ converges to a higher return, about $5.32$, than $m=2$, about $5.15$, and does so with a visibly smaller spread across runs.

\begin{figure}[H]
\centering
\begin{subfigure}{0.48\textwidth}
  \centering
  \includegraphics[width=\textwidth]{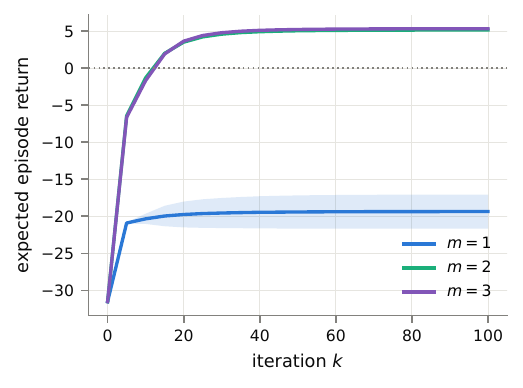}
  \caption{All window lengths $m \in \{1,2,3\}$.}
  \label{fig:gridworld-all}
\end{subfigure}
\hfill
\begin{subfigure}{0.48\textwidth}
  \centering
  \includegraphics[width=\textwidth]{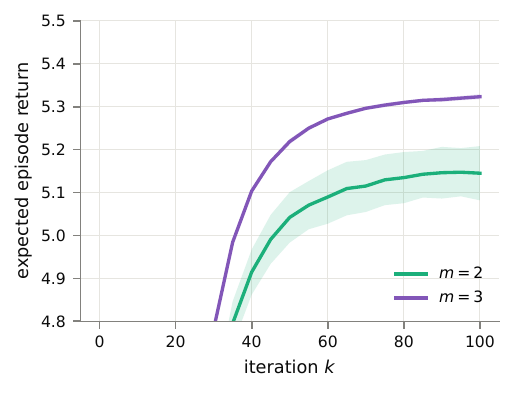}
  \caption{Zoom on $m \in \{2,3\}$.}
  \label{fig:gridworld-zoom}
\end{subfigure}
\caption{Gridworld: exact expected episode return, summed over both players, over the iterations of Algorithm~\ref{alg:main} ($N=2$, $H=10$, $T=10^6$). Curves show the mean over $8$ independent runs and shaded areas one standard deviation. Left: $m=1$ fails to reach the goal, while $m \geq 2$ does. Right: among the successful window lengths, the longer history $m=3$ attains the higher return.}
\label{fig:gridworld}
\end{figure}

\clearpage

%%%%%%%%%%%%%%%%%%%%%%%%%%%%%%%%%%%%%%%%%%%%%%%%%%%%%%%%%%%%

% \newpage
% \input{checklist.tex}

\end{document}